\documentclass{article}

\usepackage{hyperref}       
\usepackage{url}            
\usepackage{booktabs}       
\usepackage{amsfonts}       
\usepackage{nicefrac}       
\usepackage{microtype}      
\usepackage[table,dvipsnames]{xcolor}
\usepackage{mathtools, amsmath, amssymb, amsthm}
\usepackage{amsmath}
\usepackage{makecell}
\usepackage{graphicx} 
\usepackage{array}    
\usepackage{caption}  
\usepackage{wrapfig}
\usepackage{sidecap}
\usepackage{booktabs} 
\usepackage{multirow}
\usepackage{balance}

\usepackage[accepted]{icml2026}

\usepackage{amsmath}
\usepackage{amssymb}
\usepackage{mathtools}
\usepackage{amsthm}

\usepackage{amsmath, amssymb}
\usepackage{bm}

\usepackage[capitalize,noabbrev]{cleveref}
\usepackage{pifont}
\newcommand{\cmark}{\ding{51}}
\newcommand{\xmark}{\ding{55}}

\theoremstyle{plain}
\newtheorem{theorem}{Theorem}[section]

\newtheorem{lemma}[theorem]{Lemma}
\newtheorem{corollary}[theorem]{Corollary}
\theoremstyle{definition}
\newtheorem{definition}[theorem]{Definition}
\newtheorem{assumption}[theorem]{Assumption}
\theoremstyle{remark}
\newtheorem{remark}[theorem]{Remark}
\definecolor{darkblue}{RGB}{0, 51, 153}   
\definecolor{darkgreen}{RGB}{0, 102, 51}  
\definecolor{darkred}{RGB}{153, 0, 0}   
\usepackage[textsize=tiny]{todonotes}

\newcommand{\ens}[1]{\bm{\mathcal{#1}}}

\newcommand{\enshat}[1]{\hat{\bm{\mathcal{#1}}}}
\renewcommand{\algorithmicdo}{}

\icmltitlerunning{Inference-time optimization for experiment-grounded protein ensemble generation}

\begin{document}

\twocolumn[
  \icmltitle{Inference-time optimization for experiment-grounded protein ensemble generation}



  \icmlsetsymbol{equal}{*}
  \icmlsetsymbol{intern}{†}

  \begin{icmlauthorlist}
    \icmlauthor{Advaith Maddipatla}{ista}
    \icmlauthor{Anar Rzayev}{ista}
    \icmlauthor{Marco Pegoraro}{ista}\\
    \icmlauthor{Martin Pacesa}{zurich}
    \icmlauthor{Paul Schanda}{ista}
    \icmlauthor{Ailie Marx}{telhai,migal}
    \icmlauthor{Sanketh Vedula}{ista,princeton,broad,intern,equal}
    \icmlauthor{Alex M. Bronstein}{ista,technion,equal}
  \end{icmlauthorlist}

  \icmlaffiliation{ista}{Institute of Science and Technology, Austria.}
  \icmlaffiliation{princeton}{Princeton University, NJ, USA.}
  \icmlaffiliation{broad}{Broad Institute of MIT and Harvard, MA, USA.}
  \icmlaffiliation{technion}{Technion -- Israel Institute of Technology, Haifa, Israel}
  \icmlaffiliation{telhai}{Tel Hai -- University of Kiryat Shmona in the Galilee, Israel.}
  \icmlaffiliation{migal}{Migal -- Galilee Research Institute, Israel.}
  \icmlaffiliation{zurich}{University of Zurich, Switzerland.}
  \icmlcorrespondingauthor{Sanketh Vedula}{svedula@princeton.edu}
  \icmlcorrespondingauthor{Alex M. Bronstein}{Alexander.Bronstein@ist.ac.at}
  \icmlcorrespondingauthor{Advaith Maddipatla}{sai.maddipatla@ist.ac.at}

  \icmlkeywords{Machine Learning, ICML}

  \vskip 0.3in
]



\printAffiliationsAndNotice{\icmlEqualContribution\ \icmlInternContribution}

\begin{abstract}
Protein function relies on dynamic conformational ensembles, yet current generative models like AlphaFold3 \cite{abramson2024accurate} often fail to produce ensembles that match experimental data. Recent experiment-guided generators attempt to address this by steering the reverse diffusion process. However, these methods are limited by fixed sampling horizons and sensitivity to initialization, often yielding thermodynamically implausible results. We introduce a general inference-time optimization framework to solve these challenges. First, we optimize over latent representations to maximize ensemble log-likelihood, rather than perturbing structures post hoc. This approach eliminates dependence on diffusion length, removes initialization bias, and easily incorporates external constraints. Second, we present novel sampling schemes for drawing Boltzmann-weighted ensembles. By combining structural priors from AlphaFold3 with force-field-based priors, we sample from their product distribution while balancing experimental likelihoods. Our results show that this framework consistently outperforms state-of-the-art guidance, improving diversity, physical energy, and agreement with data in X-ray crystallography and NMR, often fitting the experimental data better than deposited PDB \cite{burley2017protein} structures. Finally, inference-time optimization experiments maximizing ipTM scores reveal that perturbing AlphaFold3 embeddings can artificially inflate model confidence. This exposes a vulnerability in current design metrics, whose mitigation could offer a pathway to reduce false discovery rates in binder engineering.
\end{abstract}

\section{Introduction}
\begin{figure*}[t]
\centering
\begin{minipage}[c]{0.61\textwidth}
  \centering
  \includegraphics[width=\linewidth]{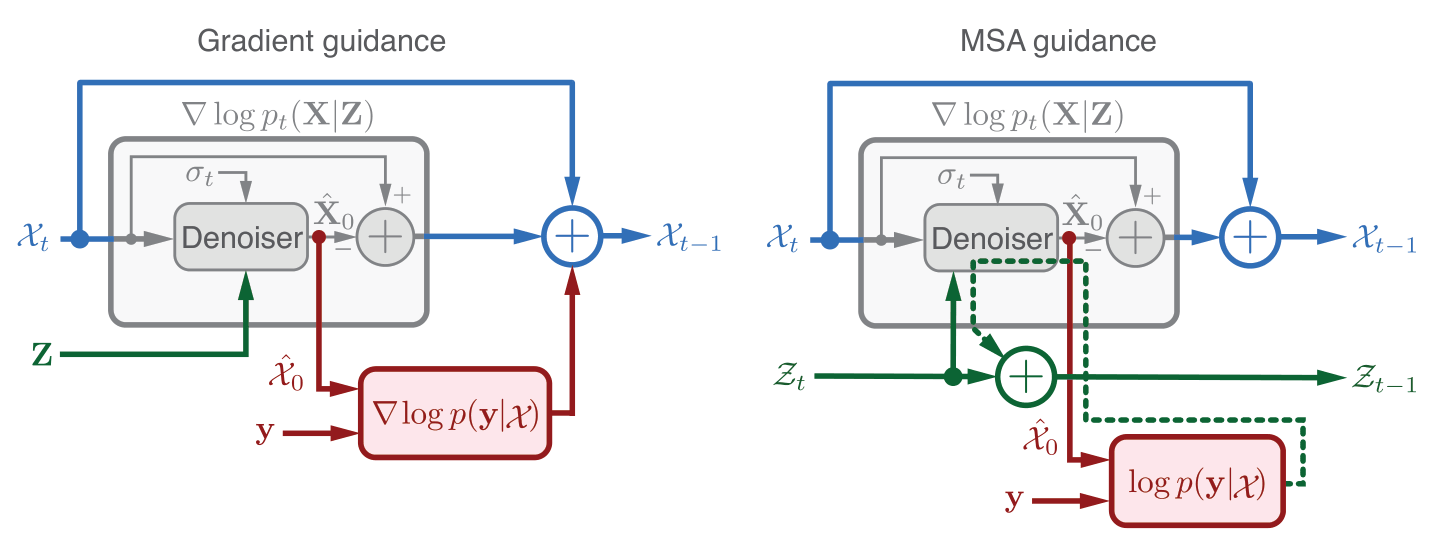}
\end{minipage}\hfill
\begin{minipage}[c]{0.38\textwidth}
     \vspace{-0.1cm}
    \caption{{\bf Gradient guidance versus inference-time optimization in experiment-guided AlphaFold3.} (Left) In gradient guidance, the Pairformer conditioning variable $\mathbf{Z}$ is fixed, and experimental gradients \(\nabla_{\ens{X}} \log p(\mathbf{y}|\ens{X})\) are applied directly to the coordinates $\ens{X}$ during reverse diffusion. (Right) In inference-time optimization (MSA Guidance), conditioning embeddings $\ens{Z}$ are updated using the experimental likelihood $\log p(\mathbf{y}|\ens{X})$, while structures are denoised via reverse diffusion conditioned on the optimized embeddings. The dotted line shows the gradient flow.}
    \label{fig:method_inner}
    \vspace{-0.3cm}
\end{minipage}
\end{figure*}

Proteins are dynamic and cooperative systems that populate multiple conformational states and/or interact with diverse molecular partners. In many cases, biological function is determined not by a single structure, but by an ensemble whose state populations govern binding, catalysis, and allosteric interaction \cite{furnham2006one}. Proteins further act through transient complexes with other proteins and metabolites, making conformational heterogeneity central to molecular recognition. Hence, capturing such heterogeneity is critical to design tasks, including binder engineering, enzyme active-site optimization, and protein-protein interface design, where success depends on generating and ranking plausible alternative conformations.

Modern sequence-conditioned predictors \cite{jumper2021highly} provide strong priors over protein structure. Diffusion-based models such as AlphaFold3 \cite{abramson2024accurate} (AF3) can generate high-quality structures, but often under-represent rare functional states and can fail to fully match experimental observations, particularly for flexible parts that exist in multiple conformations \cite{alderson2023systematic}. This motivates methods that incorporate experimental measurements and design objectives during inference.

Recently, there has been growing attention toward guiding sequence-conditioned diffusion models with differentiable likelihood terms during inference \cite{dhariwal2021diffusion}. In experimental structure determination, these terms are derived from forward models, including NOE-based distance likelihoods for NMR, real-space density agreement for crystallographic maps, and cryo-EM likelihoods that score agreement between predicted structures and reconstructed electrostatic potential maps \cite{maddipatla2025experiment, raghu2025cryoboltz}.

In parallel, many protein design workflows rely on objectives that are not physical measurements but are nonetheless used operationally for generation and selection. In binder and protein-protein interface design, BindCraft-like pipelines \cite{pacesa2024bindcraft} commonly backpropagate through structure predictors to optimize losses that include model-internal confidence metrics such as interface predicted Template Modeling (ipTM) and predicted Template Modeling (pTM). These metrics are used both to guide design updates and to rank candidate complexes, making them natural conditioning objectives in practical design settings.

In current pipelines, these objectives are typically optimized using coordinate-space guidance during the denoising process. While effective, this approach has two key drawbacks. First, the guidance is tightly coupled to the denoising trajectory, making results sensitive to initialization and scheduling. This sensitivity can complicate convergence and lead to suboptimal solutions when the denoising process operates under a limited step budget. Second, coordinate-space guidance primarily focuses on enforcing agreement between the ensemble and the conditioning signal, but it does not specify how to translate a set of samples into thermodynamically meaningful states. For instance, in solution-state NMR, many conformations can satisfy the experimental restraints; however, additional modeling assumptions (e.g., energy functions) are introduced to obtain thermodynamically plausible ensemble weights. Our framework addresses this by decoupling conditioning from the specifics of the denoising schedule and incorporating energy-based reweighting to obtain thermodynamically consistent ensemble statistics when desired.
\begin{figure*}[tb]
    \centering
    \includegraphics[width=0.9\linewidth]{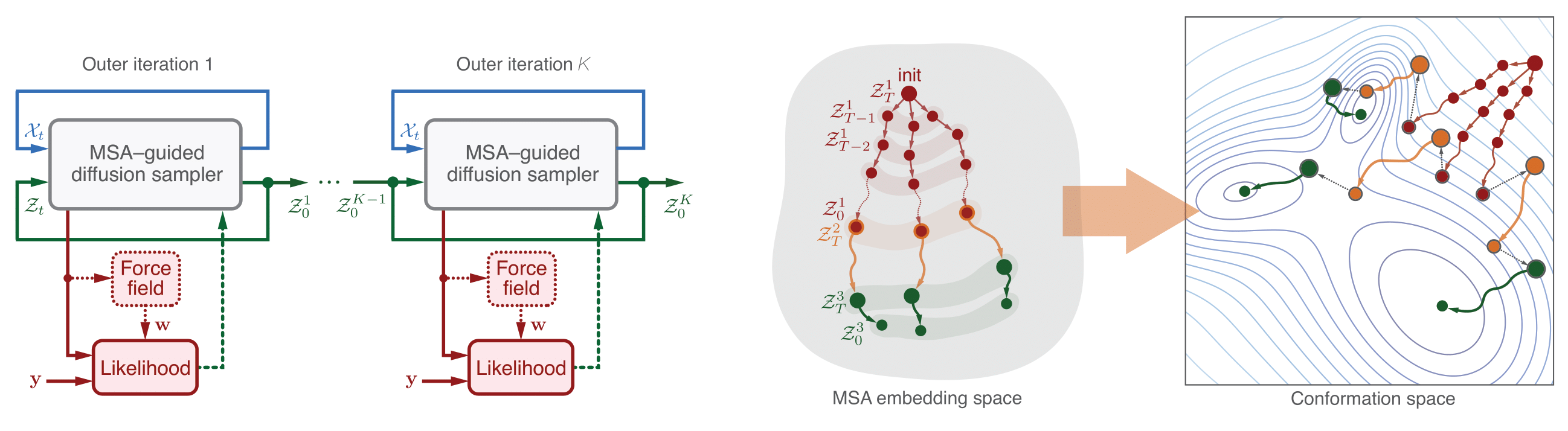}
    \vspace{-0.1cm}
    \caption{{\bf Nested inference-time optimization of AlphaFold3 conditioning embeddings.} (Left) The outer loop runs \(K\) diffusion processes, each initializing a new reverse diffusion trajectory from noise while carrying forward the optimized conditioning embeddings $\ens{Z}$ from the previous diffusion process. Within each diffusion process, experimental likelihood gradients update the embeddings (inner loop), which then condition subsequent denoising steps; optional force-field-based Boltzmann weights $\mathbf{w}$ bias ensemble statistics toward thermodynamically plausible conformations. (Right) Conceptually, successive diffusion trajectories explore the MSA embedding space (left panel), with embeddings refined across outer iterations. The resulting optimized embeddings induce ensembles in conformation space (right panel) that concentrate on regions consistent with experimental observations while preserving structural diversity.}
    \label{fig:method_outer}
    \vspace{-0.3cm}
\end{figure*}

\section{Contributions and Main Results}
In what follows, we outline the main contributions of this work and summarize the key empirical results.

\subsection{Contributions}

\textbf{Inference-time optimization.} We introduce a novel inference-time optimization (IT-Optimization) framework for sequence-conditioned diffusion models that treats AF3 as a learned structural prior. We condition sampling by updating Pairformer trunk embeddings using gradients of any differentiable objective, including experiment-derived likelihoods, confidence metrics, or combinations thereof. Unlike coordinate-space guidance along a fixed denoising trajectory, representation-space updates persist across sampling runs and bias subsequent trajectories toward target-satisfying structures (see Figure \ref{fig:convergence}). The framework is agnostic to the internal diffusion iterations and the underlying protein generative model (Figure \ref{fig:boltz_convergence}). It composes with any modified sampling scheme, functioning as a meta-guidance layer that shapes the conditioning landscape before coordinate-level steering takes effect. Empirically, representation-space conditioning significantly and consistently outperforms coordinate-space guidance and serves as an effective upstream step for downstream coordinate-level refinement (see Figures \ref{fig:method_inner}, \ref{fig:method_outer}).\footnote{AF3, AlphaFold, and AlphaFold3 are used interchangeably; IT-Opt, IT-Optimization, and Inference-time Optimization are used interchangeably.}

\textbf{Energy-weighted sampling.} We combine the AF3 structural prior with an external force-field prior via energy-based reweighting, enabling Boltzmann-weighted ensemble statistics rather than uniform weighting of accepted samples. This preserves agreement with the conditioning signal while biasing populations toward thermodynamically plausible regions, yielding ensembles with improved energetic profiles.

\subsection{Main results}

\textbf{NMR.} We evaluate our approach on solution-state NMR structure determination, where protein conformational ensembles are inferred from nuclear Overhauser effect (NOE) measurements~\cite{vogeli2014nuclear}, which provide inter-proton distance restraints via cross-peaks. Using proteins from the NMRDB dataset~\cite{klukowski2024100}, IT-optimization substantially reduces NOE restraint violations relative to NOE-guided AF3 (see Figure \ref{fig:summary}B, D). To obtain thermodynamically plausible conformational ensembles, we apply energy-reweighted sampling (and its IT-Optimization analogue). This further reduces restraint violations while simultaneously producing ensembles with lower effective energies under the AMBER99SB \cite{hornak2006comparison} force field (see Figure \ref{fig:summary}B, D).

\textbf{Crystallography.} We further evaluate our method on X-ray crystallographic benchmarks comprising sequence segments that either exhibit alternative conformations (altlocs) or correspond to unrestrained short regions adopting context-dependent structures (protein-bound peptides). We observe that the proposed IT-Optimized AF3 framework consistently outperforms experiment-guided AF3~\cite{maddipatla2025experiment} across all evaluated X-ray benchmarks, achieving lower $R_{\mathrm{work}}$ and $R_{\mathrm{free}}$ values, improved local density alignment (see Figure \ref{fig:summary}A, C, and Figure \ref{fig:xray}), and greater reproducibility across random seeds (Figure \ref{fig:seeding_xray}), while consistently generating physically plausible structures (see Table \ref{tab:molprobity}). Importantly, whereas current approaches require terminal restraints to model peptides and struggle to recover widely separated altlocs, IT-optimization enables restraint-free modeling and improves recovery of separated altlocs (Figure \ref{fig:xray}). 

\textbf{ipTM guidance.} We examine IT-Optimization of the interface predicted Template Modeling score (ipTM) to assess its behavior as an optimization objective for protein-protein complex prediction and to characterize the sensitivity of ipTM-directed perturbations in AF3 embedding space. We curate a benchmark comprising three representative settings: (i) complexes with low baseline ipTM scores~\citep{wee2024evaluation}, (ii) complexes involving transiently ordered regions of the N-terminal domain of the tumor suppressor protein p53 with different structurally characterized binding partners \cite{oren1999regulation}, and (iii) de novo protein-binder complexes from BindCraft. Across these cases, we observe that ipTM-based inference-time optimization can, in some instances, improve complex predictions where unguided AF3 performs poorly, but the effect is not uniform across systems (see Figure \ref{fig:perturbation}). We also find that ipTM values can be increased to high-confidence levels through very small perturbations of the embedding space ($0.01\%$). While such optimization is sometimes associated with improved structural agreement (see Figures \ref{fig:9haf_ss}, \ref{fig:iptm_opt_crystal}), in other cases it produces predictions with high ipTM that do not correspond to experimentally accurate interfaces, as confirmed across complementary structural metrics including interaction prediction score from aligned errors (ipSAE) \cite{dunbrack2025assessment}, DockQ \cite{mirabello2024dockq}, fraction of native contacts (Fnat), and hydrogen-bond recovery (see Figures \ref{fig:dockq} and \ref{fig:ipsae}). Together, these results indicate that ipTM and related confidence metrics should be interpreted cautiously when used as direct optimization objectives. They suggest that the embedding space contains regions where confidence can be increased without corresponding gains in structural accuracy, with implications for protein-binder design workflows that rely on such metrics for optimization and ranking.
\section{Conditional ensemble generation}
\textbf{Notation.} We denote the amino acid sequence of a protein by $\mathbf{a}$ and its folded three-dimensional structure by $\mathbf{X} = (\mathbf{x}_{1}, \dots, \mathbf{x}_{m})$, where $\mathbf{x}_{i} \in \mathbb{R}^{3}$ denotes the Cartesian coordinate of the $i$-th atom and $m$ is the total number of atoms in the structure. We further denote by $\mathbf{Z} = (\mathbf{s}, \mathbf{z})$ the conditioning variable produced by AF3's Pairformer module.\footnote{``MSA embeddings'' and ``AF3 embeddings'' are used interchangeably to denote outputs of the Pairformer.} Here $\mathbf{s} \in \mathbb{R}^{|\mathbf{a}| \times c_s}$ and $\mathbf{z} \in \mathbb{R}^{|\mathbf{a}| \times |\mathbf{a}| \times c_z}$ denote the single and pairwise representations respectively. Together, these variables form learned continuous representations that integrate co-evolutionary information from the multiple sequence alignment (MSA) with pairwise positional and contextual features introduced by the Pairformer.
\begin{figure}[tb]
    \centering
    \includegraphics[width=0.93\linewidth]{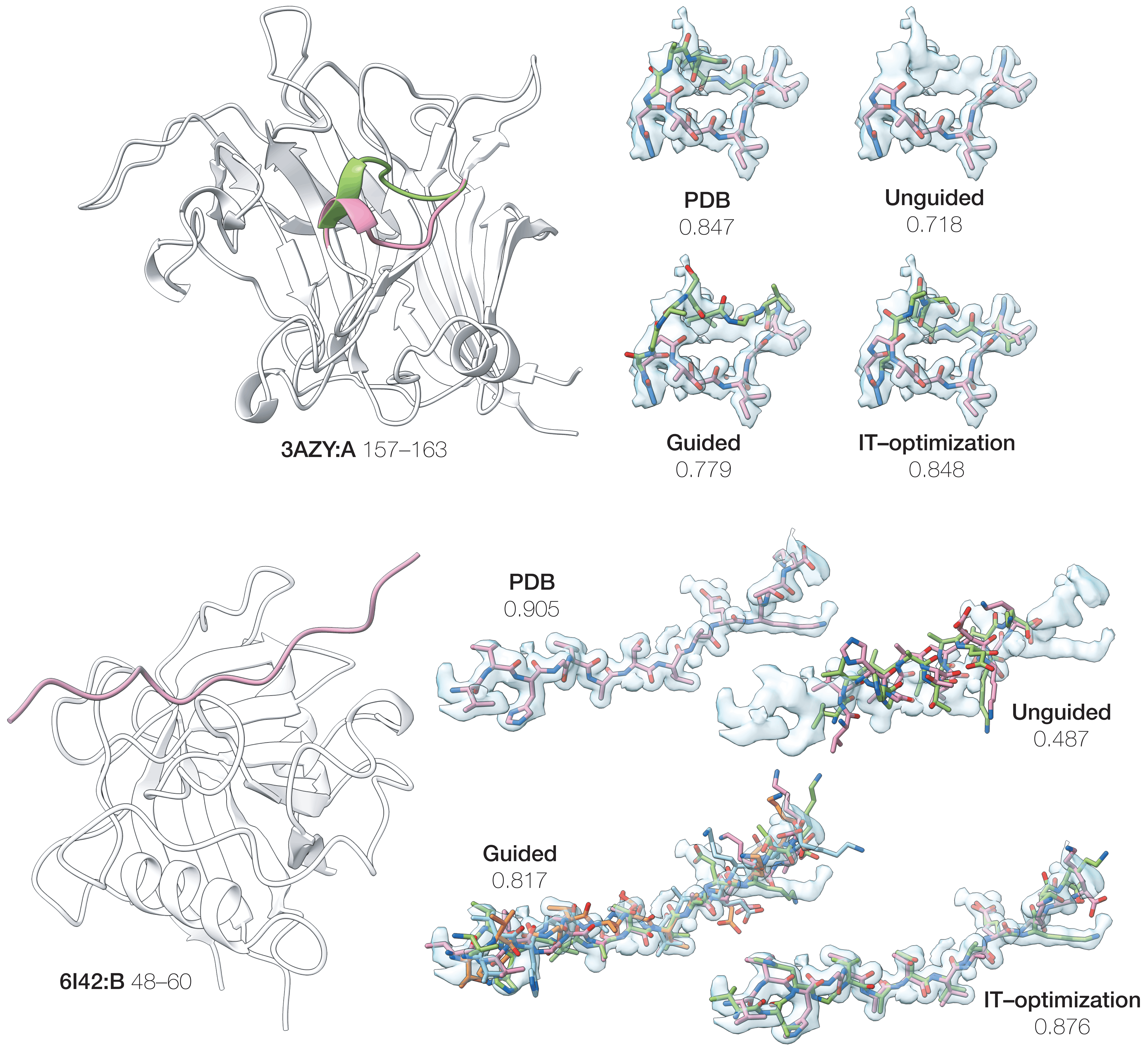}
    \vspace{-0.1cm}
    \caption{{\bf Inference-time (IT) optimization improves structural accuracy over guided and unguided baselines.} (Top) \texttt{3AZY:A} ($1.65\,\text{\AA}$) exhibits bimodal distribution at residues 157-163. Unguided AF3 predicts a single mode, while guidance produces a bimodal ensemble with a poorly fit backbone at one of the modes. IT-optimization recovers both modes with accurate density fit, matching the PDB. (Bottom) For \texttt{6I42:B} ($1.38\,\text{\AA}$), AF3 mispredicts the bound 13-residue peptide. Guidance improves backbone placement but poorly predicts side-chains, whereas IT-optimization yields accurate backbone and side-chain agreement. Numbers beneath each ensemble indicate cosine similarity to $F_{\mathrm{o}}$.}
    \label{fig:xray}
    \vspace{-0.3cm}
\end{figure}

\textbf{Problem statement.} Given a protein sequence $\mathbf{a}$ and an experimental observation $\mathbf{y}$, our goal is to optimize a set of $n$ conditioning variables $\ens{Z} = \{\mathbf{Z}^{1}, \dots, \mathbf{Z}^{n}\}$ that induce an ensemble of corresponding structures $\ens{X} = \{\mathbf{X}^{1}, \dots, \mathbf{X}^{n}\}$. In AF3, the conditioning variable is a deterministic transformation $\mathbf{Z}(\mathbf{a})$ of the sequence $\mathbf{a}$ produced by the Pairformer. We relax this relation into a separable prior $p(\ens{Z} | \mathbf{a}) = p(\mathbf{Z}^{1} | \mathbf{a})  \cdots p(\mathbf{Z}^{n} | \mathbf{a})$ preventing $\mathbf{Z}$ from drifting arbitrarily in the embedding space and penalizing off-manifold solutions. In what follows, we assume a simple Gaussian prior $p(\mathbf{Z} | \mathbf{a}) = \mathcal{N}(\mathbf{Z}(\mathbf{a}), \sigma^2)$. Invoking Bayes' theorem, we obtain the posterior
$$
p(\ens{Z}|\mathbf{y}, \mathbf{a}) \propto p(\ens{Z}| \mathbf{a}) p(\mathbf{y}| \ens{Z}, \mathbf{a}) = p(\ens{Z}| \mathbf{a}) p(\mathbf{y}| \ens{Z}),
$$
Marginalizing out $\ens{X}$ yields
$$
p(\mathbf{y}| \ens{Z}) = \int p(\mathbf{y}|\ens{X})  p(\ens{X}| \ens{Z}) d\ens{X} = \mathbb{E}_{\ens{X} \sim p(\ens{X} |\ens{Z})} p(\mathbf{y} | \ens{X}),
$$
where $p(\mathbf{y} \mid \ens{X})$ is a generally inseparable likelihood that quantifies the agreement between the structural ensemble $\ens{X}$ and the experimental observation $\mathbf{y}$. This data term is specified by forward models reflecting the physics of the experimental measurement. We consider three data terms in this work: (i) NOE distance restraints (Section~\ref{subsec:noe}); (ii) real-space crystallographic density maps (Section~\ref{subsec:xray}); and (iii) the ipTM confidence score (Section~\ref{subsec:iptm}).

This leads to the exact MAP problem
\begin{equation}
    \ens{Z}^{\ast} = \arg \max_{\ens{Z}} \log \mathbb{E}_{\ens{X} \sim p(\ens{X} \mid \ens{Z})} p(\mathbf{y} \mid \ens{X}) p(\ens{Z} | \mathbf{a}).
\label{eq:exact_MAP}
\end{equation}
To avoid the hardly-tractable log of the expectation, we invoke Jensen's inequality $\log \mathbb{E} p \ge \mathbb{E} \log p$ to obtain an evidence lower bound (ELBO)-type objective
\begin{equation}
    \ens{Z}^{\ast} = \arg \max_{\ens{Z}} \mathbb{E}_{\ens{X} \sim p(\ens{X} \mid \ens{Z})} \log p(\mathbf{y} \mid \ens{X}) + \log p(\ens{Z} | \mathbf{a}).
\label{eq:optim_problem}
\end{equation}
The maximization of this lower bound can be viewed as a form of variational inference with $\ens{Z}$ acting as a variational parameter and the AF3 prior itself acting as a restricted family of priors, $q_{\ens{Z}}(\ens{X}) = p(\ens{X}|\ens{Z})$. The expectation is approximated by sampling from AF3.

\subsection{Nuclear Overhauser effect restraints}\label{subsec:noe}
Nuclear Overhauser Effect (NOE) distance restraints derived from NMR spectroscopy encode pairwise proximity constraints between non-bonded atoms $(i,j) \in \mathcal{P}$. 
The likelihood of observing an NOE effect between two atoms depends on their time-averaged distance, $r$, as $1/r^6$. However, as other parameters, such as local dynamics, play a role, determining precise distances, although possible, is not commonly done \cite{vogeli2014nuclear}, and only lower and upper distance bounds, $[\underline{d}_{ij}, \bar{d}_{ij}]$, are used.
Given an ensemble $\ens{X}$ and the set of restraints $\mathbf{D} = \{(\underline{d}_{ij}, \bar{d}_{ij}) : (i,j) \in \mathcal{P}\}$, the log-likelihood is
\begin{align}\label{eq:noe_loglik}
    &\log p(\mathbf{D} \mid \ens{X}) \;=\; \nonumber\\
    &\quad -\!\sum_{(i,j)\in\mathcal{P}}
    \Bigl(
      \bigl[\underline{d}_{ij} - d_{ij}(\ens{X})\bigr]_+^{\,2}
      + \bigl[d_{ij}(\ens{X}) - \bar{d}_{ij}\bigr]_+^{\,2}                          
    \Bigr)
\end{align}
where $[\cdot]_+ = \max(\cdot,\, 0)$ and $d_{ij}(\ens{X}) = \frac{1}{n}\sum_{k=1}^{n} \lVert \mathbf{x}_i^{(k)} - \mathbf{x}_j^{(k)}\rVert$ is the ensemble-averaged interatomic distance.

\subsection{Crystallographic electron densities}\label{subsec:xray}
The observed real-space electron density map $F_{\mathrm{o}} \colon \mathbb{R}^{3} \rightarrow \mathbb{R}$ represents an ensemble over the molecular conformations present in a crystal lattice. In X-ray crystallography, $2$D diffraction images aggregate scattering contributions from a large population of molecules, such that the resulting electron density reflects an average over conformational heterogeneity \cite{hahn1983international}. Typical crystallographic pipelines attempt to fit an ensemble $\ens{X}$ that best explains $F_{\mathrm{o}}$ \cite{murshudov2011refmac5,agirre2023ccp4}. This objective can be formulated as maximizing the likelihood $\log p(F_{\mathrm{o}} \mid \ens{X})$. We define the log-likelihood as the negative $L_1$ distance between the observed and ensemble-averaged calculated density maps
\begin{equation}
\log p(F_{\mathrm{o}}|\ens{X})=-\int_{\Omega}
\Big\|
F_{\mathrm{o}}(\boldsymbol{\xi})-\frac{1}{n}\sum_{k=1}^{n} F_{\mathrm{c}}(\boldsymbol{\xi}; \mathbf{X}^k)
\Big\|_1 \, d\boldsymbol{\xi}
\label{eq:xray_likelihood}
\end{equation}
where, the loss is computed over the asymmetric unit cell $\Omega$ and $F_{\mathrm{c}} (\boldsymbol{\xi}; \mathbf{X}^{k})$ is the theoretical electron density computed from the atomic coordinates of structure $\mathbf{X}^{k}$ (See Appendix \ref{subsec:xray_appendix} for details). 
\subsection{Interface predicted template modeling score}\label{subsec:iptm}
The interface predicted template modeling (ipTM) score, introduced with AlphaFold-Multimer and reported by AF3, is a learned confidence metric that quantifies the reliability of predicted inter-chain interfaces in multimeric protein structures \cite{evans2022protein,abramson2024accurate}. A high ipTM prediction indicates strong model confidence in the relative placement and contacts between chains, while a low ipTM values typically correspond to poorly formed or disordered interfaces. Unlike experimental observations, ipTM is not a physical measurement; instead, it is derived from a neural network trained to estimate structural accuracy by comparing predicted interfaces against known protein complexes \cite{jumper2021highly,abramson2024accurate,passaro2025boltz2}. In our framework, we treat ipTM as a surrogate likelihood that provides a differentiable signal for assessing interfacial consistency. The data term can be defined as,
\begin{equation}
\log p(\mathbf{y} \mid \ens{X}) \propto \sum_{i=1}^{n} \mathrm{ipTM}(\mathbf{X}^{i})
\label{eq:iptm_likelihood}
\end{equation}
where $\mathbf{y}$ denotes the (unknown) true interface configuration. Unlike other likelihoods, these terms are separable. By maximizing this likelihood, we bias the ensemble toward interface geometries that are assigned higher confidence by the model, thereby promoting coherent inter-chain arrangements.
\section{Inference-time Optimization}
\begin{figure*}[tb]
    \centering
    \includegraphics[width=0.75\linewidth]{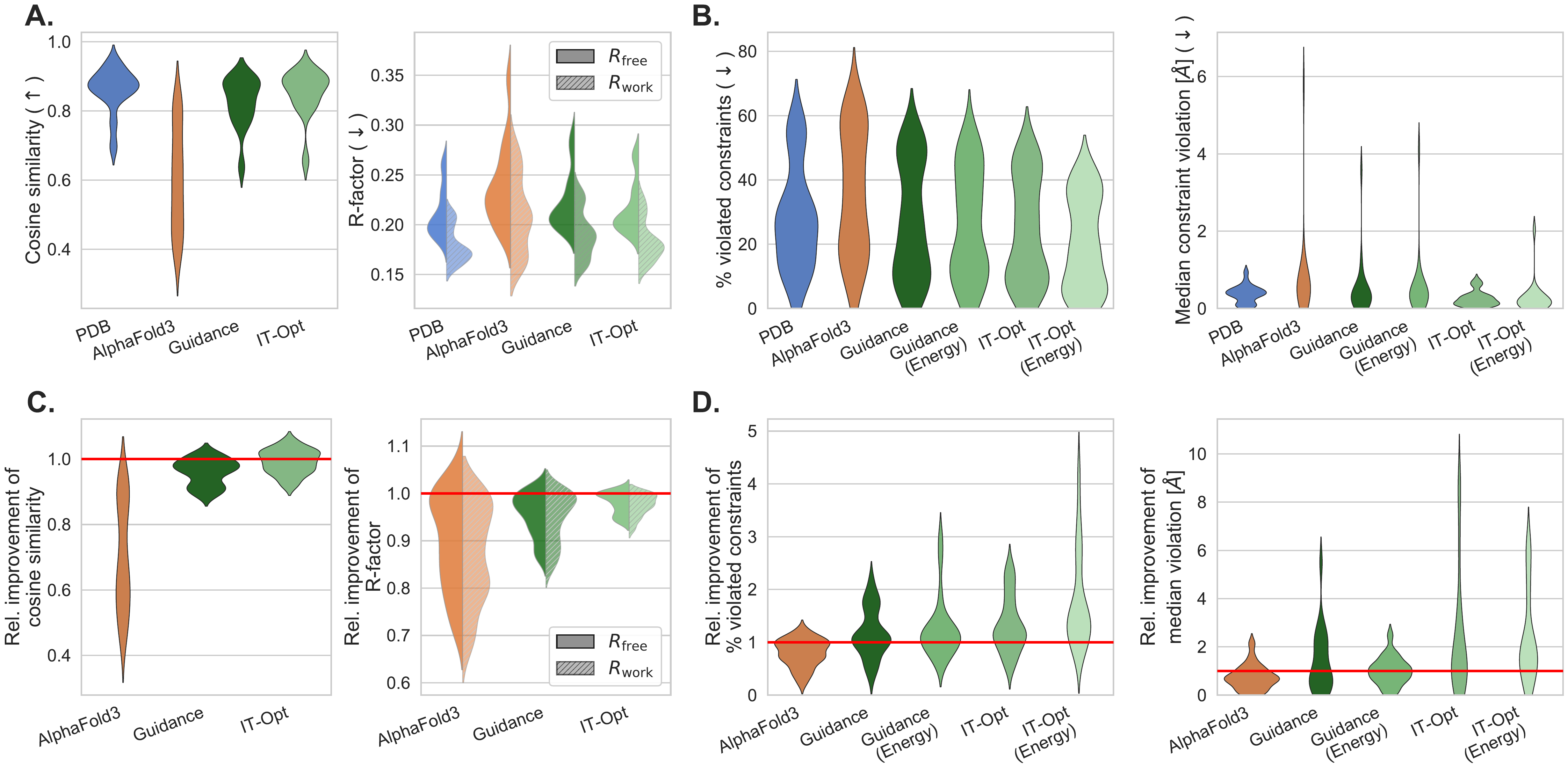}
    \vspace{-0.1cm}
    \caption{\textbf{Summary statistics for NMR and X-ray crystallography benchmarks.} Violin plots compare AlphaFold3, Guidance, and IT-Opt (with energy-reweighted variants for NMR) against deposited PDB structures. In panels \textbf{C, D}, relative improvements are computed so that \emph{values above the \textcolor{red}{red} line at $1$ indicate better fit than the deposited PDB}. \textbf{(A)} Distributions of $F_{\mathrm{o}}$ and $F_{\mathrm{c}}$ cosine similarity (left) and $R_{\mathrm{work}}$ and $R_{\mathrm{free}}$ (right), with relative improvements over deposited PDB structures summarized in panel \textbf{C}. See Tables \ref{tab:xray_cosine_similarity}-\ref{tab:xray_rfree} for complete results. \textbf{(B)} Distributions of the percentage of violated distance restraints (left) and median violation magnitude (right), with relative improvements summarized in panel \textbf{D}. See Tables \ref{tab:nmr_uniform}, \ref{tab:nmr_energy} for complete results.}
    \label{fig:summary}
    \vspace{-0.3cm}
\end{figure*}
AF3 \cite{abramson2024accurate} generates protein structure ensembles using a diffusion-based generative model \cite{ho2020denoising, karras2022elucidating} defined over the coordinates of $\mathbf{X}$. The reverse diffusion process follows the stochastic differential equation (SDE):
\begin{equation}
    d\mathbf{X} = -\left(\frac{1}{2} \mathbf{X} + \nabla_{\mathbf{X}} \log p_t (\mathbf{X} \mid \mathbf{Z})\right) \sigma_t dt + \sqrt{\sigma_t} \mathbf{N}
\label{eq:sde}
\end{equation}
where $\sigma_t$ denotes the diffusion noise schedule, $\mathbf{N} \sim \mathcal{N} (\mathbf{0}, \mathbf{I})$ is sampled from a standard normal distribution, and $\nabla_{\mathbf{X}} \log p_t (\mathbf{X} \mid \mathbf{Z})$ is the learned score function. Protein structures are generated by numerically integrating this SDE from an initial noise sample drawn from $\mathcal{N}(\mathbf{0}, \sigma_T \mathbf{I})$ at time $t=T$ down to $t=0$. In a typical guidance setting \cite{maddipatla2024generative, maddipatla2025inverse, maddipatla2025experiment}, a non-i.i.d. ensemble $\ens{X} = \{\mathbf{X}^{1}, \dots, \mathbf{X}^{n}\}$ is generated by integrating an experimental likelihood term into the reverse SDE.
\begin{align}
&d\begin{bmatrix}
\mathbf{X}^{1}\\
\vdots\\
\mathbf{X}^{n}
\end{bmatrix} 
= -\left(\frac{1}{2} \begin{bmatrix}
\mathbf{X}^{1}\\
\vdots\\
\mathbf{X}^{n}
\end{bmatrix}  + \begin{bmatrix}
\nabla_{\mathbf{X}^{1}} \log p_{t} (\mathbf{X}^{1}| \mathbf{Z}^{1})\\
\vdots\\
\nabla_{\mathbf{X}^{n}} \log p_{t} (\mathbf{X}^{n}| \mathbf{Z}^{n})
\end{bmatrix} \right) \sigma_t dt\nonumber \\
& \ \ -  \eta\nabla_{\ens{X}} \log p\left( \mathbf{y}   \middle| 
D_{\theta}(\ens{X}; \ens{Z}, t) \right) \sigma_t dt
+ \sqrt{\sigma_{t}} \begin{bmatrix}
    \mathbf{N}^{1}\\
    \vdots\\
    \mathbf{N}^{n}
\end{bmatrix}
\label{eq:guidance}
\end{align}
Here, $\nabla_{\ens{X}} \log p\left( \mathbf{y}   \middle| 
D_{\theta}(\ens{X}; \ens{Z}, t) \right)$ biases the SDE toward ensembles that maximize agreement with $\mathbf{y}$. $D_{\theta}(\ens{X}; \ens{Z}, t)$ denotes AF3's denoised structure prediction at diffusion step $t$, $\eta$ controls the guidance strength, and $\mathbf{N}^{i} \sim \mathcal{N} (\mathbf{0}, \mathbf{I})$. Despite its flexibility, the reverse diffusion process is limited: it operates with a fixed number of discretization steps, exhibits sensitivity to the initial noise realization, and its convergence rate is constrained by the noise schedule $\sigma_t$. 

As a remedy, we propose a nested optimization framework consisting of an outer loop and an inner loop to solve the optimization problem in Equation \ref{eq:optim_problem} on a batch of embeddings $\ens{Z} = \{\mathbf{Z}^{1}, \dots , \mathbf{Z}^{n}\}$ at inference-time.

\paragraph{Outer loop (Exploration).} The outer loop performs a global optimization over $\ens{Z}$. Due to the iterative nature of the diffusion process, we treat each outer iteration $k \in \{1, \dots, K\}$, as a fresh sampling of the diffusion noise $\ens{X}^{k}_{T} = \{\mathbf{X}^{k, i}_T \sim \mathcal{N} (\mathbf{0}, \sigma_T \mathbf{I})\}_{i=1}^{n}$ which initializes a new diffusion trajectory. This resampling promotes exploration across distinct diffusion paths and reduces sensitivity to starting noise. Crucially, this framework drives $\ens{Z}$ to generalize across noise realizations rather than overfitting to a particular diffusion trajectory. See Algorithm \ref{alg:msa_opt_mini} (lines 3-5) for details and Figures \ref{fig:method_inner}, \ref{fig:method_outer} as an illustration.

\paragraph{Inner loop (Joint refinement).} Within each outer loop, we run an inner loop that simulates the reverse diffusion process from $\ens{X}^{k}_T$ to the denoised counterpart $\ens{X}^{k}_0$. At each reverse diffusion step $t \in [T, \dots, 0]$, we perform two coupled updates.

\textbf{Embedding updates.} We update $\ens{Z}^{k}_{t}$ by doing gradient ascent on the experimental likelihood with respect to the embeddings:
\begin{equation}
\begin{aligned}
\ens{Z}^{k}_{t-1}
&= \ens{Z}^{k}_{t}
+ \eta_{z}\Big(
\nabla_{\ens{Z}^{k}_{t}} \log p\!\left(\mathbf{y}\mid D_{\theta}(\ens{X}^{k}_{t}; \ens{Z}^{k}_{t}, t)\right)\\
&+ \lambda_{\mathrm{p}} \nabla_{\ens{Z}^{k}_{t}} \log p(\ens{Z}^{k}_{t}|\mathbf{a})
\Big)
\end{aligned}
\label{eq:emb_update}
\end{equation}
where $\eta_{z}$ is the learning rate and $\lambda_p$ is the prior weight. This update shifts the learned conditioning manifold toward regions of the embedding space that are more consistent with $\mathbf{y}$. Since the embeddings parameterize the structure generation process, this step indirectly biases subsequent structural samples toward experimentally faithful conformations.

\textbf{Reverse SDE.} The updated embeddings $\ens{Z}^{k}_{t-1} = \{\mathbf{Z}^{k, 1}_{t-1}, \dots, \mathbf{Z}^{k, n}_{t-1}\}$ are injected into the SDE, yielding the following reverse step for each ensemble member $ \mathbf{X}^{k, i}  \in \ens{X}^k$:
\begin{align*}
    d\mathbf{X}^{k, i} =& -\left(\frac{1}{2} \mathbf{X}^{k, i} + \nabla_{\mathbf{X}^{k, i}} \log p_t (\mathbf{X}^{k, i} | \mathbf{Z}^{k, i}_{t-1})\right) \sigma_t dt \\
    &+ \sqrt{\sigma_t} \mathbf{N}^i.
\end{align*}
Note that the score function is predicted by the model conditioned on the updated embeddings. With this alternating procedure, the conditioning variables act as a persistent memory of the experimental manifold across diffusion steps. Importantly, the embeddings obtained at the end of the inner loop $\ens{Z}^{k}_{0}$ are used to initialize the next outer iteration $\ens{Z}^{k+1}_{T}$, enabling cumulative refinement rather than restarting the optimization as shown in Figures \ref{fig:convergence} and \ref{fig:boltz_convergence}. See Algorithm \ref{alg:msa_opt_mini}, lines 6-8, for details and Figures \ref{fig:method_inner}, \ref{fig:method_outer} as an illustration.

Our framework thus serves as a meta-initialization; in our experiments, the embeddings $\ens{Z}^{K}_{0}$ obtained after the convergence of the outer loops are used as conditioning variables for the coordinate-space guidance procedure in Equation \ref{eq:guidance}. A detailed pseudocode is provided in Algorithm \ref{alg:msa_opt}, with additional implementation details in Appendix \ref{subsec:optimization_appendix} and runtime analysis in Appendix \ref{subsubsec:runtime}. We further study the convergence properties of inference-time updates to the diffusion model's input embeddings in Appendix \ref{app:theory}.
\begin{algorithm}[tb]
  \caption{IT-Optimization}
  \label{alg:msa_opt_mini}
  \begin{algorithmic}[1]
    \STATE \textbf{Input:} Initial trunk embeddings $\ens{Z}$; experimental observation $\mathbf{y}$; noise schedule $[\sigma_{T}, \dots, \sigma_0]$; learning rate $\eta_{z}$; ensemble size $n$; outer iterations $K$; prior weight $\lambda_p$
    \STATE \textbf{Output:} Optimized trunk embeddings $\ens{Z} $
    \FOR{$k = 1$ to $K$ \hfill $\triangleright$ Outer loop}
        \STATE $\ens{X} \sim \sigma_T \cdot [\mathbf{N}^1, \dots, \mathbf{N}^{n}]^T$ \hfill $\mathbf{N}^{i} \sim \mathcal{N}(\mathbf{0}, \mathbf{I})$
        \FOR{$t \in [T-1, \dots, 0]$ \hfill $\triangleright$ Inner loop}
          \STATE $\ens{Z} \gets  \ens{Z} + \nabla_{\ens{Z}}(\eta_{z} \log p(\mathbf{y}|D_{\theta}(\ens{X}; \ens{Z}, t)) + \lambda_p \log p(\ens{Z}|\mathbf{a}))$
            \STATE $\ens{X} \gets \mathrm{ReverseSDE}(\ens{X}; \ens{Z},  t, \sigma_t)$ \hfill $\triangleright$ Eq \ref{eq:sde}
        \ENDFOR
        
    \ENDFOR
    \STATE return $\ens{Z} $
  \end{algorithmic}
\end{algorithm}
\paragraph{Boltzmann re-weighting.} While the log-likelihood in Equation \ref{eq:emb_update} uniformly weights ensemble members, physical ensembles in solution follow a Boltzmann distribution \cite{barducci2011metadynamics}. To bias the ensemble toward thermodynamically plausible conformations, we refine the AF3-induced structural prior by tilting it with a differentiable energy function $E_{\phi}: \mathbb{R}^{m \times 3} \to \mathbb{R}$
\begin{equation}
    \pi(\ens{X}|\ens{Z}) \propto p(\ens{X}|\ens{Z}) \cdot \exp(-\beta \sum_{i} E_\phi (\mathbf{X}^{i})),
\end{equation}
where $\beta = 1.68\;\mathrm{kcal^{-1}\,mol}$ is the inverse temperature at physiological conditions ($T = 300\,\mathrm{K}$). This formulation interprets the force field as a thermodynamic prior that biases AF3 towards energetically favorable conformations. Because the normalizing constant of the tilted distribution $\pi$ is intractable, we adopt a self-normalized importance sampling (SNIS) perspective \cite{skreta2025feynman}. Expectations under $\pi$ are approximated by assigning Boltzmann weights $w^{i}$ to samples drawn from the AF3 prior,
\begin{equation}
    w^{i} = \dfrac{\exp(-\beta E_\phi(\mathbf{X}^{i}))}{\sum_{k=1}^{n} \exp(-\beta E_\phi(\mathbf{X}^{k}))}.
\label{eq:boltzmann}
\end{equation}
Under this reweighting, experimental observables are evaluated as Boltzmann-weighted ensemble averages, emphasizing low-energy conformations while preserving consistency with the AF3 prior. Additional implementation details, including temperature annealing and energy smoothing strategies, are provided in Appendix~\ref{app:energy_reweighting}.
\section{Experiments}
\subsection{Electron density-based IT-Optimization}
\begin{figure}
    \centering
    \includegraphics[width=0.49\linewidth]{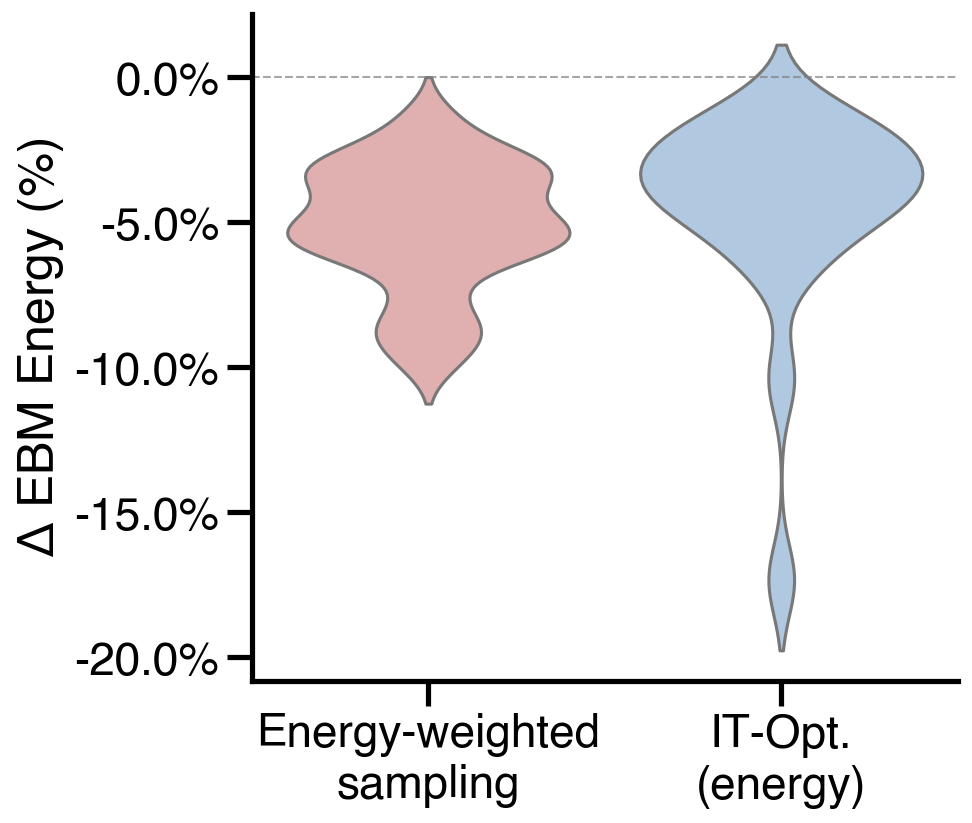}
    \includegraphics[width=0.49\linewidth]{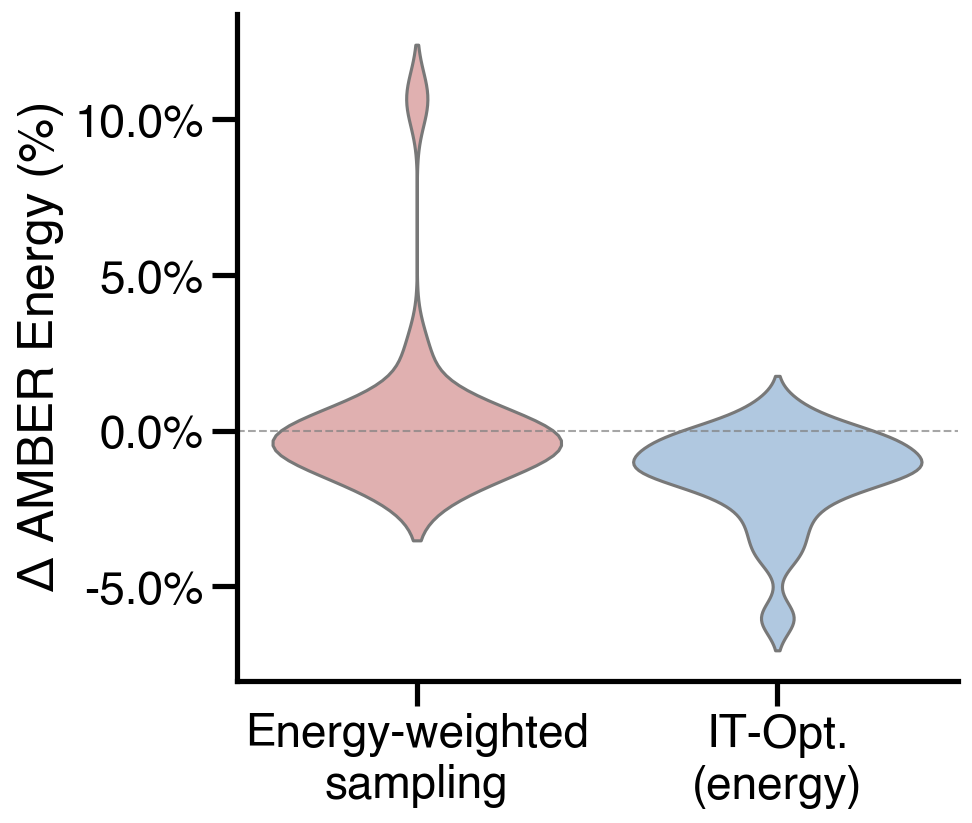}
    \caption{\textbf{Energy-weighted inference improves thermodynamic stability of the ensemble.} Energy changes relative to a uniformly weighted baseline (dashed line) for energy-weighted sampling and energy-weighted IT-Opt. Left: ProteinEBM. Right: AMBER99. Negative change means more stable structures. }
    \label{fig:energy_guidance}
    \vspace{-0.1cm}
\end{figure}
\paragraph{Alternative Conformations.} In loop regions, density maps often exhibit multimodal backbone density due to alternative conformations (altlocs). As shown by \citet{rosenberg2024seeing}, AF3 typically collapses such regions to a single mode, failing to accurately capture conformational heterogeneity. While guidance-based methods can recover multimodal ensembles, they degrade when alternative states are well separated \cite{maddipatla2025experiment}. We illustrate this limitation using the laminarinase catalytic domain (Figure \ref{fig:xray}), where chain A of \texttt{3AZY} ($1.65$\,\AA) contains a seven-residue altloc region. AF3 recovers only one mode, and guidance produces a bimodal ensemble but misfits the backbone for one mode. In contrast, IT-Optimization accurately captures both modes, including sidechain placement. Figure \ref{fig:seeding_xray} further shows that IT-Optimization yields significantly more consistent ensembles across $5$ random seeds than guidance on the same proteins, while generating physically plausible ensembles (Table~\ref{tab:molprobity}). Figure \ref{fig:summary}A, C show that similar trends are observed across additional altloc-containing structures. Additional benchmarks are provided in Tables \ref{tab:xray_cosine_additional}-\ref{tab:xray_rfree_additional} (bottom).
\paragraph{Protein-bound peptides.} Short peptide chains are challenging for AF3 due to their high conformational flexibility. Guidance-based approaches simplify this setting by fixing the peptide N-C termini and optimizing only internal residues \cite{maddipatla2025experiment}. Here, we remove this constraint and fit the entire peptide directly into the density map using both guidance and IT-optimization, and without the need to anchor any atoms or residues. Figure \ref{fig:xray} shows chain B of \texttt{6I42} ($1.38\,\text{\AA}$), a $13$-residue peptide from the \textit{$\alpha$-synuclein-cyclophilin A complex}. AF3 fails to reproduce the experimental structure, and guidance improves only the backbone placement but yields incorrect side-chain packing. In contrast, IT-Optimization generates ensembles with well-fitted backbone and side-chain arrangements. Moreover, as shown in Figure \ref{fig:convergence}, iterative optimization enables continued refinement until convergence, whereas guidance is limited to a fixed number of iterations. Lastly, Figure \ref{fig:summary}A, C show consistent improvements across the full benchmark. Additional benchmarks are provided in Tables \ref{tab:xray_cosine_additional}-\ref{tab:xray_rfree_additional} (top).

\subsection{NOE-based IT-Optimization}
We benchmark the proposed framework on a set of 20 protein structures from the NMRDB dataset \cite{klukowski2024100} (Table \ref{tab:nmr_pdb_info}). As shown in Figure \ref{fig:summary}B \& D, IT-Optimization with uniformly weighted NMR log-likelihood (Equation \ref{eq:noe_loglik}) consistently reduces both the number of violated NOE restraints and the median violation distance compared to guided and unguided baselines. These improvements are robust to the density of available restraints: subsampling NOEs to as little as 10\% of the full set yields validity losses (Equation \ref{eq:validity}) on par with the reference PDB ensemble, with no sign of overfitting under sparse constraints (Table~\ref{tab:noe_sparsity}). Incorporating energy reweighting (Equation \ref{eq:boltzmann}) using force-field-predicted energies further improves NOE satisfaction while promoting thermodynamically plausible ensembles (Table \ref{tab:nmr_energy}). As a representative example, we consider the solution-NMR ensemble of an uncharacterized \textit{Rhodospirillum rubrum} protein (PDB \texttt{2K0M}, BMRB 15652). Figure \ref{fig:nmr_violation} shows that IT-Optimization reduces restraint violations relative to competing methods, with additional gains when combined with energy reweighting. This also results in improved structural agreement with the reference ensemble (Figure \ref{fig:nmr_ensemble}). Lastly, as shown in Figure \ref{fig:energy_guidance} and Table \ref{tab:ess}, energy-weighted variants produce lower-energy ensembles with higher effective sample size (ESS) under the optimization force field (ProteinEBM \cite{roney2025protein}) than their uniformly weighted counterparts. Additional benchmarks are provided in Tables \ref{tab:nmr_viol_pct_additional} and \ref{tab:nmr_viol_a_additional}.\subsection{ipTM-based IT-Optimization \label{subsec:iptm_epx}}
We systematically analyzed the effects of inference-time optimization of the ipTM objective on predicted protein complexes, with the goal of characterizing how such optimization interacts with the AF3 embedding space. Given the widespread use of ipTM as a confidence and ranking metric in large-scale complex screening and binder design, it is important to understand how perturbations in the AF3 embeddings influence ipTM values and the resulting structural outputs.

We conduct the following perturbation experiment. For each target complex, we perform inference-time optimization of the AF3 embeddings to increase the ipTM score (and related weighted metrics). Because AF3's ipTM predictor depends jointly on the AF3 embeddings and the predicted complex structure, it is sensitive to perturbations in the embedding space. Optimization is performed by iteratively perturbing the AF3 embeddings and resampling structures using the AF3 diffusion model to increase ipTM. A maximum relative perturbation budget is enforced on the AF3 embeddings; upon reaching this limit, optimization is halted, and the resulting embeddings are used for structure sampling.

The evaluated complexes comprise three distinct classes. Firstly, complexes involving the tumor suppressor p53 \cite{oren1999regulation}, determined either by NMR -- focusing on the transiently ordered N-terminal domain (\texttt{2K8F}, \texttt{2L14}, and \texttt{2LY4}) -- or by X-ray crystallography, characterizing the core domain interactions (\texttt{1YCS}). Here, unguided AF3 predictions exhibit low baseline confidence in terms of both pTM and ipTM. These systems, therefore, represent challenging test cases for confidence-driven inference-time optimization. Secondly, we consider de novo designed protein--protein interfaces generated using BindCraft; (\texttt{9HAD}, \texttt{9HAF}). These targets typically exhibit higher baseline confidence and provide a complementary regime for assessing the sensitivity of ipTM optimization in designed binding scenarios. Lastly, we consider \texttt{8Q70} a natural heterodimer with a domain swap.
\begin{figure}
    \centering
    \includegraphics[width=\linewidth]{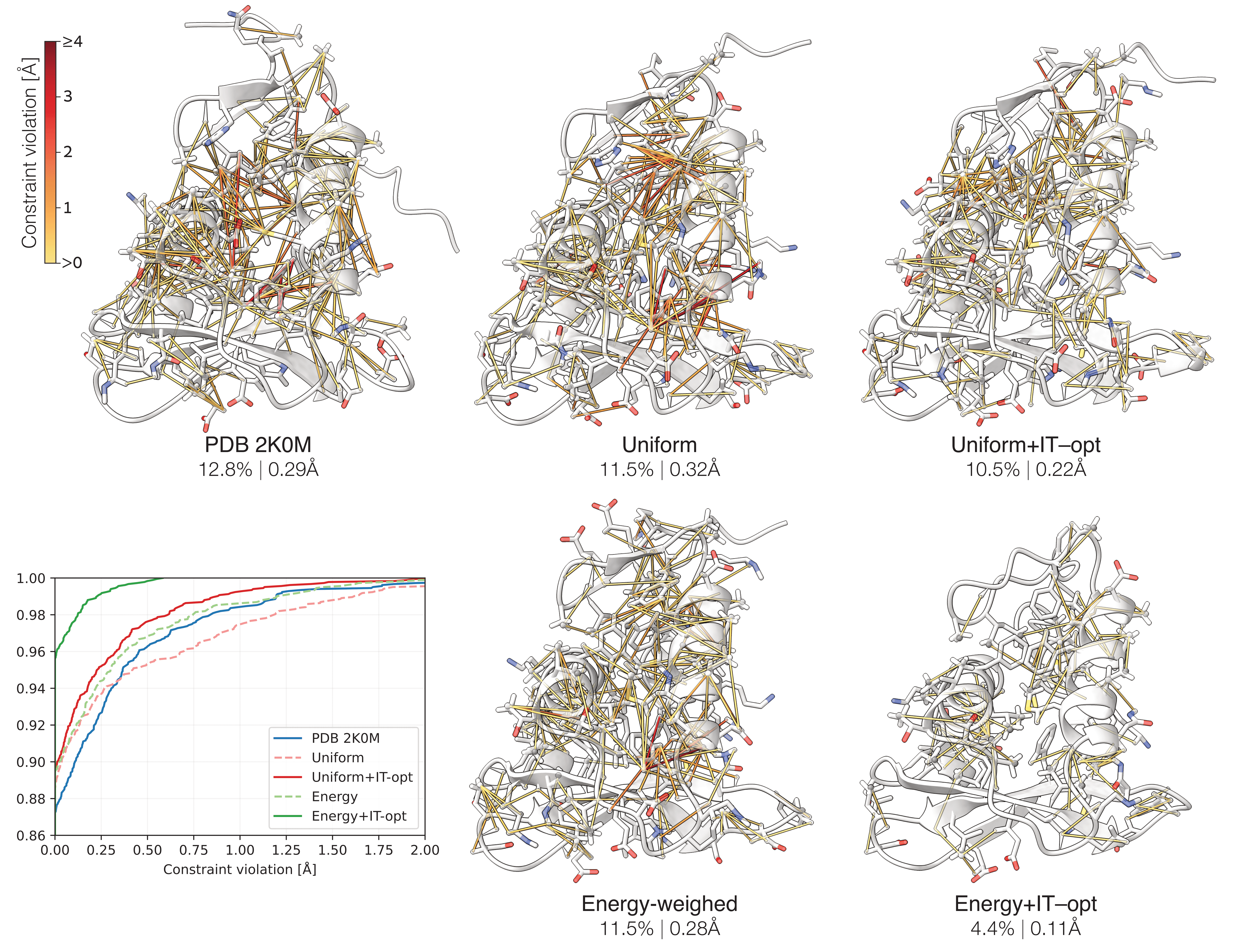}
    \caption{\textbf{Inference-time optimization, with and without energy weighting, reduces NOE restraint violations.} NOE constraint violations in \texttt{2K0M} and ensembles from PDB, Guidance (Uniform), Inference-time Optimization (Uniform + IT-opt), Energy-weighted Guidance, and Energy-weighted IT-Opt. Violated constraints are shown as lines and colored by violation magnitude. The percentage of violated constraints and their median violation distance are reported below each structure. \textit{Bottom left}: cumulative distribution of violation magnitudes across all five generated ensembles. \vspace{-2mm}}
    \label{fig:nmr_violation}
\vspace{-0.22cm}
\end{figure}

\textbf{Sensitivity of ipTM to embedding perturbations.} 
Using the inference-time optimization procedure described above, we first examine how the ipTM-driven objective (Eq. \ref{eq:iptm_likelihood}) evolves under continued optimization. Across all evaluated targets, both ipTM and pTM components can be systematically increased within the imposed perturbation budget (Figure \ref{fig:perturbation}). Notably, these increases are often achieved with only very small relative changes to the embedding ($\approx 0.01\%$), indicating that the confidence objective is highly sensitive to local directions in embedding space. However, the figure further shows that increases in confidence are not consistently accompanied by improvements in structural accuracy: while confidence scores generally increase monotonically with perturbation budget, RMSD to experiment is target-dependent and may remain unchanged or even worsen in some cases, particularly in the absence of MSA input. This decoupling holds under interface-specific metrics as well, with ipSAE, DockQ, Fnat, and hydrogen-bond recovery largely unchanged across most targets despite substantial confidence gains (Figures \ref{fig:dockq} \& \ref{fig:ipsae}). Together, these results suggest that ipTM is highly sensitive to local variations in the embedding space and can be driven to high-confidence regimes via small perturbations, but that such increases do not necessarily imply improved agreement with experiment.
\paragraph{ipTM-guidance improves complex predictions in some cases.} Due to this sensitivity, ipTM optimization can sometimes coincide with improved interfacial geometry and experimental agreement. In the crystallographic complex \texttt{1YCS}, ipTM-based IT-Optimization recovers a helix-mediated contact missed by AF3 ~(Figure \ref{fig:iptm_opt_crystal}A) and increases the recovery of native inter-chain hydrogen bonds, yielding a higher average contact recovery rate, reflecting a larger fraction of experimentally observed contacts recovered per sampled structure (Table \ref{tab:1ycs_hbond}).
\begin{figure*}[t]
    \centering
    \includegraphics[width=0.98\linewidth]{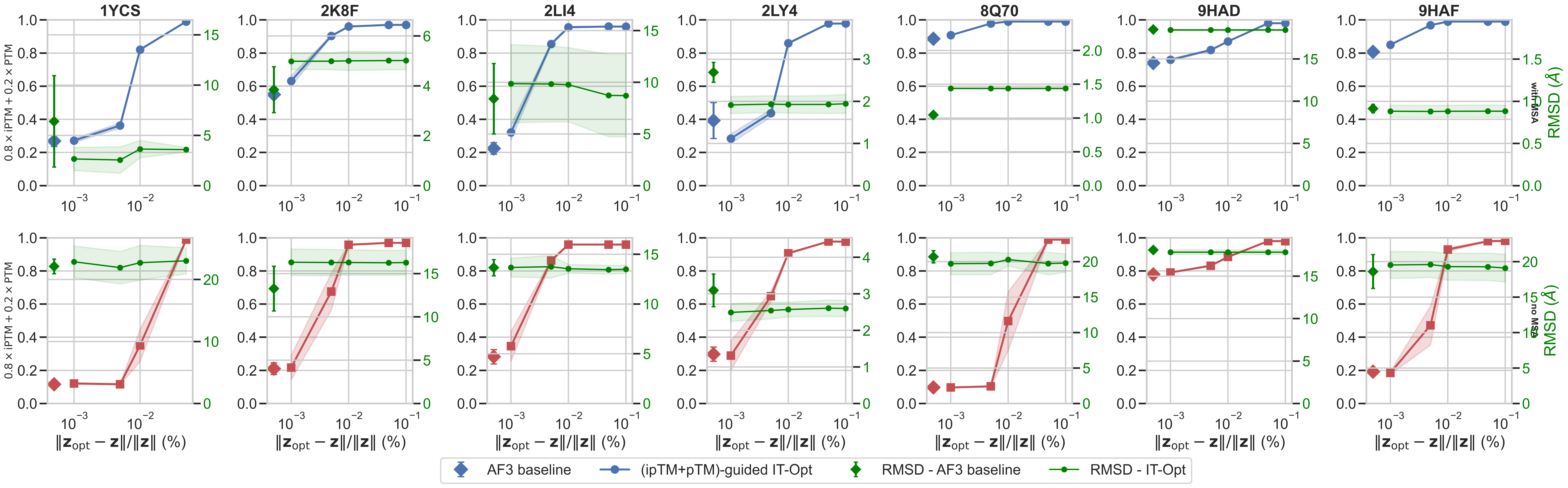}
    \caption{\textbf{Effect of inference-time optimization on ipTM across biomolecular complexes.} 
    The figure shows the impact of inference-time optimization targeting $0.8 \times \mathrm{ipTM} + 0.2 \times \mathrm{pTM}$ across seven biomolecular complexes, evaluated with and without MSA input (top and bottom rows, respectively). Diamonds denote baseline AF3 predictions at zero perturbation, while curves show optimized solutions obtained under increasing relative perturbation budgets $\|\ens{Z}_{\mathrm{opt}}-\ens{Z}\|/\|\ens{Z}\|$ in AF3 embedding space; shaded regions indicate variability across samples drawn given the optimized embedding. The left axis reports confidence, while the right axis reports backbone RMSD
    to experiment. Across several targets, we observe that relatively small embedding perturbations can increase confidence scores. However, these increases are not uniformly accompanied by improvements in structural agreement with experiment, particularly in the no-MSA setting. These results suggest that while embedding-space optimization can effectively modulate confidence metrics, confidence improvements should be interpreted cautiously as indicators of experimental accuracy.}
    \label{fig:perturbation}
\end{figure*}
A similar effect is observed for the NMR complex \texttt{2LY4}, which consists of a globular target bound to the transiently ordered N-terminal domain of \texttt{p53}. While unguided AF3 samples both an exterior binding mode associated with higher ipTM scores and an interior binding mode supported by the experimental NMR ensemble, ipTM-based inference-time optimization consistently favors the interior binding configuration across random initializations (Figure \ref{fig:iptm_opt_crystal}B). This shift is accompanied by a substantial enrichment in experimentally observed inter-chain hydrogen bonds within the peptide helical region, indicating that ipTM optimization in this case promotes structures that better reflect the experimentally supported binding mode (Table \ref{tab:2LY4_hbond}). In low-information settings, ipTM-based inference-time optimization generally does not correct major structural errors. For the designed complex \texttt{9HAF} without MSA input, AF3 mispredicts one chain (27\% agreement). Optimization increases secondary-structure agreement to 59.2\% (Figure \ref{fig:9haf_ss}), but does not recover the correct fold, suggesting confidence-driven optimization can partially reduce errors but is not reliably predictive in low-information regimes.
\vspace{-0.2em}
\section{Related Work}
\textbf{Experimental guidance.} Recent methods treat AF3 as a sequence-conditioned prior and incorporate experimental likelihoods from cryo-EM, NMR, and X-ray crystallography into the reverse SDE \cite{maddipatla2024generative, raghu2025cryoboltz, maddipatla2025inverse, maddipatla2025experiment}. Our method directly optimizes the Pairformer embeddings, thereby avoiding limitations imposed by the diffusion schedule and finite denoising steps. This approach further relates to enhanced-sampling methods like metadynamics \cite{barducci2011metadynamics}, as we utilize differentiable force fields to recover experimentally faithful, Boltzmann-distributed ensembles. Our Boltzmann reweighting draws on the broader paradigm of sampling from Boltzmann densities with learned generative models~\cite{noe2019boltzmann, akhoundsadegh2024iterated}, though it requires no additional training and instead performs post-hoc reweighting of AF3 samples at inference time. Related work has also explored the conformational landscape using RMSD-based repulsive potentials as a guidance signal \cite{richman2026unlocking}.

\textbf{MSA manipulation.} Prior works have shown that manipulating the MSA can recover distinct conformational states \cite{wayment2024predicting} or steer predictions toward desired structural targets \cite{bryant2024improved, fadini2026alphafold} by optimizing AlphaFold2's MSA profile.  However, these methods require expensive backpropagation through the Evoformer and assume a single structural target. We avoid these bottlenecks by optimizing a batch of embeddings directly. This allows for efficient inference-time steering and enables the generation of ensembles that capture the experimental heterogeneity inherent to proteins.

\textbf{ipTM benchmarking and confidence.}
The interface predicted template modeling (ipTM), introduced with AF-Multimer, is a confidence metric designed to assess the accuracy of predicted protein--protein interfaces by focusing on inter-chain geometry \cite{evans2022protein}. Benchmark studies show that ipTM correlates well with interface accuracy metrics such as DockQ \cite{mirabello2024dockq} across diverse heteromeric datasets \cite{zhai2025peppcbench}. However, ipTM does not capture all structural failure modes, particularly in flexible or disordered regions \cite{wee2024evaluation}, motivating the development of refined interface confidence metrics such as actual interface pTM (actifpTM) \cite{varga2025improved} and interaction prediction score from aligned errors (ipSAE) \cite{dunbrack2025assessment}.
\vspace{-0.4em}
\section{Conclusion}
\vspace{-0.5em}
In this work, we introduce an inference-time optimization (IT-Opt) framework that updates Pairformer embeddings to generate structural ensembles consistent with experimental data. Using a nested optimization scheme, the method mitigates key limitations of traditional guidance, including initialization bias and fixed sampling horizons. It also enables thermodynamically consistent ensemble generation. Across NMR and X-ray crystallography benchmarks, IT-Opt improves agreement with experimental observables and recovers challenging structural features. We further show that confidence metrics such as ipTM can be artificially inflated by small perturbations, revealing an important limitation of current confidence frameworks. In future work, we aim to extend this framework to additional structural modalities, including single-particle cryo-EM.
\section*{Acknowledgments}
Ailie Marx acknowledges the financial support of the Helmsley Fellowships Program for Sustainability and Health. Alex Bronstein is supported by the Israeli Science Foundation grant 1834/24 and ISTA HPC Cluster. Paul Schanda is supported by the Austrian Science Fund (FWF, grant numbers I5812-B and I6223). Martin Pacesa is supported by the European Research Council (ERC) Starting Grant 101220545 (NAMPify) and by the University Research Priority Program of the University of Zurich (URPP) Innovative Therapies in Rare Diseases (ITINERARE). Sanketh Vedula is supported in part by funding from the Eric and Wendy Schmidt Center at the Broad Institute of MIT and Harvard.
\section*{Impact statement}
This work introduces inference-time optimization methods for generating experiment-grounded protein ensembles using AF3. Our approach enables more accurate and thermodynamically plausible ensemble generation from X-ray crystallographic and NMR data, potentially accelerating structure determination workflows. The Boltzmann reweighting framework bridges machine learning predictions with physically meaningful conformational distributions. Our ipTM optimization analysis reveals vulnerabilities in confidence metrics used for protein-binder design, which could help reduce false discovery rates in therapeutic development. Given the central role of proteins in biological processes and drug discovery, these advances may benefit both basic research and biomedical applications. We foresee no specific ethical concerns arising from this work.


\bibliography{icml2026}
\bibliographystyle{icml2026}

\newpage
\appendix

\onecolumn


\clearpage
\newpage
\section{Additional figures, tables, and pseudocodes}
\subsection{Figures}
\begin{figure*}[!h]
    \centering
    \includegraphics[width=\linewidth]{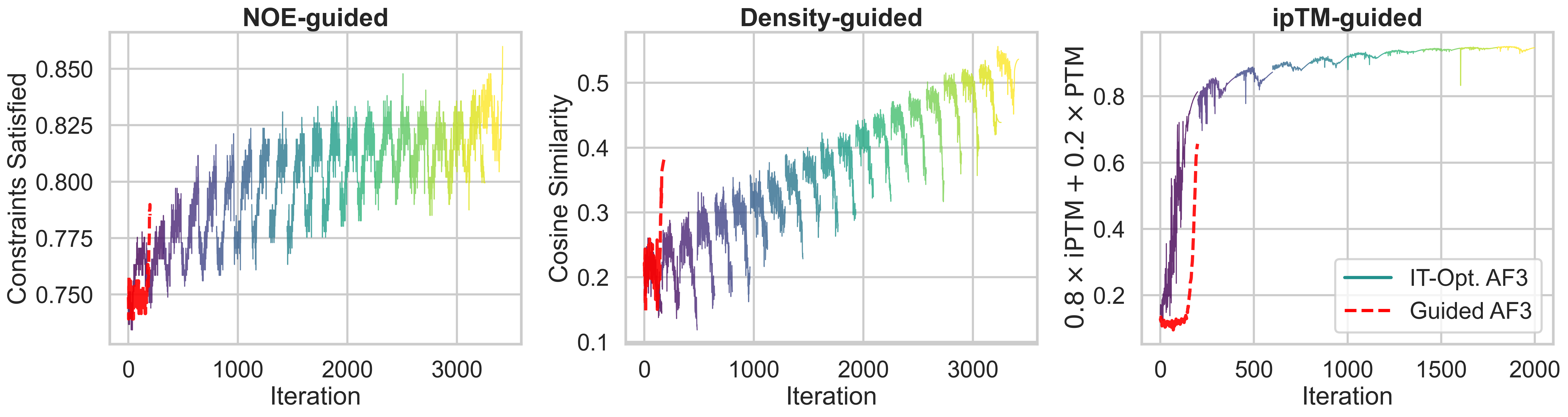}
    \caption{\textbf{Convergence behavior of guided versus inference-time optimized (IT-Opt) AlphaFold3 across experimental modalities.} Convergence curves are shown for NOE-based IT-Opt on an NMR structure (\texttt{1U0P}), density-based IT-Opt on an X-ray structure (\texttt{6I42}), and ipTM-based IT-Opt on a protein complex (\texttt{2L14}). In each panel, dashed red curves correspond to standard guided AlphaFold3, where guidance is applied during sampling and with a fixed budget of 200 diffusion iterations. Solid curves show inference-time optimization, where embedding updates are applied at each diffusion step (with early stopping at $t=160$ in inner loop) and reused across successive diffusion rounds. In the final outer loop, (the last $200$ iterations of IT-Opt), the optimized embeddings initialize coordinate-space guidance, yielding the late-stage performance improvement characteristic of traditional guidance.}
    \label{fig:convergence}
\end{figure*}

\begin{figure*}[!h]
    \centering
    \includegraphics[width=\linewidth]{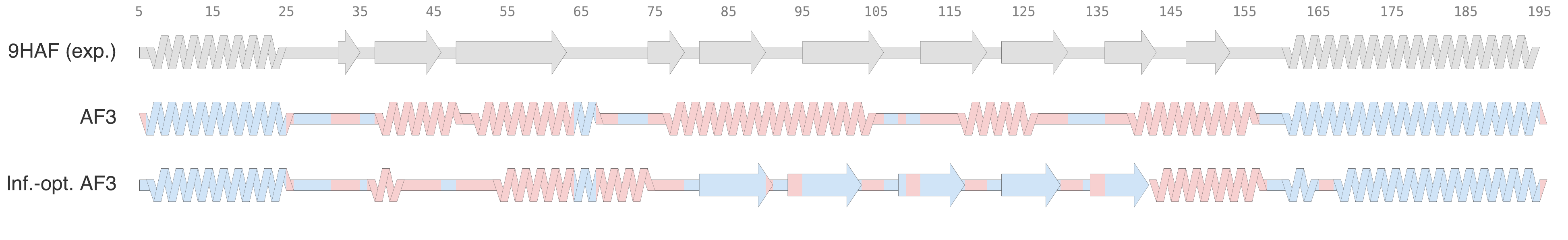}
    \caption{\textbf{Secondary-structure comparison for complex \texttt{9HAF} under ipTM-based inference-time optimization without MSA.} Schematic comparison of secondary-structure assignments along chain A of complex \texttt{9HAF}. The top row shows the experimentally determined secondary structure, while the middle and bottom rows correspond to the unguided AF3 prediction without MSA input and the IT-optimized AF3 prediction targeting the combined ipTM+pTM objective, respectively. Helices and $\beta$-strands are shown as ribbons and arrows, with residue-wise agreement (blue) and disagreement (red) relative to the experimental annotation indicated by color. In the unguided prediction, the structure is dominated by helical assignments and exhibits low secondary-structure agreement with the experimental annotation (41.9\%). Following ipTM-based IT-optimization, large regions of misassigned helices are corrected, and extended $\beta$-strand regions are recovered, increasing secondary-structure agreement to 59.2\%.}
\label{fig:9haf_ss}
\end{figure*}

\begin{figure*}[!h]
    \centering
    \includegraphics[width=0.95\linewidth]{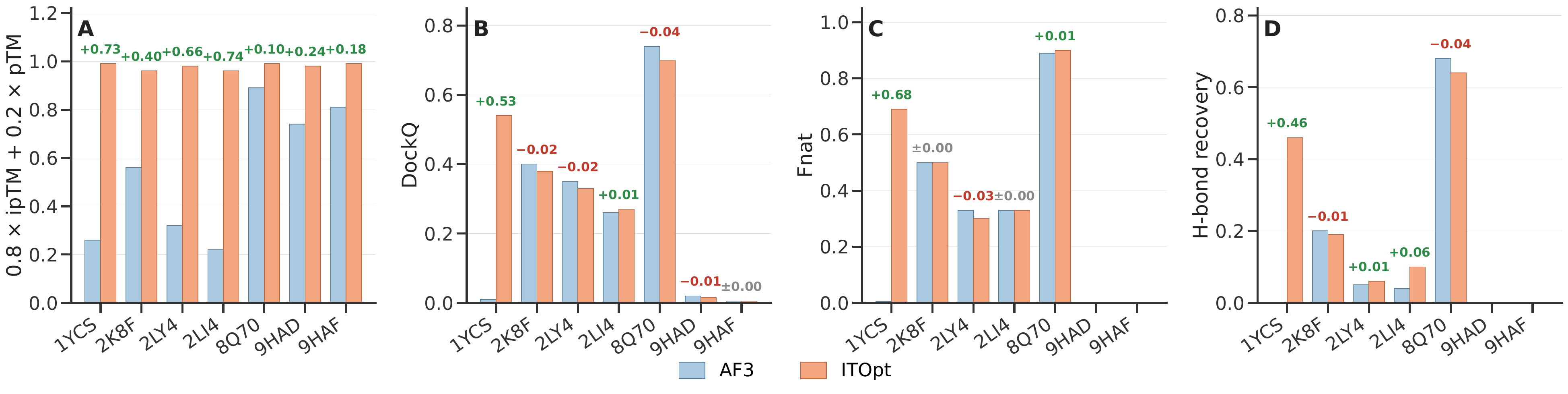}
    \caption{\textbf{ipTM-based IT-Optimization inflates confidence without improving structural accuracy.} The figure shows the impact of IT-Optimization across seven biomolecular complexes using $0.8 \times \mathrm{ipTM} + 0.2 \times \mathrm{pTM}$ at a relative perturbation budget ($\|\ens{Z}_{\mathrm{opt}}-\ens{Z}\|/\|\ens{Z}\|$) of $0.1$ in AF3 embedding space. \textbf{(A)} IT-Opt consistently increases confidence scores. \textbf{(B-D)} Despite this, structural accuracy is largely unchanged in $6/7$ cases across DockQ \cite{mirabello2024dockq} \textbf{(B)}, Fraction of native contacts (Fnat) \textbf{(C)}, and Hydrogen-bond recovery \textbf{(D)}. The single improving case (\texttt{1YCS}) shows confidence gains aligned with genuine structural recovery across all metrics. Together, these results indicate that ipTM is susceptible to inflation by minor embedding perturbations. Per-target deltas (IT-Opt $-$ AF3) are shown above each bar pair; green and red indicate gains and losses respectively, while gray indicates no change.}
    \label{fig:dockq}
\end{figure*}

\begin{figure*}
    \centering
    \includegraphics[width=\linewidth]{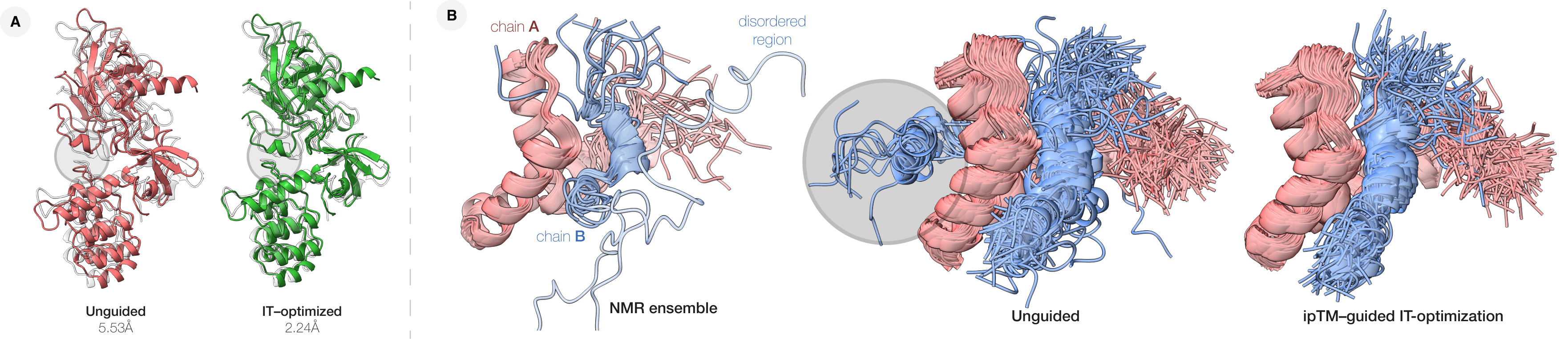}
    \caption{\textbf{Selected examples where ipTM-based inference-time optimization improves agreement with experiment.} (A) Complex \texttt{1YCS}. The experimental crystal structure is shown in white, the unguided AlphaFold3 prediction in \textcolor{darkred}{red}, and the ipTM-optimized prediction in \textcolor{darkgreen}{green}. In this case, the unguided AF3 prediction fails to form a specific helix-helix contact (left), whereas ipTM-based IT-Opt recovers this interaction (right), resulting in closer agreement with the experimental structure. (B) Complex \texttt{2LY4}, comprising a globular target (chain A) and a peptide derived from \texttt{p53}. The experimental NMR ensemble (left) supports an interior binding mode. Across $100$ unguided AF3 samples (middle), two binding modes are observed: an exterior conformation associated with higher ipTM ($0.55$) and an interior conformation with lower ipTM ($0.22$). In this example, ipTM-based IT-Opt (right) consistently samples the interior binding mode across $100$ random initializations, producing structures that better align with the experimental ensemble.}
    \label{fig:iptm_opt_crystal}
\end{figure*}

\begin{figure*}
    \centering
    \includegraphics[width=\linewidth]{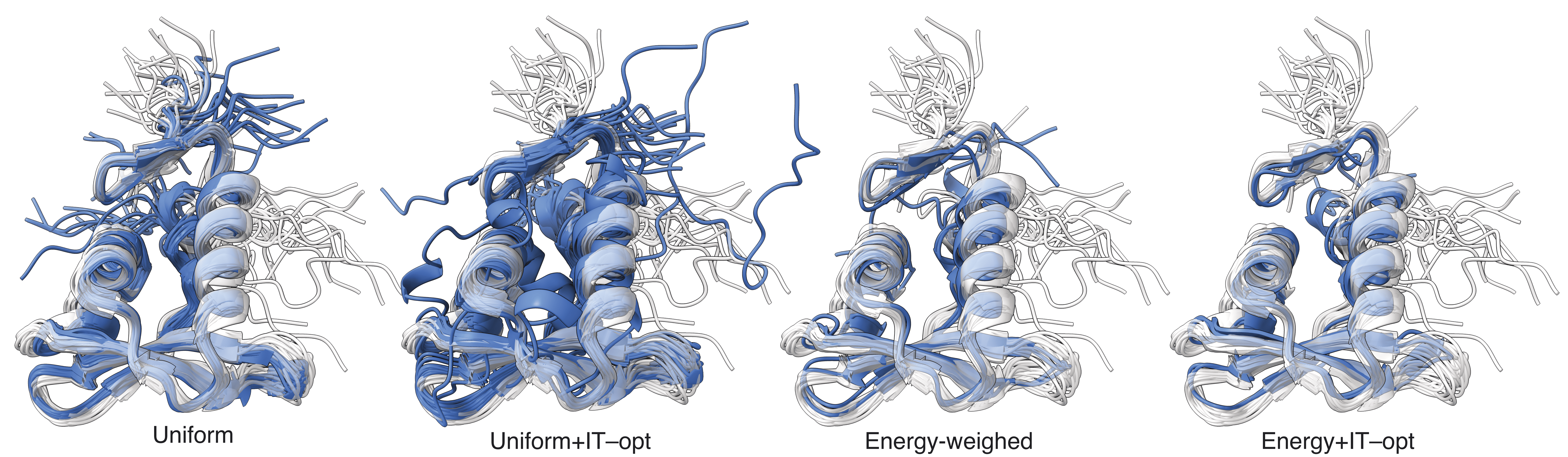}
    \caption{Conformational ensembles for \texttt{2K0M} from Guidance (uniform), Inference-time Optimization (uniform), Energy-weighted Guidance, and Energy-weighted Inference-time Optimization (blue) overlaid with the NMR ensemble derived from the same NOESY data (white). Uniform ensembles weight all sampled conformations equally, whereas energy-weighted ensembles display only conformations with non-negligible Boltzmann weight.}
    \label{fig:nmr_ensemble}
\end{figure*}

\begin{figure*}
    \centering
    \includegraphics[width=\linewidth]{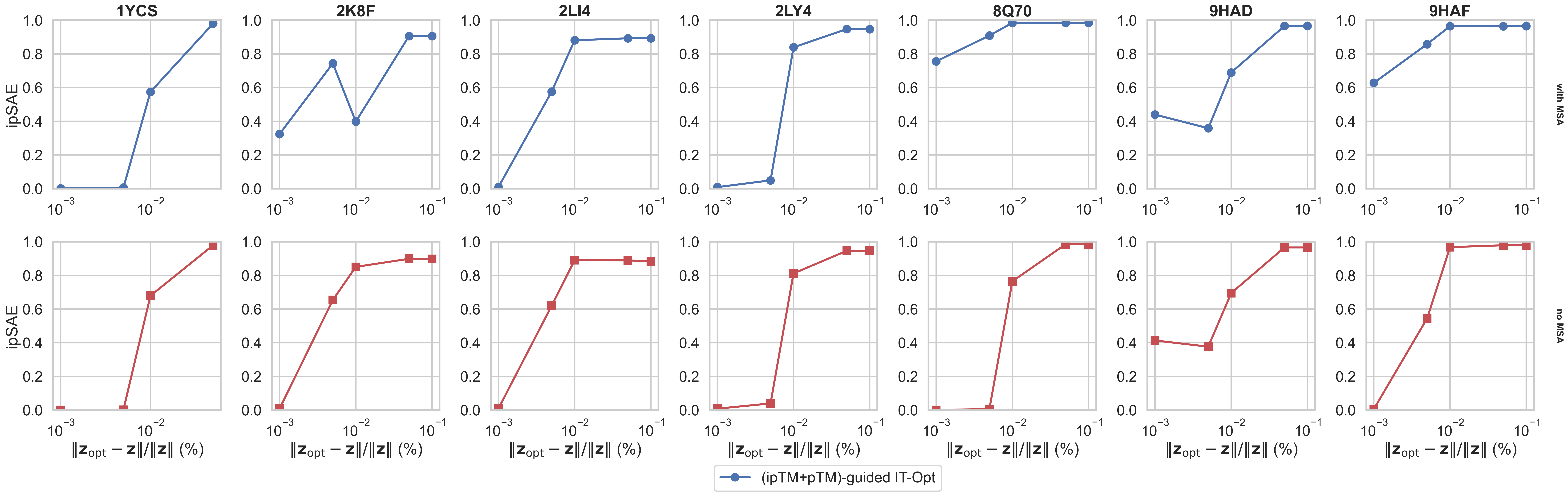}
    \caption{\textbf{Effect of inference-time optimization on Interaction Prediction Score from Aligned Errors (ipSAE) across biomolecular complexes.} The figure depicts the ipSAE values obtained as a result of (ipTM + pTM)-based inference-time optimization across seven biomolecular complexes, evaluated with and without MSA input. Curves show optimized solutions obtained under increasing relative perturbation budgets $\|\ens{Z}_{\mathrm{opt}} - \ens{Z}\| / \|\ens{Z}\|$ in AF3 embedding space. Across targets, increases in the optimized (ipTM + pTM) objective are accompanied by consistent increases in ipSAE, with ipSAE frequently approaching near-saturation values at moderate perturbation magnitudes. Similar trends are observed in both MSA-rich and MSA-poor settings, although sensitivity to perturbation varies across targets. These results indicate that, like ipTM, ipSAE is highly sensitive to small embedding-space perturbations and can be driven to high values through inference-time optimization.}
    \label{fig:ipsae}
\end{figure*}

\begin{figure*}[!h]
    \centering
    \includegraphics[width=\linewidth]{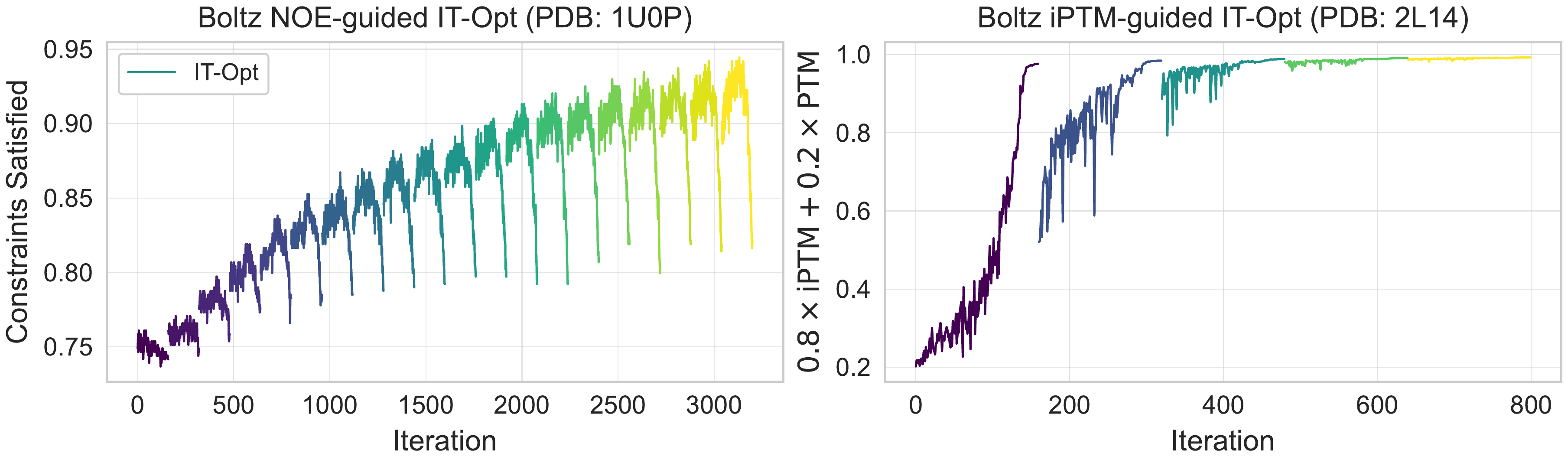}
    \caption{\textbf{IT-Optimization convergence generalizes across protein generative models.} Convergence curves for NOE-based IT-Optimization on \texttt{1U0P} (left) and ipTM-based IT-Optimization on \texttt{2L14} (right) using Boltz-2~\cite{passaro2025boltz2}, following the same optimization protocol as in Figure ~\ref{fig:convergence}. The monotonic outer-loop improvement matches the behavior observed for Protenix and AlphaFold3 under analogous likelihood signals, demonstrating that IT-Opt is agnostic to the underlying protein generative model.} 
    \label{fig:boltz_convergence}
\end{figure*}

\begin{figure*}[!h]
    \centering
    \includegraphics[width=0.8\linewidth]{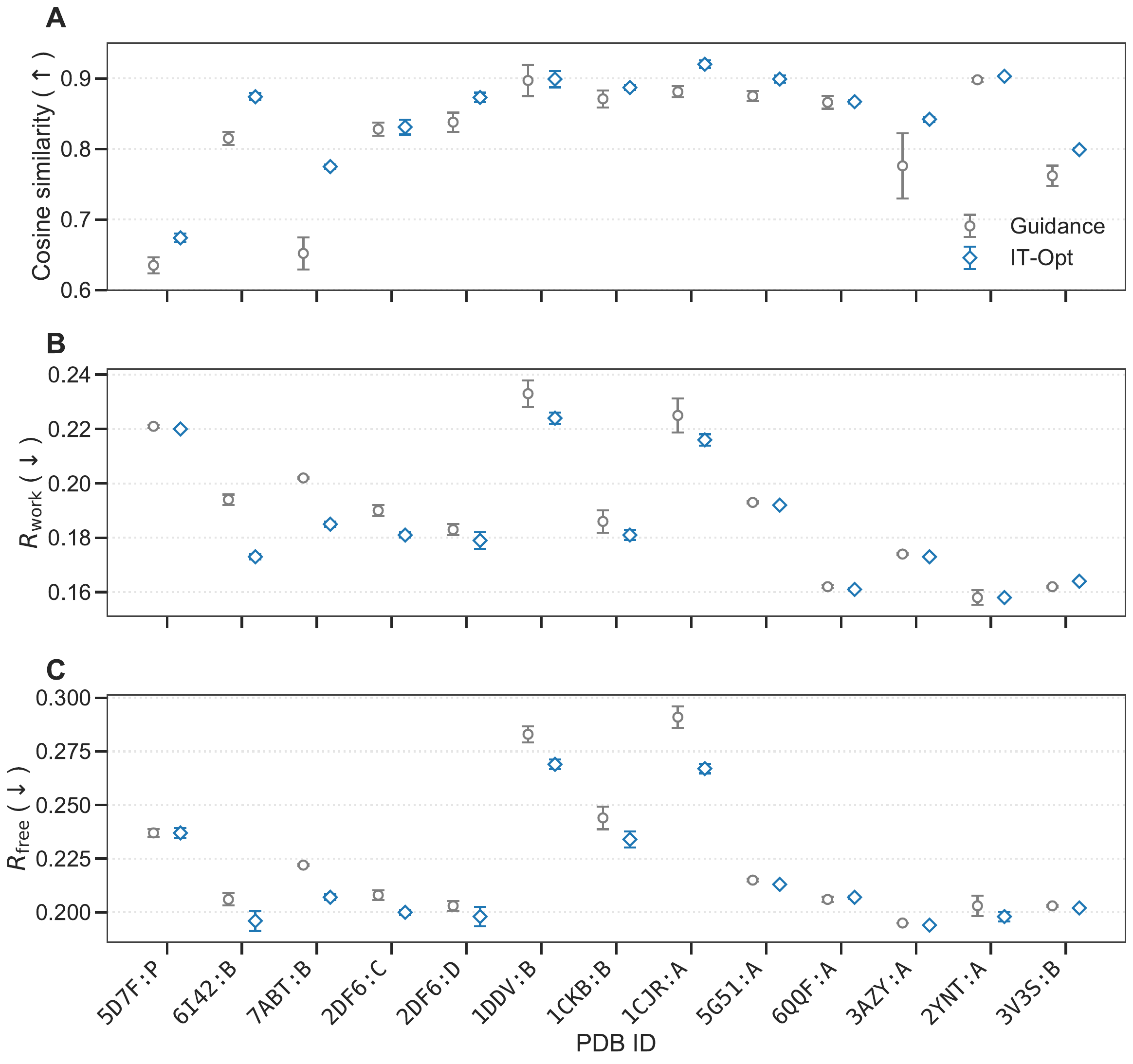}
  \caption{\textbf{IT-Optimization exhibits improved stability compared to Guidance.} Crystallographic metrics computed across five independent random-seed runs on all proteins from Table~\ref{tab:xray_pdb_info} for electron density-guided AlphaFold3 (\emph{Guidance}) and electron density-based Inference-time Optimization (\emph{IT-Opt}). Markers indicate the mean and error bars the standard deviation across runs. IT-Opt consistently achieves improved agreement with the experimental electron density, reflected by higher cosine similarity \textbf{(A)}, lower $R_{\mathrm{work}}$ \textbf{(B)}, and lower $R_{\mathrm{free}}$ \textbf{(C)}, while also exhibiting reduced run-to-run variability across all three metrics.}
  \label{fig:seeding_xray}
\end{figure*}

\clearpage
\subsection{Tables}
\begin{table}[ht]
\centering
\caption{\textbf{NMR protein structures used for evaluation in our experiments}. For each structure, we report the amino acid sequence length and the number of experimentally derived Nuclear Overhauser Effect (NOE) distance restraints.}
\label{tab:nmr_pdb_info}
\begin{tabular}{c c c}
\toprule
\textbf{PDB ID} & \textbf{Seq Length} & \textbf{\# NOEs} \\
\midrule
\texttt{2K53}  & $70$ & $932$ \\
\texttt{6SVC}  & $35$ & $1147$ \\
\texttt{1U0P}  & $27$ & $198$ \\
\texttt{1DEC}  & $39$ & $493$ \\
\texttt{6SOW}  & $58$ & $1120$ \\
\texttt{2KPB}  & $26$ & $267$ \\
\texttt{5L82}  & $37$ & $485$ \\
\texttt{2MRW}  & $40$ & $159$ \\
\texttt{1D3Z}  & $76$ & $2727$ \\
\texttt{2B5B}  & $36$ & $96$ \\
\texttt{1RKL}  & $36$ & $218$ \\
\texttt{2KNS}  & $33$ & $233$ \\
\texttt{2RN7}  & $108$ & $353$ \\
\texttt{2K0M}  & $104$ & $1834$ \\
\texttt{2HEQ}  & $84$ & $485$ \\
\texttt{2K57}  & $55$ & $982$ \\
\texttt{1PQX}  & $91$ & $1490$\\
\texttt{1YEZ}  & $68$ & $1237$\\
\texttt{2KCD}  & $120$ & $1075$\\
\texttt{2HN8}  & $38$ & $719$\\
\bottomrule
\end{tabular}
\end{table}

\begin{table}[htb]
\centering
\caption{\textbf{X-ray crystallography protein used for evaluation in our experiments}. For each structure, we report the chain ID, the residue range, the region sub-sequence, and the resolution of the density map. The top group consists of protein-bound peptides; the bottom group consists of alternate locations (altlocs).}
\label{tab:xray_pdb_info}
\begin{tabular}{c c c c}
\toprule
\textbf{PDB ID} & \textbf{Residue Range} & \textbf{Region Sequence} & \textbf{Resolution (\AA)} \\
\midrule
\texttt{5D7F:P} & $1$--$10$   & \texttt{GPWDSVARVL}      & $1.30$ \\
\texttt{6I42:B} & $48$--$60$  & \texttt{VVHGVATVAEKTK}   & $1.38$ \\
\texttt{7ABT:B} & $1$--$8$    & \texttt{PRPRPRPR}       & $1.31$ \\
\texttt{2DF6:C} & $180$--$197$& \texttt{PPVIAPRPEHTKSIYTRS} & $1.30$ \\
\texttt{2DF6:D} & $180$--$191$& \texttt{PPVIAPRPEHTK}   & $1.30$ \\
\texttt{1DDV:B} & $1001$--$1006$& \texttt{TPPSPF}   & $1.90$ \\
\texttt{1CKB:B} & $1$--$8$& \texttt{PPPVPPRR}   & $1.90$ \\
\texttt{1CJR:A} & $1$--$15$& \texttt{KETAAAKFERQHMDS}   & $2.30$ \\

\midrule
\texttt{5G51:A} & $291$--$295$& \texttt{SASDQ}          & $1.45$ \\
\texttt{6QQF:A} & $69$--$74$  & \texttt{TPGSRN}         & $1.95$ \\
\texttt{3AZY:A} & $157$--$163$& \texttt{GGASIGV}        & $1.65$ \\
\texttt{2YNT:A} & $183$--$185$& \texttt{GNG} & $1.60$ \\
\texttt{3V3S:B} & $246$--$249$& \texttt{AQER} & $1.90$ \\
\bottomrule
\end{tabular}
\end{table}

\begin{table}[ht]
\centering
\caption{\textbf{Quantitative evaluation of NOE restraint violations for structures from Table \ref{tab:nmr_pdb_info}}. The violation percentage (\textbf{Viol \%} ($\downarrow$)) denotes the fraction of NOE distance restraints that are not satisfied, while the violation distance (\textbf{Viol. \AA} ($\downarrow$)) reports the median distance of unsatisfied NOE restraints. Here, \emph{Guidance} refers to ensembles derived using NOE-guided AlphaFold3 \cite{maddipatla2025inverse}, whereas \emph{IT-Opt} denotes ensembles derived using NOE-based inference-time optimization. Entries highlighted in \textcolor{darkgreen}{\textbf{green}} correspond to ensembles that outperform all baseline methods, excluding the PDB, while entries highlighted in \textcolor{darkblue}{\textbf{blue}} outperform all baselines, including the PDB.}
\label{tab:nmr_uniform}
\begin{tabular}{c cc|cc|cc|cc|cc}
\toprule
& \multicolumn{2}{c}{\textbf{PDB}}
& \multicolumn{2}{c}{\textbf{AlphaFold3}}
& \multicolumn{2}{c}{\textbf{AlphaFlow}}
& \multicolumn{2}{c}{\textbf{Guidance}}
& \multicolumn{2}{c}{\textbf{IT-Opt}} \\
\cmidrule(lr){2-3}
\cmidrule(lr){4-5}
\cmidrule(lr){6-7}
\cmidrule(lr){8-9}
\cmidrule(lr){10-11}
\textbf{PDB ID}
& \textbf{Viol. \%} & \textbf{Viol. \AA}
& \textbf{Viol. \%} & \textbf{Viol. \AA}
& \textbf{Viol. \%} & \textbf{Viol. \AA}
& \textbf{Viol. \%} & \textbf{Viol. \AA}
& \textbf{Viol. \%} & \textbf{Viol. \AA} \\
\midrule
\texttt{2K53} & $29.6$ & $0.35$ & $41.1$ & $0.53$ & $40.5$ & $0.47$ & $28.1$ & $0.20$ & $\textcolor{darkblue}{\mathbf{20.6}}$ & $\textcolor{darkblue}{\mathbf{0.13}}$ \\
\texttt{6SVC} & $51.8$ & $0.94$ & $50.6$ & $0.44$ & $49.7$ & $0.41$ & $51.4$ & $0.38$ & $\textcolor{darkblue}{\mathbf{44.4}}$ & $\textcolor{darkblue}{\mathbf{0.33}}$ \\
\texttt{1U0P} & $57.5$ & $0.13$ & $60.4$ & $0.35$ & $53.1$ & $0.34$ & $50.2$ & $0.22$ & $\textcolor{darkblue}{\mathbf{44.4}}$ & $\textcolor{darkgreen}{\mathbf{0.20}}$ \\
\texttt{1DEC} & $12.4$ & $0.07$ & $32.1$ & $0.75$ & $45.6$ & $2.84$ & $19.9$ & $0.36$ & $\textcolor{darkgreen}{\mathbf{14.4}}$ & $\textcolor{darkgreen}{\mathbf{0.23}}$ \\
\texttt{6SOW} & $32.0$ & $0.51$ & $43.3$ & $0.69$ & $43.7$ & $0.71$ & $30.0$ & $0.42$ & $\textcolor{darkblue}{\mathbf{28.6}}$ & $\textcolor{darkblue}{\mathbf{0.36}}$ \\
\texttt{2KPB} & $27.2$ & $0.52$ & $62.3$ & $1.35$ & $57.7$ & $1.28$ & $45.3$ & $1.24$ & $\textcolor{darkgreen}{\mathbf{40.7}}$ & $\textcolor{darkgreen}{\mathbf{0.63}}$ \\
\texttt{5L82} & $31.0$ & $0.71$ & $58.8$ & $0.95$ & $57.0$ & $1.19$ & $49.4$ & $0.86$ & $\textcolor{darkgreen}{\mathbf{44.0}}$ & $\textcolor{darkgreen}{\mathbf{0.62}}$ \\
\texttt{2MRW} & $23.3$ & $0.10$ & $38.4$ & $0.14$ & $33.9$ & $0.12$ & $\textcolor{darkgreen}{\mathbf{22.6}}$ & $0.30$ & $24.5$ & $\textcolor{darkblue}{\mathbf{0.06}}$ \\
\texttt{1D3Z} & $24.9$ & $0.39$ & $23.8$ & $0.34$ & $24.0$ & $0.36$ & $13.5$ & $0.18$ & $\textcolor{darkblue}{\mathbf{11.3}}$ & $\textcolor{darkblue}{\mathbf{0.14}}$ \\
\texttt{2B5B} & $30.6$ & $0.39$ & $29.7$ & $5.80$ & $40.7$ & $4.87$ & $29.6$ & $3.58$ & $\textcolor{darkblue}{\mathbf{28.7}}$ & $\textcolor{darkgreen}{\mathbf{0.73}}$ \\
\texttt{1RKL} & $53.0$ & $0.36$ & $57.2$ & $0.63$ & $56.7$ & $0.59$ & $54.0$ & $0.42$ & $\textcolor{darkblue}{\mathbf{48.8}}$ & $\textcolor{darkblue}{\mathbf{0.33}}$ \\
\texttt{2KNS} & $58.0$ & $0.36$ & $65.5$ & $0.48$ & $58.0$ & $0.59$ & $50.4$ & $0.39$ & $\textcolor{darkblue}{\mathbf{49.6}}$ & $\textcolor{darkblue}{\mathbf{0.30}}$ \\
\texttt{2RN7} & $11.4$ & $0.44$ & $9.4$ & $0.28$ & $11.4$ & $0.33$ & $5.3$ & $0.08$ & $\textcolor{darkblue}{\mathbf{4.7}}$ & $\textcolor{darkblue}{\mathbf{0.05}}$ \\
\texttt{2K0M} & $12.8$ & $0.29$ & $17.9$ & $0.48$ & $19.6$ & $0.61$ & $11.5$ & $0.32$ & $\textcolor{darkblue}{\mathbf{10.5}}$ & $\textcolor{darkblue}{\mathbf{0.22}}$ \\
\texttt{2HEQ} & $18.2$ & $0.36$ & $17.7$ & $0.52$ & $31.2$ & $0.62$ & $10.0$ & $\textcolor{darkblue}{\mathbf{0.12}}$ & $\textcolor{darkblue}{\mathbf{9.5}}$ & $0.31$ \\
\texttt{2K57} & $12.5$ & $0.50$ & $17.8$ & $0.52$ & $19.2$ & $0.51$ & $8.2$ & $0.19$ & $\textcolor{darkblue}{\mathbf{7.9}}$ & $\textcolor{darkblue}{\mathbf{0.06}}$ \\
\texttt{1PQX} & $1.2$ & $0.05$ & $5.1$ & $0.71$ & $5.2$ & $0.82$ & $3.3$ & $0.43$ & $\textcolor{darkgreen}{\mathbf{2.2}}$ & $\textcolor{darkgreen}{\mathbf{0.08}}$ \\
\texttt{1YEZ} & $12.7$ & $0.46$ & $13.8$ & $0.41$ & $17.0$ & $0.46$ & $7.5$ & $0.24$ & $\textcolor{darkblue}{\mathbf{7.0}}$ & $\textcolor{darkblue}{\mathbf{0.12}}$ \\
\texttt{2KCD} & $23.9$ & $0.55$ & $21.2$ & $0.46$ & $39.7$ & $1.38$ & $13.8$ & $0.25$ & $\textcolor{darkblue}{\mathbf{10.5}}$ & $\textcolor{darkblue}{\mathbf{0.09}}$ \\
\texttt{2HN8} & $40.5$ & $0.08$ & $54.0$ & $0.34$ & $49.6$ & $0.32$ & $41.6$ & $0.24$ & $\textcolor{darkblue}{\mathbf{34.2}}$ & $\textcolor{darkgreen}{\mathbf{0.15}}$ \\
\bottomrule
\end{tabular}
\end{table}

\begin{table}[ht]
\centering
\caption{\textbf{Quantitative evaluation of NOE restraint violations for structures from Table \ref{tab:nmr_pdb_info}}. The violation percentage (\textbf{Viol \%} ($\downarrow$)) denotes the fraction of NOE distance restraints that are not satisfied, while the violation distance (\textbf{Viol. \AA} ($\downarrow$)) reports the median distance of unsatisfied NOE restraints. Here, \emph{Guidance (Energy)} refers to ensembles derived using NOE-guided AlphaFold3 \cite{maddipatla2025inverse}, with Boltzmann-weighted ensemble statistics at $300 \mathrm{K}$, whereas \emph{IT-Opt (Energy)} denotes ensembles obtained via NOE-based inference-time optimization, with Boltzmann weighting at $300\mathrm{K}$. Entries highlighted in \textcolor{darkgreen}{\textbf{green}} correspond to ensembles that outperform all baseline methods, excluding the PDB, while entries highlighted in \textcolor{darkblue}{\textbf{blue}} outperform all baselines, including the PDB.}
\label{tab:nmr_energy}
\begin{tabular}{c cc|cc|cc|cc|cc}
\toprule
& \multicolumn{2}{c}{\textbf{PDB}}
& \multicolumn{2}{c}{\textbf{Guidance}}
& \multicolumn{2}{c}{\textbf{Guidance (Energy)}}
& \multicolumn{2}{c}{\textbf{IT-Opt}}
& \multicolumn{2}{c}{\textbf{IT-Opt (Energy)}} \\
\cmidrule(lr){2-3}
\cmidrule(lr){4-5}
\cmidrule(lr){6-7}
\cmidrule(lr){8-9}
\cmidrule(lr){10-11}
\textbf{PDB ID}
& \textbf{Viol. \%} & \textbf{Viol. \AA}
& \textbf{Viol. \%} & \textbf{Viol. \AA}
& \textbf{Viol. \%} & \textbf{Viol. \AA}
& \textbf{Viol. \%} & \textbf{Viol. \AA}
& \textbf{Viol. \%} & \textbf{Viol. \AA} \\
\midrule
\texttt{2K53} & $29.6$ & $0.35$ & $28.1$ & $0.20$ & $32.7$ & $0.29$ & $20.6$ & $0.13$ & $\textcolor{darkblue}{\mathbf{19.0}}$ & $\textcolor{darkblue}{\mathbf{0.18}}$ \\
\texttt{6SVC} & $51.8$ & $0.94$ & $51.4$ & $0.38$ & $42.3$ & $0.38$ & $44.4$ & $\textcolor{darkblue}{\mathbf{0.38}}$ & $\textcolor{darkblue}{\mathbf{38.3}}$ & $0.42$ \\
\texttt{1U0P} & $57.5$ & $0.13$ & $50.2$ & $0.22$ & $49.3$ & $0.22$ & $44.4$ & $0.20$ & $\textcolor{darkblue}{\mathbf{41.1}}$ & $\textcolor{darkgreen}{\mathbf{0.15}}$ \\
\texttt{1DEC} & $12.4$ & $0.07$ & $19.9$ & $0.36$ & $18.7$ & $0.58$ & $14.4$ & $0.23$ & $\textcolor{darkblue}{\mathbf{11.6}}$ & $\textcolor{darkgreen}{\mathbf{0.16}}$ \\
\texttt{6SOW} & $32.0$ & $0.51$ & $30.0$ & $0.42$ & $30.2$ & $0.43$ & $28.6$ & $0.36$ & $\textcolor{darkblue}{\mathbf{20.0}}$ & $\textcolor{darkblue}{\mathbf{0.27}}$ \\
\texttt{2KPB} & $27.2$ & $0.52$ & $45.3$ & $1.24$ & $35.7$ & $0.56$ & $40.7$ & $0.63$ & $\textcolor{darkblue}{\mathbf{34.8}}$ & $\textcolor{darkblue}{\mathbf{0.44}}$ \\
\texttt{5L82} & $31.0$ & $0.71$ & $49.4$ & $0.86$ & $49.2$ & $1.07$ & $44.0$ & $0.62$ & $\textcolor{darkgreen}{\mathbf{34.4}}$ & $\textcolor{darkblue}{\mathbf{0.42}}$ \\
\texttt{2MRW} & $23.3$ & $0.10$ & $22.6$ & $0.30$ & $\textcolor{darkblue}{\mathbf{17.6}}$ & $0.16$ & $24.5$ & $0.06$ & $18.2$ & $\textcolor{darkblue}{\mathbf{0.05}}$ \\
\texttt{1D3Z} & $24.9$ & $0.39$ & $13.5$ & $0.18$ & $17.3$ & $0.25$ & $\textcolor{darkblue}{\mathbf{11.3}}$ & $\textcolor{darkblue}{\mathbf{0.14}}$ & $16.4$ & $0.25$ \\
\texttt{2B5B} & $30.6$ & $0.39$ & $29.6$ & $3.58$ & $30.6$ & $4.12$ & $28.7$ & $\textcolor{darkgreen}{\mathbf{0.73}}$ & $\textcolor{darkblue}{\mathbf{27.8}}$ & $2.04$ \\
\texttt{1RKL} & $53.0$ & $0.36$ & $54.0$ & $0.42$ & $54.4$ & $0.44$ & $48.8$ & $0.33$ & $\textcolor{darkblue}{\mathbf{42.3}}$ & $\textcolor{darkblue}{\mathbf{0.23}}$ \\
\texttt{2KNS} & $58.0$ & $0.36$ & $50.4$ & $0.39$ & $47.1$ & $0.43$ & $49.6$ & $0.30$ & $\textcolor{darkblue}{\mathbf{39.1}}$ & $\textcolor{darkblue}{\mathbf{0.28}}$ \\
\texttt{2RN7} & $11.4$ & $0.44$ & $5.3$ & $0.08$ & $4.4$ & $0.26$ & $4.7$ & $\textcolor{darkblue}{\mathbf{0.05}}$ & $\textcolor{darkblue}{\mathbf{4.4}}$ & $0.07$ \\
\texttt{2K0M} & $12.8$ & $0.29$ & $11.5$ & $0.32$ & $10.5$ & $0.28$ & $10.5$ & $0.22$ & $\textcolor{darkblue}{\mathbf{4.4}}$ & $\textcolor{darkblue}{\mathbf{0.11}}$ \\
\texttt{2HEQ} & $18.2$ & $0.36$ & $10.0$ & $0.12$ & $6.1$ & $0.28$ & $9.5$ & $0.31$ & $\textcolor{darkblue}{\mathbf{4.3}}$ & $\textcolor{darkblue}{\mathbf{0.09}}$ \\
\texttt{2K57} & $12.5$ & $0.50$ & $8.2$ & $0.19$ & $11.1$ & $0.29$ & $7.9$ & $0.06$ & $\textcolor{darkblue}{\mathbf{5.9}}$ & $\textcolor{darkblue}{\mathbf{0.09}}$ \\
\texttt{1PQX} & $1.2$ & $0.05$ & $3.3$ & $0.43$ & $1.1$ & $0.08$ & $2.2$ & $0.08$ & $\textcolor{darkgreen}{\mathbf{1.2}}$ & $\textcolor{darkgreen}{\mathbf{0.05}}$\\
\texttt{1YEZ} & $12.7$ & $0.46$ & $7.4$ & $0.24$ & $12.6$ & $0.41$ & $7.0$ & $0.12$ & $\textcolor{darkblue}{\mathbf{4.5}}$ & $\textcolor{darkblue}{\mathbf{0.10}}$ \\
\texttt{2KCD} & $23.9$ & $0.55$ & $13.8$ & $0.25$ & $12.8$ & $0.46$  & $10.5$ & $0.09$ & $\textcolor{darkblue}{\mathbf{6.9}}$ & $\textcolor{darkblue}{\mathbf{0.09}}$ \\
\texttt{2HN8} & $40.5$ & $0.08$ & $41.6$ & $0.24$ & $43.3$ & $0.29$ & $34.2$ & $0.15$ & $\textcolor{darkblue}{\mathbf{30.3}}$ & $\textcolor{darkgreen}{\mathbf{0.14}}$ \\
\bottomrule
\end{tabular}
\end{table}

\begin{table}[ht]
\centering
\caption{\textbf{Quantitative evaluation of NOE violation percentage for structures from Table \ref{tab:nmr_pdb_info} on additional benchmarks}. The violation percentage (\textbf{Viol. \%} ($\downarrow$)) denotes the fraction of NOE distance restraints that are not satisfied. Here, \emph{Guidance} refers to ensembles derived using NOE-guided AlphaFold3 \cite{maddipatla2025inverse}, whereas \emph{IT-Opt} denotes ensembles derived using NOE-based inference-time optimization. Entries highlighted in \textbf{bold} indicate the best result.}
\label{tab:nmr_viol_pct_additional}
\begin{tabular}{c|cccccc}
\toprule
\textbf{PDB ID} & \textbf{AlphaFlow} & \textbf{ESMFlow} & \textbf{AFCluster} & \textbf{BioEMU} & \textbf{Guidance} & \textbf{IT-Opt} \\
\midrule
\texttt{2K53} & $40.5$ & $46.6$ & $47.4$ & $52.9$ & $28.1$ & $\mathbf{20.6}$ \\
\texttt{6SVC} & $49.7$ & $54.6$ & $52.2$ & $54.6$ & $51.4$ & $\mathbf{44.4}$ \\
\texttt{1U0P} & $54.1$ & $56.5$ & $62.3$ & $72.9$ & $50.2$ & $\mathbf{44.4}$ \\
\texttt{1DEC} & $45.6$ & $36.5$ & $47.9$ & $49.9$ & $19.9$ & $\mathbf{14.4}$ \\
\texttt{6SOW} & $43.7$ & $45.3$ & $53.7$ & $68.5$ & $30.0$ & $\mathbf{28.6}$ \\
\texttt{2KPB} & $57.7$ & $59.3$ & $55.1$ & $77.0$ & $45.3$ & $\mathbf{40.7}$ \\
\texttt{5L82} & $57.0$ & $61.9$ & $60.4$ & $73.6$ & $49.4$ & $\mathbf{44.0}$ \\
\texttt{2MRW} & $33.9$ & $38.4$ & $39.6$ & $50.3$ & $\mathbf{22.6}$ & $24.5$ \\
\texttt{1D3Z} & $24.0$ & $26.3$ & $28.2$ & $36.6$ & $13.5$ & $\mathbf{11.3}$ \\
\texttt{2B5B} & $40.7$ & $29.6$ & $30.6$ & $42.6$ & $29.6$ & $\mathbf{28.7}$ \\
\texttt{1RKL} & $56.7$ & $58.1$ & $54.9$ & $56.7$ & $54.0$ & $\mathbf{48.8}$ \\
\texttt{2KNS} & $58.0$ & $68.1$ & $67.6$ & $72.3$ & $50.4$ & $\mathbf{49.6}$ \\
\texttt{2RN7} & $11.4$ & $10.6$ & $21.7$ & $40.3$ & $5.3$ & $\mathbf{4.7}$ \\
\texttt{2K0M} & $19.6$ & $20.7$ & $23.7$ & $36.2$ & $11.5$ & $\mathbf{10.5}$ \\
\texttt{2HEQ} & $31.2$ & $27.0$ & $52.3$ & $39.1$ & $\mathbf{10.0}$ & $9.5$ \\
\texttt{2K57} & $19.2$ & $21.1$ & $26.1$ & $56.1$ & $8.2$ & $\mathbf{7.9}$ \\
\texttt{1PQX} & $5.2$ & $5.0$ & $7.9$ & $23.2$ & $3.3$ & $\mathbf{2.2}$ \\
\texttt{1YEZ} & $17.0$ & $13.8$ & $14.3$ & $29.9$ & $7.5$ & $\mathbf{7.0}$ \\
\texttt{2KCD} & $39.7$ & $45.8$ & $40.7$ & $35.5$ & $13.8$ & $\mathbf{10.5}$ \\
\texttt{2HN8} & $49.6$ & $49.2$ & $53.6$ & $63.7$ & $41.6$ & $\mathbf{34.2}$ \\
\bottomrule
\end{tabular}
\end{table}

\begin{table}[ht]
\centering
\caption{\textbf{Quantitative evaluation of median NOE violation distance for structures from Table \ref{tab:nmr_pdb_info} on additional benchmarks}. The violation percentage (\textbf{Viol \AA\,} ($\downarrow$)) denotes the median distance of unsatisfied NOE restraints. Here, \emph{Guidance} refers to ensembles derived using NOE-guided AlphaFold3 \cite{maddipatla2025inverse}, whereas \emph{IT-Opt} denotes ensembles derived using NOE-based inference-time optimization. Entries highlighted in \textbf{bold} indicate the best result.}
\label{tab:nmr_viol_a_additional}
\begin{tabular}{c|cccccc}
\toprule
\textbf{PDB ID} & \textbf{AlphaFlow} & \textbf{ESMFlow} & \textbf{AFCluster} & \textbf{BioEMU} & \textbf{Guidance} & \textbf{IT-Opt} \\
\midrule
\texttt{2K53} & $0.47$ & $0.46$ & $0.54$ & $0.61$ & $0.20$ & $\mathbf{0.13}$ \\
\texttt{6SVC} & $0.41$ & $0.66$ & $0.50$ & $0.66$ & $0.38$ & $\mathbf{0.33}$ \\
\texttt{1U0P} & $0.34$ & $0.37$ & $0.33$ & $0.39$ & $0.22$ & $\mathbf{0.20}$ \\
\texttt{1DEC} & $2.84$ & $0.92$ & $3.89$ & $5.48$ & $0.36$ & $\mathbf{0.23}$ \\
\texttt{6SOW} & $0.71$ & $0.60$ & $0.78$ & $1.47$ & $0.42$ & $\mathbf{0.36}$ \\
\texttt{2KPB} & $1.28$ & $1.34$ & $1.20$ & $2.14$ & $1.24$ & $\mathbf{0.63}$ \\
\texttt{5L82} & $1.19$ & $1.26$ & $1.41$ & $1.83$ & $0.86$ & $\mathbf{0.62}$ \\
\texttt{2MRW} & $0.12$ & $0.20$ & $0.27$ & $1.02$ & $0.30$ & $\mathbf{0.06}$ \\
\texttt{1D3Z} & $0.36$ & $0.37$ & $0.44$ & $0.53$ & $0.18$ & $\mathbf{0.14}$ \\
\texttt{2B5B} & $4.87$ & $5.07$ & $5.11$ & $6.96$ & $3.58$ & $\mathbf{0.73}$ \\
\texttt{1RKL} & $0.59$ & $0.55$ & $0.64$ & $0.64$ & $0.42$ & $\mathbf{0.33}$ \\
\texttt{2KNS} & $0.59$ & $0.41$ & $0.49$ & $0.99$ & $0.39$ & $\mathbf{0.30}$ \\
\texttt{2RN7} & $0.33$ & $0.46$ & $0.58$ & $1.32$ & $0.08$ & $\mathbf{0.05}$ \\
\texttt{2K0M} & $0.61$ & $0.53$ & $0.61$ & $0.96$ & $0.32$ & $\mathbf{0.22}$ \\
\texttt{2HEQ} & $0.62$ & $0.71$ & $2.35$ & $0.70$ & $\mathbf{0.12}$ & $0.31$ \\
\texttt{2K57} & $0.51$ & $0.42$ & $0.74$ & $7.19$ & $0.19$ & $\mathbf{0.06}$ \\
\texttt{1PQX} & $0.82$ & $0.82$ & $0.48$ & $0.88$ & $0.43$ & $\mathbf{0.08}$ \\
\texttt{1YEZ} & $0.46$ & $0.48$ & $0.60$ & $0.76$ & $0.24$ & $\mathbf{0.12}$ \\
\texttt{2KCD} & $1.38$ & $3.72$ & $1.50$ & $0.94$ & $0.25$ & $\mathbf{0.09}$ \\
\texttt{2HN8} & $0.32$ & $0.34$ & $0.47$ & $0.72$ & $0.24$ & $\mathbf{0.15}$ \\
\bottomrule
\end{tabular}
\end{table}

\begin{table}[ht]
\centering
\caption{\textbf{Effective Sample Size (ESS) of ensembles derived using IT-Opt and Guidance, with and without the force field.} The ESS quantifies the number of proteins in the ensemble with high Boltzmann probability under the Protein-EBM force field at $300\,\mathrm{K}$. For each protein, the highest ESS across the four methods is shown in \textbf{bold}.}
\label{tab:ess}
\begin{tabular}{c|cccc}
\toprule
\textbf{PDB ID} & \textbf{Guidance} & \textbf{IT-Opt} & \textbf{Guidance (Energy)} & \textbf{IT-Opt (Energy)} \\
\midrule
\texttt{2K53} & $1.490$ & $1.002$ & $2.326$ & $\mathbf{2.389}$ \\
\texttt{6SVC} & $1.029$ & $1.000$ & $\mathbf{2.005}$ & $2.000$ \\
\texttt{1U0P} & $1.000$ & $1.877$ & $\mathbf{2.591}$ & $2.010$ \\
\texttt{1DEC} & $1.000$ & $1.000$ & $\mathbf{2.413}$ & $2.000$ \\
\texttt{6SOW} & $1.000$ & $1.004$ & $2.000$ & $\mathbf{2.791}$ \\
\texttt{2KPB} & $1.207$ & $1.427$ & $2.001$ & $\mathbf{2.003}$ \\
\texttt{5L82} & $1.000$ & $1.000$ & $\mathbf{2.008}$ & $2.000$ \\
\texttt{2MRW} & $1.000$ & $1.000$ & $2.000$ & $\mathbf{2.001}$ \\
\texttt{1D3Z} & $1.000$ & $1.000$ & $\mathbf{2.016}$ & $2.001$ \\
\texttt{2B5B} & $1.000$ & $1.000$ & $2.110$ & $\mathbf{2.423}$ \\
\texttt{1RKL} & $1.000$ & $1.000$ & $2.079$ & $\mathbf{2.509}$ \\
\texttt{2KNS} & $1.000$ & $1.003$ & $\mathbf{2.086}$ & $2.000$ \\
\texttt{2RN7} & $1.000$ & $1.000$ & $2.000$ & $\mathbf{2.000}$ \\
\texttt{2K0M} & $1.000$ & $1.269$ & $2.000$ & $\mathbf{2.000}$ \\
\texttt{2HEQ} & $1.025$ & $1.038$ & $2.000$ & $\mathbf{2.000}$ \\
\texttt{2K57} & $1.012$ & $1.000$ & $\mathbf{2.784}$ & $2.000$ \\
\texttt{1PQX} & $1.000$ & $1.000$ & $2.000$ & $\mathbf{2.901}$ \\
\texttt{1YEZ} & $2.043$ & $1.838$ & $2.000$ & $\mathbf{2.913}$ \\
\texttt{2KCD} & $1.001$ & $1.002$ & $2.000$ & $\mathbf{2.688}$ \\
\texttt{2HN8} & $1.004$ & $1.000$ & $2.000$ & $\mathbf{2.000}$ \\
\bottomrule
\end{tabular}
\end{table}

\begin{table}[ht]
\centering
\caption{\textbf{Hyperparameter sensitivity for IT-Optimization on \texttt{1U0P}.} Each block sweeps one hyperparameter while the others are held at their default values (anchor $\lambda_{\mathrm{p}} = 10^{-4}$, learning rate $ \eta_z = 0.05$, outer iterations $K = 20$, batch size $n = 16$). For each block, the best value per metric is shown in bold. Lower is better for all reported metrics. Validity loss is computed using Equation \ref{eq:validity}.}
\label{tab:hyperparam_ablation}
\small
\begin{tabular}{c|ccc}
\toprule
\textbf{Hyperparameter} & \textbf{Viol \% ($\downarrow$)} & \textbf{Viol \AA ($\downarrow$)} & \textbf{Validity loss ($\downarrow$)} \\
\midrule
$\lambda_{\mathrm{p}} = 1{\times}10^{-7}$       & $\mathbf{40.1}$ & $\mathbf{0.1}$ & $4.9537$ \\
$\lambda_{\mathrm{p}} =1{\times}10^{-4}$ (default) & $44.4$         & $0.2$         & $\mathbf{1.97{\times}10^{-5}}$ \\
$\lambda_{\mathrm{p}} = 1{\times}10^{-2}$       & $50.2$         & $0.3$         & $2.91{\times}10^{-5}$ \\
\midrule
$\eta_{z} = 0.001$                  & $49.8$         & $0.3$         & $\mathbf{0}$ \\
$\eta_z = 0.05$ (default)         & $\mathbf{44.4}$ & $0.2$         & $1.97{\times}10^{-5}$ \\
$\eta_z = 1$                      & $48.8$         & $\mathbf{0.1}$ & $5.92{\times}10^{-1}$ \\
\midrule
$K = 10$                     & $50.8$         & $\mathbf{0.2}$ & $2.02{\times}10^{-5}$ \\
$K = 20$ (default)           & $44.4$         & $\mathbf{0.2}$ & $\mathbf{1.97{\times}10^{-5}}$ \\
$K = 30$                     & $\mathbf{43.8}$ & $\mathbf{0.2}$ & $2.94{\times}10^{-5}$ \\
\midrule
$n = 12$                     & $46.4$         & $\mathbf{0.2}$ & $2.77{\times}10^{-5}$ \\
$n = 16$ (default)           & $44.4$         & $\mathbf{0.2}$ & $1.97{\times}10^{-5}$ \\
$n = 24$                     & $\mathbf{43.9}$ & $\mathbf{0.2}$ & $\mathbf{1.75{\times}10^{-5}}$ \\
\bottomrule
\end{tabular}
\end{table}

\begin{table}[ht]
\centering
\caption{\textbf{Ablation over annealing, smoothing, and inverse temperature $\beta$ for IT-Optimization.} For each protein, the default configuration (\colorbox{red!15}{shown in red}: \cmark\ annealing, \cmark\ smoothing, $\beta=0.6$) is compared against three ablations: disabling annealing, disabling smoothing, and increasing the inverse temperature to $\beta=1.2$. The best value per metric within each protein is shown in \textbf{bold}.}
\label{tab:energy_ablation}
\small
\begin{tabular}{lcccccc}
\toprule
\textbf{PDB ID} & \textbf{Annealing} & \textbf{Smooth} & \textbf{$\beta$ }& \textbf{ESS} ($\uparrow$) & \textbf{Viol \% ($\downarrow$)} & \textbf{Viol \AA\ ($\downarrow$)} \\
\midrule
\rowcolor{red!15}
\multirow{4}{*}{\texttt{1YEZ}}
  & \cmark & \cmark & $0.6$ & $2.913$         & $\mathbf{4.5}$  & $\mathbf{0.1}$ \\
  & \xmark & \cmark & $0.6$ & $2.000$         & $4.8$           & $0.6$ \\
  & \cmark & \xmark & $0.6$ & $2.088$         & $5.1$           & $0.6$ \\
  & \cmark & \cmark & $1.2$ & $\mathbf{3.016}$ & $11.7$         & $0.4$ \\
\midrule
\rowcolor{red!15}
\multirow{4}{*}{\texttt{6SOW}}
  & \cmark & \cmark & $0.6$ & $2.791$         & $\mathbf{20.0}$ & $\mathbf{0.3}$ \\
  & \xmark & \cmark & $0.6$ & $2.000$         & $32.0$          & $0.7$ \\
  & \cmark & \xmark & $0.6$ & $2.023$         & $33.3$          & $0.6$ \\
  & \cmark & \cmark & $1.2$ & $\mathbf{3.440}$ & $31.0$         & $0.6$ \\
\midrule
\rowcolor{red!15}
\multirow{4}{*}{\texttt{1PQX}}
  & \cmark & \cmark & $0.6$ & $2.901$         & $\mathbf{1.2}$  & $\mathbf{0.1}$ \\
  & \xmark & \cmark & $0.6$ & $2.001$         & $4.8$           & $0.1$ \\
  & \cmark & \xmark & $0.6$ & $2.941$         & $5.1$           & $0.1$ \\
  & \cmark & \cmark & $1.2$ & $\mathbf{3.170}$ & $5.2$          & $0.1$ \\
\bottomrule
\end{tabular}
\end{table}

\begin{table}[ht]
\centering
\caption{\textbf{Robustness to restraint sparsity on \texttt{2B5B}.} We randomly subsample NOE restraints at retention levels from 10\% to 100\% and report the IT-Opt validity loss (Equation \ref{eq:validity}) for each. Across all levels, the validity loss is comparable to that of the PDB reference ensemble ($0.002$), and we observe no signs of overfitting to sparse constraints (e.g., distorted geometries).}
\label{tab:noe_sparsity}
\small
\begin{tabular}{c|ccc}
\toprule
\textbf{PDB ID} & \textbf{Percent retained (\%)} & \textbf{\# NOEs} & \textbf{Validity loss} \\
\midrule
\multirow{5}{*}{\texttt{2B5B}}
  & $10$  & $9$  & $0.0000$ \\
  & $25$  & $24$ & $0.0000$ \\
  & $50$  & $48$ & $0.0092$ \\
  & $75$  & $72$ & $0.0007$ \\
  & $100$ & $96$ & $0.0080$ \\
\bottomrule
\end{tabular}
\end{table}

\begin{table}[ht]
\centering
\caption{\textbf{Quantitative evaluation of cosine similarity ($\uparrow$) for structures from Table \ref{tab:xray_pdb_info}}. Here, \emph{Guidance} refers to ensembles derived using electron density guided AlphaFold3 \cite{maddipatla2025inverse}, whereas \emph{IT-Opt} refers to ensembles derived using electron density-based inference-time optimization. Entries highlighted in \textcolor{darkgreen}{\textbf{green}} correspond to ensembles that outperform all baseline methods, excluding the PDB, while entries highlighted in \textcolor{darkblue}{\textbf{blue}} outperform all baselines, including the PDB.}
\label{tab:xray_cosine_similarity}
\begin{tabular}{c|ccccc}
\toprule
\textbf{PDB ID} & \textbf{PDB} & \textbf{AlphaFold3} & \textbf{AlphaFlow} & \textbf{Guidance} & \textbf{IT-Opt} \\
\midrule
\texttt{5D7F:P} & $0.696$ & $0.435$ & $0.523$ & $0.638$ & $\textcolor{darkgreen}{\mathbf{0.656}}$ \\
\texttt{6I42:B} & $0.905$ & $0.487$ & $0.664$ & $0.817$ & $\textcolor{darkgreen}{\mathbf{0.876}}$ \\
\texttt{7ABT:B} & $0.883$ & $0.515$ & $0.519$ & $0.785$ & $\textcolor{darkgreen}{\mathbf{0.812}}$ \\
\texttt{2DF6:C} & $0.852$ & $0.419$ & $0.679$ & $0.825$ & $\textcolor{darkgreen}{\mathbf{0.826}}$ \\
\texttt{2DF6:D} & $0.853$ & $0.388$ & $0.559$ & $0.832$ & $\textcolor{darkgreen}{\mathbf{0.878}}$ \\
\texttt{1DDV:B} &  $0.939$ & $0.561$ & $0.853$ & $0.893$ & $\textcolor{darkgreen}{\mathbf{0.901}}$ \\
\texttt{1CKB:B} & $0.868$ & $0.808$ & $0.796$ & $0.882$ & $\textcolor{darkblue}{\mathbf{0.891}}$\\
\texttt{1CJR:A} & $0.928$ & $0.672$ & $0.861$ & $0.893$ & $\textcolor{darkgreen}{\mathbf{0.921}}$  \\
\midrule
\texttt{5G51:A} & $0.884$ & $0.598$ & $0.502$ & $0.877$ & $\textcolor{darkblue}{\mathbf{0.898}}$ \\
\texttt{6QQF:A} & $0.878$ & $0.807$ & $0.829$ & $0.868$ & $\textcolor{darkblue}{\mathbf{0.869}}$ \\
\texttt{3AZY:A} & $0.847$ & $0.717$ & $0.721$ & $0.780$ & $\textcolor{darkblue}{\mathbf{0.848}}$ \\
\texttt{2YNT:A} & $0.885$ & $0.822$ & $0.871$ & $0.895$ & $\textcolor{darkblue}{\mathbf{0.905}}$ \\
\texttt{3V3S:B} & $0.767$ & $0.645$ & $0.731$ & $0.755$ & $\textcolor{darkblue}{\mathbf{0.808}}$ \\
\texttt{4NPU:B} & $0.832$ & $0.791$ & $0.785$ & $0.804$ & $\textcolor{darkgreen}{\mathbf{0.828}}$ \\
\texttt{3QDA:A} & $0.893$ & $0.594$ & $0.584$ & $0.889$ & $\textcolor{darkgreen}{\mathbf{0.891}}$ \\
\texttt{4RMW:A} & $0.904$ & $0.660$ & $0.687$ & $0.899$ & $\textcolor{darkgreen}{\mathbf{0.900}}$ \\
\bottomrule
\end{tabular}
\end{table}

\begin{table}[ht]
\centering
\caption{\textbf{Quantitative evaluation of $R_{\mathrm{work}}$ ($\downarrow$) for structures from Table \ref{tab:xray_pdb_info}}. Here, \emph{Guidance} refers to ensembles derived using Electron density guided AlphaFold3 \cite{maddipatla2025inverse}, whereas \emph{IT-Opt} refers to ensembles derived using electron density-based inference-time optimization. Entries highlighted in \textcolor{darkgreen}{\textbf{green}} correspond to ensembles that outperform all baseline methods, excluding the PDB, while entries highlighted in \textcolor{darkblue}{\textbf{blue}} outperform all baselines, including the PDB.}
\label{tab:xray_rwork}
\begin{tabular}{c|ccccc}
\toprule
\textbf{PDB ID} & \textbf{PDB} & \textbf{AlphaFold3} & \textbf{AlphaFlow} & \textbf{Guidance} & \textbf{IT-Opt} \\
\midrule
\texttt{5D7F:P} & $0.211$ & $0.237$ & $0.223$ & $0.220$ & $\textcolor{darkgreen}{\mathbf{0.216}}$ \\
\texttt{6I42:B} & $0.169$ & $0.213$ & $0.222$ & $0.195$ & $\textcolor{darkgreen}{\mathbf{0.175}}$ \\
\texttt{7ABT:B} & $0.169$ & $0.208$ & $0.216$ & $0.200$ & $\textcolor{darkgreen}{\mathbf{0.179}}$ \\
\texttt{2DF6:C} & $0.174$ & $0.254$ & $0.207$ & $0.188$ & $\textcolor{darkgreen}{\mathbf{0.183}}$ \\
\texttt{2DF6:D} & $0.174$ & $0.211$ & $0.253$ & $0.179$ & $\textcolor{darkgreen}{\mathbf{0.177}}$ \\
\texttt{1DDV:B} &  $0.210$ & $0.263$ & $0.243$ & $0.233$ & $\textcolor{darkgreen}{\mathbf{0.227}}$\\
\texttt{1CKB:B} & $0.1767$ & $0.196$ & $0.204$ & $0.186$ & $\textcolor{darkgreen}{\mathbf{0.181}}$\\
\texttt{1CJR:A} & $0.201$ & $0.279$ & $0.250$ & $0.227$ & $\textcolor{darkgreen}{\mathbf{0.208}}$  \\
\midrule
\texttt{5G51:A} & $0.191$ & $0.205$ & $0.215$ & $0.192$ & $\textcolor{darkblue}{\mathbf{0.191}}$ \\
\texttt{6QQF:A} & $0.159$ & $0.163$ & $0.163$ & $0.161$ & $\textcolor{darkgreen}{\mathbf{0.161}}$ \\
\texttt{3AZY:A} & $0.173$ & $0.174$ & $0.174$ & $0.174$ & $\textcolor{darkgreen}{\mathbf{0.173}}$ \\
\texttt{2YNT:A} & $0.158$ & $0.160$ & $0.159$ & $0.158$ & $\textcolor{darkblue}{\mathbf{0.158}}$ \\
\texttt{3V3S:B} & $0.163$ & $0.164$ & $0.164$ & $\textcolor{darkblue}{\mathbf{0.162}}$ & $0.163$ \\
\bottomrule
\end{tabular}
\end{table}

\begin{table}[ht]
\centering
\caption{\textbf{Quantitative evaluation of $R_{\mathrm{free}}$ ($\downarrow$) for structures from Table \ref{tab:xray_pdb_info}}. Here, \emph{Guidance} refers to ensembles derived using Electron density guided AlphaFold3 \cite{maddipatla2025inverse}, whereas \emph{IT-Opt} refers to ensembles derived using electron density-based inference-time optimization. Entries highlighted in \textcolor{darkgreen}{\textbf{green}} correspond to ensembles that outperform all baseline methods, excluding the PDB, while entries highlighted in \textcolor{darkblue}{\textbf{blue}} outperform all baselines, including the PDB.}
\label{tab:xray_rfree}
\begin{tabular}{c|ccccc}
\toprule
\textbf{PDB ID} & \textbf{PDB} & \textbf{AlphaFold3} & \textbf{AlphaFlow} & \textbf{Guidance} & \textbf{IT-Opt} \\
\midrule
\texttt{5D7F:P} & $0.230$ & $0.234$ & $0.236$ & $0.238$ & $\textcolor{darkgreen}{\mathbf{0.233}}$ \\
\texttt{6I42:B} & $0.183$ & $0.231$ & $0.237$ & $0.212$ & $\textcolor{darkgreen}{\mathbf{0.192}}$ \\
\texttt{7ABT:B} & $0.193$ & $0.228$ &  $0.234$ & $0.219$ & $\textcolor{darkgreen}{\mathbf{0.205}}$ \\
\texttt{2DF6:C} & $0.192$ & $0.270$ & $0.229$ & $0.207$ & $\textcolor{darkgreen}{\mathbf{0.204}}$ \\
\texttt{2DF6:D} & $0.192$ & $0.231$ & $0.269$ & $0.198$ & $\textcolor{darkgreen}{\mathbf{0.197}}$ \\
\texttt{1DDV:B} &  $0.257$ & $0.289$ & $0.200$ & $0.274$ & $\textcolor{darkgreen}{\mathbf{0.268}}$\\
\texttt{1CKB:B} & $0.234$ & $0.259$ & $0.257$ & $0.243$ & $\textcolor{darkgreen}{\mathbf{0.235}}$\\
\texttt{1CJR:A} &  $0.2634$ & $0.346$ & $0.319$ & $0.289$ & $\textcolor{darkgreen}{\mathbf{0.270}}$  \\
\midrule
\texttt{5G51:A} & $0.211$ & $0.223$ & $0.231$ & $0.215$ & $\textcolor{darkgreen}{\mathbf{0.211}}$ \\
\texttt{6QQF:A} & $0.208$ & $0.211$ & $0.211$ & $\textcolor{darkblue}{\mathbf{0.205}}$ & $0.206$ \\
\texttt{3AZY:A} & $0.194$ & $0.195$ & $0.196$ & $0.195$ & $\textcolor{darkblue}{\mathbf{0.194}}$ \\
\texttt{2YNT:A} & $0.200$ & $\textcolor{darkblue}{\mathbf{0.191}}$ & $0.200$ & $0.200$ & $0.200$ \\
\texttt{3V3S:B} & $0.202$ & $0.202$ & $\textcolor{darkblue}{\mathbf{0.201}}$ & $0.202$ & $0.202 $ \\
\bottomrule
\end{tabular}
\end{table}

\begin{table}[ht]
\centering
\caption{\textbf{Quantitative evaluation of cosine similarity ($\uparrow$) for structures from Table \ref{tab:xray_pdb_info} on additional benchmarks}. Here, \emph{Guidance} refers to ensembles derived using electron density guided AlphaFold3 \cite{maddipatla2025inverse}, whereas \emph{IT-Opt} refers to ensembles derived using electron density-based inference-time optimization. Entries highlighted in \textbf{bold} indicate the best result.}
\label{tab:xray_cosine_additional}
\begin{tabular}{c|cccccc}
\toprule
\textbf{PDB ID} & \textbf{AlphaFlow} & \textbf{ESMFlow} & \textbf{AFCluster} & \textbf{BioEMU} & \textbf{Guidance} & \textbf{IT-Opt} \\
\midrule
\texttt{5D7F:P} & $0.523$ & $0.532$ & $0.412$ & $0.568$ & $0.638$ & $\mathbf{0.656}$ \\
\texttt{6I42:B} & $0.664$ & $0.596$ & $0.608$ & $0.664$ & $0.817$ & $\mathbf{0.876}$ \\
\texttt{7ABT:B} & $0.519$ & $0.522$ & $0.551$ & $0.647$ & $0.785$ & $\mathbf{0.812}$ \\
\texttt{2DF6:C} & $0.679$ & $0.654$ & $0.622$ & $0.678$ & $0.825$ & $\mathbf{0.826}$ \\
\texttt{2DF6:D} & $0.559$ & $0.466$ & $0.730$ & $0.495$ & $0.832$ & $\mathbf{0.878}$ \\
\texttt{1DDV:B} & $0.853$ & $0.709$ & $0.761$ & $0.821$ & $0.893$ & $\mathbf{0.901}$ \\
\texttt{1CKB:B} & $0.796$ & $0.852$ & $0.802$ & $0.849$ & $0.882$ & $\mathbf{0.891}$ \\
\texttt{1CJR:A} & $0.861$ & $0.768$ & $0.817$ & $0.700$ & $0.893$ & $\mathbf{0.921}$ \\
\midrule
\texttt{5G51:A} & $0.502$ & $0.535$ & $0.494$ & $0.669$ & $0.877$ & $\mathbf{0.898}$ \\
\texttt{6QQF:A} & $0.829$ & $0.872$ & $0.832$ & $\mathbf{0.888}$ & $0.868$ & $0.869$ \\
\texttt{3AZY:A} & $0.721$ & $0.717$ & $0.576$ & $0.739$ & $0.780$ & $\mathbf{0.848}$ \\
\texttt{2YNT:A} & $0.871$ & $0.873$ & $0.814$ & $0.888$ & $0.895$ & $\mathbf{0.905}$ \\
\texttt{3V3S:B} & $0.731$ & $0.750$ & $0.617$ & $0.728$ & $0.755$ & $\mathbf{0.800}$ \\
\bottomrule
\end{tabular}%
\end{table}

\begin{table}[ht]
\centering
\caption{\textbf{Quantitative evaluation of $R_{\mathrm{work}}$ ($\downarrow$) for structures from Table \ref{tab:xray_pdb_info}}. Here, \emph{Guidance} refers to ensembles derived using Electron density guided AlphaFold3 \cite{maddipatla2025inverse}, whereas \emph{IT-Opt} refers to ensembles derived using electron density-based inference-time optimization. Entries highlighted in \textbf{bold} indicate the best result.}
\label{tab:xray_rwork_additional}
\begin{tabular}{c|cccccc}
\toprule
\textbf{PDB ID} & \textbf{AlphaFlow} & \textbf{ESMFlow} & \textbf{AFCluster} & \textbf{BioEMU} & \textbf{Guidance} & \textbf{IT-Opt} \\
\midrule
\texttt{5D7F:P} & $0.223$ & $0.228$ & $0.220$ & $0.226$ & $0.220$ & $\mathbf{0.216}$ \\
\texttt{6I42:B} & $0.222$ & $0.228$ & $0.213$ & $0.223$ & $0.195$ & $\mathbf{0.175}$ \\
\texttt{7ABT:B} & $0.216$ & $0.218$ & $0.204$ & $0.213$ & $0.200$ & $\mathbf{0.179}$ \\
\texttt{2DF6:C} & $0.207$ & $0.208$ & $0.195$ & $0.209$ & $0.188$ & $\mathbf{0.183}$ \\
\texttt{2DF6:D} & $0.253$ & $0.257$ & $0.210$ & $0.260$ & $0.179$ & $\mathbf{0.177}$ \\
\texttt{1DDV:B} & $0.243$ & $0.268$ & $0.249$ & $0.251$ & $0.233$ & $\mathbf{0.227}$ \\
\texttt{1CKB:B} & $0.204$ & $0.196$ & $0.203$ & $0.193$ & $0.186$ & $\mathbf{0.181}$ \\
\texttt{1CJR:A} & $0.250$ & $0.274$ & $0.239$ & $0.324$ & $0.227$ & $\mathbf{0.208}$ \\
\midrule
\texttt{5G51:A} & $0.215$ & $0.214$ & $0.209$ & $0.211$ & $0.192$ & $\mathbf{0.191}$ \\
\texttt{6QQF:A} & $0.163$ & $0.163$ & $0.162$ & $0.165$ & $\mathbf{0.161}$ & $0.161$ \\
\texttt{3AZY:A} & $0.174$ & $0.179$ & $0.180$ & $0.175$ & $0.174$ & $\mathbf{0.173}$ \\
\texttt{2YNT:A} & $0.159$ & $0.161$ & $0.160$ & $0.160$ & $\mathbf{0.158}$ & $0.158$ \\
\texttt{3V3S:B} & $0.164$ & $0.163$ & $0.163$ & $0.165$ & $\mathbf{0.162}$ & $0.163$ \\
\bottomrule
\end{tabular}%
\end{table}

\begin{table}[ht]
\centering
\caption{\textbf{Quantitative evaluation of $R_{\mathrm{free}}$ ($\downarrow$) for structures from Table \ref{tab:xray_pdb_info}}. Here, \emph{Guidance} refers to ensembles derived using Electron density guided AlphaFold3 \cite{maddipatla2025inverse}, whereas \emph{IT-Opt} refers to ensembles derived using electron density-based inference-time optimization. Entries highlighted in \textbf{bold} indicate the best result.}
\label{tab:xray_rfree_additional}
\begin{tabular}{c|cccccc}
\toprule
\textbf{PDB ID} & \textbf{AlphaFlow} & \textbf{ESMFlow} & \textbf{AFCluster} & \textbf{BioEMU} & \textbf{Guidance} & \textbf{IT-Opt} \\
\midrule
\texttt{5D7F:P} & $0.236$ & $0.238$ & $0.230$ & $0.242$ & $0.238$ & $\mathbf{0.233}$ \\
\texttt{6I42:B} & $0.237$ & $0.248$ & $0.225$ & $0.244$ & $0.212$ & $\mathbf{0.192}$ \\
\texttt{7ABT:B} & $0.234$ & $0.248$ & $0.221$ & $0.227$ & $0.219$ & $\mathbf{0.205}$ \\
\texttt{2DF6:C} & $0.229$ & $0.229$ & $0.214$ & $0.231$ & $0.207$ & $\mathbf{0.204}$ \\
\texttt{2DF6:D} & $0.269$ & $0.277$ & $0.228$ & $0.277$ & $0.198$ & $\mathbf{0.197}$ \\
\texttt{1DDV:B} & $\mathbf{0.200}$ & $0.306$ & $0.287$ & $0.280$ & $0.274$ & $0.268$ \\
\texttt{1CKB:B} & $0.257$ & $0.257$ & $0.263$ & $0.253$ & $0.243$ & $\mathbf{0.235}$ \\
\texttt{1CJR:A} & $0.319$ & $0.345$ & $0.308$ & $0.372$ & $0.289$ & $\mathbf{0.270}$ \\
\midrule
\texttt{5G51:A} & $0.231$ & $0.218$ & $0.218$ & $0.217$ & $0.215$ & $\mathbf{0.211}$ \\
\texttt{6QQF:A} & $0.211$ & $0.208$ & $0.209$ & $0.208$ & $\mathbf{0.205}$ & $0.206$ \\
\texttt{3AZY:A} & $0.196$ & $0.196$ & $0.195$ & $0.196$ & $0.195$ & $\mathbf{0.194}$ \\
\texttt{2YNT:A} & $\mathbf{0.200}$ & $0.201$ & $\mathbf{0.200}$ & $0.201$ & $\mathbf{0.200}$ & $\mathbf{0.200}$ \\
\texttt{3V3S:B} & $0.201$ & $\mathbf{0.196}$ & $0.201$ & $0.204$ & $0.200$ & $0.202$ \\
\bottomrule
\end{tabular}%
\end{table}

\begin{table}[ht]
\centering
\caption{\textbf{MolProbity scores for structures from Table~\ref{tab:xray_pdb_info}.} For each method, we report the MolProbity score (MolP), clash score (Clash), and Ramachandran outlier percentage (Rama) determined using \texttt{phenix.molprobity}~\cite{chen2010molprobity, williams2018molprobity}; lower is better for all three metrics. The best value per protein and metric is shown in \textbf{bold}. Here, \emph{Guidance} refers to ensembles derived using Electron density guided AlphaFold3~\cite{maddipatla2025inverse}, whereas \emph{IT-Opt} refers to ensembles derived using electron density-based inference-time optimization.}
\label{tab:molprobity}
\small
\setlength{\tabcolsep}{4pt}
\begin{tabular}{c ccc|ccc|ccc|ccc}
\toprule
& \multicolumn{3}{c}{\textbf{PDB}} & \multicolumn{3}{c}{\textbf{AlphaFold3}} & \multicolumn{3}{c}{\textbf{Guidance}} & \multicolumn{3}{c}{\textbf{IT-Opt}} \\
\cmidrule(lr){2-4} \cmidrule(lr){5-7} \cmidrule(lr){8-10} \cmidrule(lr){11-13}
\textbf{PDB ID} & \textbf{MolP} & \textbf{Clash} & \textbf{Rama} & \textbf{MolP} & \textbf{Clash} & \textbf{Rama} & \textbf{MolP} & \textbf{Clash} & \textbf{Rama} & \textbf{MolP} & \textbf{Clash} & \textbf{Rama} \\
\midrule
\texttt{5D7F:P} & $\mathbf{1.5}$ & $\mathbf{6.6}$ & $\mathbf{0.0}$ & $2.7$ & $78.4$  & $0.0$ & $2.1$         & $27.3$ & $0.0$         & $1.8$ & $14.0$ & $0.0$ \\
\texttt{6I42:B} & $\mathbf{1.3}$ & $\mathbf{4.8}$ & $\mathbf{0.0}$ & $2.4$ & $68.4$  & $0.0$ & $2.3$         & $27.3$ & $0.0$         & $1.6$ & $11.7$ & $0.0$ \\
\texttt{7ABT:B} & $\mathbf{1.3}$ & $\mathbf{2.2}$ & $\mathbf{0.0}$ & $2.1$ & $29.9$  & $0.0$ & $2.7$         & $38.8$ & $0.0$         & $2.2$ & $15.1$ & $0.0$ \\
\texttt{2DF6:C} & $\mathbf{1.4}$ & $\mathbf{7.3}$ & $\mathbf{0.0}$ & $3.3$ & $190.8$ & $0.0$ & $2.0$         & $22.2$ & $0.0$         & $1.6$ & $12.9$ & $0.0$ \\
\texttt{2DF6:D} & $\mathbf{1.4}$ & $\mathbf{7.3}$ & $\mathbf{0.0}$ & $2.1$ & $40.5$  & $0.0$ & $1.9$         & $22.7$ & $0.0$         & $1.9$ & $18.5$ & $0.0$ \\
\texttt{1DDV:B} & $\mathbf{1.3}$ & $\mathbf{1.7}$ & $\mathbf{0.0}$ & $2.0$ & $13.2$  & $0.0$ & $1.9$         & $7.6$  & $0.0$         & $1.9$ & $9.8$  & $0.0$ \\
\texttt{1CKB:B} & $\mathbf{1.0}$ & $\mathbf{1.0}$ & $\mathbf{0.0}$ & $1.6$ & $7.5$   & $0.0$ & $1.6$         & $9.0$  & $0.0$         & $1.8$ & $8.3$  & $0.0$ \\
\texttt{1CJR:A} & $\mathbf{1.7}$ & $\mathbf{7.9}$ & $\mathbf{0.0}$ & $2.8$ & $53.1$  & $0.0$ & $2.2$         & $28.1$ & $0.0$ & $2.0$ & $16.9$ & $1.8$         \\
\midrule
\texttt{5G51:A} & $\mathbf{1.6}$ & $\mathbf{8.8}$ & $\mathbf{0.0}$ & $1.9$ & $27.3$  & $0.0$ & $1.8$         & $11.4$ & $0.0$         & $1.7$ & $13.4$ & $0.0$ \\
\texttt{6QQF:A} & $1.3$         & $3.6$         & $\mathbf{0.0}$ & $\mathbf{1.2}$ & $\mathbf{3.5}$ & $0.0$ & $1.2$ & $3.6$  & $0.0$         & $1.3$ & $4.5$  & $0.0$ \\
\texttt{3AZY:A} & $\mathbf{1.3}$ & $\mathbf{2.2}$ & $0.1$ & $1.4$ & $2.6$   & $0.0$ & $1.4$         & $2.7$  & $0.0$         & $1.4$ & $2.9$  & $\mathbf{0.0}$ \\
\texttt{2YNT:A} & $\mathbf{1.3}$ & $3.6$         & $\mathbf{0.8}$ & $1.3$ & $2.9$ & $0.9$ & $1.3$         & $2.9$  & $0.9$         & $1.3$ & $\mathbf{2.9}$ & $0.6$ \\
\texttt{3V3S:B} & $\mathbf{1.3}$ & $\mathbf{3.8}$ & $\mathbf{0.2}$ & $1.3$ & $3.7$   & $0.2$ & $1.3$         & $5.0$  & $0.2$         & $1.5$ & $4.7$  & $0.2$ \\
\bottomrule
\end{tabular}
\end{table}

\begin{table}[ht]
\centering
\caption{\textbf{RMSD ($\downarrow$) and cosine similarity ($\uparrow$) with respect to deposited PDB structures on homologous proteins where AlphaFold3 exhibits mispredictions.} We evaluate IT-Opt against Guidance and AlphaFold3 on a pair of homologous proteins ($\geq 280$ residues). Without providing any pre-modeled structures (i.e., optimize entire protein with $F_{\mathrm{o}}$), IT-Opt achieves lower RMSD wrt PDB conformation and higher cosine similarity than baseline methods \textit{at mispredicted regions}. \textbf{Bold} values indicate the best result.}
\label{tab:homologous_rmsd}
\begin{tabular}{cc ccc ccc}
\toprule
& & \multicolumn{3}{c}{\textbf{RMSD wrt PDB}} & \multicolumn{3}{c}{\textbf{Cosine similarity}} \\
\cmidrule(lr){3-5} \cmidrule(lr){6-8}
\textbf{PDB ID} & \textbf{Seq Len} & \textbf{IT-Opt} & \textbf{Guidance} & \textbf{AF3} & \textbf{IT-Opt} & \textbf{Guidance} & \textbf{AF3} \\
\midrule
\texttt{4NE4} & $286$ & $\mathbf{0.217}$ & $0.248$ & $0.230$ & $\mathbf{0.724}$ & $0.711$ & $0.717$ \\
\texttt{5TEU} & $301$ & $\mathbf{0.194}$ & $0.356$ & $1.302$ & $\mathbf{0.783}$ & $0.762$ & $0.725$ \\
\bottomrule
\end{tabular}
\end{table}
\begin{table}[ht]
\centering
\caption{\textbf{Runtime analysis across all baselines on four PDB targets}. For each method we report peak GPU memory usage (MiB, $\downarrow$) and wall-clock time per optimization loop (seconds, $\downarrow$), along with the NOE violation percentage (Viol \%, $\downarrow$) measured on the final ensemble. All runs use a batch size of $16$. Bold values indicate the best result per metric.}
\label{tab:runtime_analysis}
\setlength{\tabcolsep}{4pt}
\renewcommand{\arraystretch}{1.1}
\resizebox{\textwidth}{!}{%
\begin{tabular}{ccc|ccc|ccc|ccc|ccc}
\toprule
\multicolumn{3}{c}{} & \multicolumn{3}{c}{\textbf{Guidance (Uniform)}} & \multicolumn{3}{c}{\textbf{IT-Optimization (Uniform)}} & \multicolumn{3}{c}{\textbf{AlphaFlow}} & \multicolumn{3}{c}{\textbf{AlphaFold3}} \\
\cmidrule(lr){4-6} \cmidrule(lr){7-9} \cmidrule(lr){10-12} \cmidrule(lr){13-15}
\textbf{PDB ID} & \textbf{\#NOEs} & \textbf{Seq len}
& \textbf{Memory (MiB)} & \textbf{Time/loop (s)} & \textbf{Viol \%}
& \textbf{Memory (MiB)} & \textbf{Time/loop (s)} & \textbf{Viol \%}
& \textbf{Memory (MiB)} & \textbf{Time/loop (s)} & \textbf{Viol \%}
& \textbf{Memory (MiB)} & \textbf{Time/loop (s)} & \textbf{Viol \%} \\
\midrule
\texttt{2B3W} & $1214$ & $168$ & $11898$ & $716$ & $13.8$ & $13918$ & $800$ & $\mathbf{7.1}$ & $\mathbf{7466}$ & $217$ & $37.8$ & $9330$  & $\mathbf{176}$ & $39.8$ \\
\texttt{2M47} & $1549$ & $163$ & $11468$ & $692$ &  $6.1$ & $13438$ & $769$ & $\mathbf{2.9}$ & $\mathbf{7286}$ & $217$ & $18.1$ & $9078$  & $\mathbf{180}$ & $16.8$ \\
\texttt{2LF2} & $1853$ & $175$ & $12642$ & $772$ &  $7.4$ & $14802$ & $830$ & $\mathbf{3.8}$ & $\mathbf{8770}$ & $336$ & $19.6$ & $11960$ & $\mathbf{208}$ & $17.8$ \\
\texttt{2L06} & $1693$ & $155$ & $10958$ & $718$ & $10.3$ & $12802$ & $777$ & $\mathbf{7.2}$ & $\mathbf{6946}$ & $210$ & $29.2$ & $9012$  & $\mathbf{180}$ & $24.8$ \\
\bottomrule
\end{tabular}%
}
\end{table}

\begin{table}[ht]
\centering
\caption{\textbf{Compute-normalized comparison of Guidance and IT-Optimization at matched denoiser-call budgets on four PDB targets}. We report the NOE violation percentage (Viol \%, $\downarrow$) under two compute budgets: $200$ denoiser calls (top) and $800$ (equivalent to $4$ outer loops in IT-Opt) denoiser calls (bottom). Bold values indicate the best result per row.}
\label{tab:compute_normalized}
\setlength{\tabcolsep}{6pt}
\renewcommand{\arraystretch}{1.1}
\begin{tabular}{ccc|c|cc}
\toprule
\textbf{PDB ID} & \textbf{\#NOEs} & \textbf{seq len} & \textbf{\# Denoiser Calls} & \textbf{Guidance Viol \%} & \textbf{IT-Opt Viol \%} \\
\midrule
\texttt{2B3W} & $1214$ & $168$ & \multirow{4}{*}{$200$} & $\mathbf{13.7}$ & $14.4$ \\
\texttt{2M47} & $1549$ & $163$ &                        & $\mathbf{5.8}$  & $6.2$  \\
\texttt{2LF2} & $1853$ & $175$ &                        & $\mathbf{7.3}$  & $8.7$  \\
\texttt{2L06} & $1693$ & $155$ &                        & $\mathbf{9.8}$  & $11.8$ \\
\midrule
\texttt{2B3W} & $1214$ & $168$ & \multirow{4}{*}{$800$} & $13.7$ & $\mathbf{8.4}$ \\
\texttt{2M47} & $1549$ & $163$ &                        & $5.8$  & $\mathbf{3.4}$ \\
\texttt{2LF2} & $1853$ & $175$ &                        & $7.3$  & $\mathbf{4.8}$ \\
\texttt{2L06} & $1693$ & $155$ &                        & $9.8$  & $\mathbf{8.3}$ \\
\bottomrule
\end{tabular}
\end{table}

\begin{table}[t]
\centering
\caption{\textbf{Inter-chain hydrogen-bond recovery in the \texttt{1YCS} complex.} This table summarizes hydrogen-bond donor-acceptor pairs observed in the crystal structure of \texttt{1YCS} and compares their recovery across sampled predictions. The AlphaFold3 column reports the number of AlphaFold3 samples (out of 5) in which a given hydrogen bond is present, while IT-Opt reports the corresponding count for ipTM-based inference-time optimization. $\mathbf{\Delta}$ denotes the difference (IT-Opt $-$ AlphaFold3). Across all listed contacts, unguided AlphaFold3 recovers $16$ out of $50$ possible hydrogen-bond occurrences (average recovery rate $0.32$), whereas ipTM-based inference-time optimization recovers $23$ out of $50$ (average recovery rate $0.46$), corresponding to an absolute increase of $0.14$, and a relative enrichment of $44\%$.}
\label{tab:1ycs_hbond}
\centering
\begin{tabular}{@{}llccc@{}}
\toprule
\textbf{Donor} & \textbf{Acceptor} & \textbf{AlphaFold3} & \textbf{IT-Opt} & $\mathbf{\Delta}$ \\
\midrule
\texttt{A:ARG280:NH1} & \texttt{B:GLU493:O}    & $0/5$ & $3/5$ & $+3$ \\
\texttt{A:ARG248:NH1} & \texttt{B:GLU495:OE2}  & $4/5$ & $5/5$ & $+1$ \\
\texttt{B:ASN473:ND2} & \texttt{A:ASN247:O}    & $0/5$ & $1/5$ & $+1$ \\
\texttt{B:TRP498:NE1} & \texttt{A:SER241:O}    & $4/5$ & $5/5$ & $+1$ \\
\texttt{A:ARG248:NH2} & \texttt{B:ASP475:OD1}  & $4/5$ & $5/5$ & $+1$ \\
\texttt{A:ASN247:ND2} & \texttt{B:TYR469:OH}   & $0/5$ & $1/5$ & $+1$ \\
\texttt{A:HIS178:NE2} & \texttt{B:MET422:O}    & $0/5$ & $0/5$ & $0$  \\
\texttt{A:ARG248:NH1} & \texttt{B:ASP494:OD2}  & $0/5$ & $0/5$ & $0$  \\
\texttt{A:SER183:N}   & \texttt{B:SER425:O}    & $0/5$ & $0/5$ & $0$  \\
\texttt{B:ASN473:ND2} & \texttt{A:ARG248:O}    & $4/5$ & $3/5$ & $-1$ \\
\bottomrule
\end{tabular}
\end{table}

\begin{table}[t]
\centering
\caption{\textbf{Inter-chain hydrogen-bond recovery in the \texttt{2LY4} complex.} This table summarizes hydrogen-bond donor-acceptor pairs observed in the NMR-determined structure of \texttt{2LY4} at the helical region in the peptide (highlighted in Figure \ref{fig:iptm_opt_crystal}, Panel B), and compares their recovery across sampled predictions. The AlphaFold3 column reports the number of AlphaFold3 samples (out of $100$) in which a given hydrogen bond is present, while IT-Opt reports the corresponding count for ipTM-based inference-time optimization. $\mathbf{\Delta}$ denotes the difference (IT-Opt $-$ AlphaFold3). Across the listed contacts, unguided AlphaFold3 recovers $74/900$ hydrogen-bond occurrences ($0.082$), whereas ipTM-based inference-time optimization recovers $116/900$ ($0.129$), corresponding to a relative enrichment of $57\%$.}
\label{tab:2LY4_hbond}
\centering
\begin{tabular}{@{}llccc@{}}
\toprule
\textbf{Donor} & \textbf{Acceptor} & \textbf{AlphaFold3} & \textbf{IT-Opt} & $\mathbf{\Delta}$ \\
\midrule
\texttt{A:LYS27:NZ}  & \texttt{B:ASP57:OD2} & $8/100$  & $21/100$ & $+13$ \\
\texttt{A:ARG23:NE}  & \texttt{B:PHE54:O}   & $11/100$ & $22/100$ & $+11$ \\
\texttt{A:LYS27:NZ}  & \texttt{B:GLU56:OE2} & $10/100$ & $17/100$ & $+7$  \\
\texttt{A:ARG23:NE}  & \texttt{B:TRP53:O}   & $20/100$ & $24/100$ & $+4$  \\
\texttt{A:ARG23:NE}  & \texttt{B:GLU56:OE2} & $9/100$  & $12/100$ & $+3$  \\
\texttt{A:GLY1:N}    & \texttt{B:PRO60:O}   & $0/100$  & $1/100$  & $+1$  \\
\texttt{A:ARG9:NH1}  & \texttt{B:TRP53:O}   & $16/100$ & $17/100$ & $+1$  \\
\texttt{A:LYS27:NZ}  & \texttt{B:TRP53:O}   & $0/100$  & $1/100$  & $+1$  \\
\texttt{A:GLY1:N}    & \texttt{B:GLY59:O}   & $0/100$  & $1/100$  & $+1$  \\
\bottomrule
\end{tabular}
\end{table}

\clearpage
\onecolumn
\subsection{Pseudocodes}
\begin{algorithm}[ht]
\caption{Inference-time Optimization}\label{alg:msa_opt}
\begin{algorithmic}
\STATE \textbf{Input:} Sequence $\mathbf{a}$; input features $\{\mathbf{f}^*\}$, $\{\mathbf{s}_i^{\text{inputs}}\}$; initial trunk embeddings $\mathbf{Z}_\mathrm{init}$; experimental observation $\mathbf{y}$; noise schedule $[\sigma_0, \sigma_1, \dots, \sigma_T]$; noise factor $\gamma_0 = 0.8$, minimum noise factor $\gamma_{\text{min}} = 1.0$; noise scale $\lambda$; step scale $\kappa$; ensemble size $n$; outer iterations $K$; inner MSA steps $M=1$; learning rate $\eta_z$; prior weight $\lambda_p$
\STATE \textbf{Output:} Optimized batch trunk embeddings $\ens{Z}$
\STATE $\ens{Z} = [\mathbf{Z}_{\mathrm{init}}, \dots \mathbf{Z}_{\mathrm{init}}]$ \hfill $\triangleright$ Initialize from Pairformer
\FOR{$k = 1$ to $K$ \hfill $\triangleright$ Outer optimization loops}
  \STATE $\ens{X}_l \sim \sigma_T \cdot [\mathbf{N}^1, \dots, \mathbf{N}^{n}]^T$ \hfill $\mathbf{N}^{i} \sim \mathcal{N}(\mathbf{0}, \mathbf{I}), \ens{X}_l \in \mathbb{R}^{n \times m \times 3}$
  \FOR{$\sigma_\tau \in [\sigma_{T-1}, \dots, \sigma_0]$ \hfill $\triangleright$ Reverse diffusion schedule}
    \STATE $\ens{X}_l \gets \text{CentreRandomAugmentation}(\ens{X}_l)$
    \STATE $\gamma \gets \gamma_0$ if $\sigma_\tau > \gamma_{\text{min}}$ else $0$
    \STATE $\hat{t} \gets \dfrac{\sigma_{\tau-1}(\gamma + 1)}{\sqrt{t^2 - \sigma_\tau^2 / \sigma_{\tau-1}}}$
    \STATE $\ens{\xi}_l \gets \lambda \sqrt{\hat{t}^2 - \sigma_\tau^2} \cdot [\mathbf{N}^{1}, \dots, \mathbf{N}^{n}]^T$ \hfill $\mathbf{N}^{i} \sim \mathcal{N}(\mathbf{0}, \mathbf{I})$
    \STATE $\ens{X}_l^{\text{noisy}} \gets \ens{X}_l + \ens{\xi}_l$
    \FOR{$j = 1$ to $M$ \hfill $\triangleright$ Inner embedding optimization}
      \STATE $\enshat{X}_0 \gets \text{DiffusionModule}(\{\ens{X}_l^{\text{noisy}}\}, \hat{t}, \{\mathbf{f}^*\}, \{\mathbf{s}_i^{\text{inputs}}\}, \ens{Z})$
      \STATE $\ens{Z} \gets \ens{Z} + \nabla_{\ens{Z}}(\eta_{z} \log p(\mathbf{y}|\hat{\ens{X}}_0) + \lambda_p \log p(\ens{Z}|\mathbf{a}))$\hfill $\triangleright$ Embedding update
    \ENDFOR
    \STATE $\enshat{X}_0 \gets \text{DiffusionModule}(\{\ens{X}_l^{\text{noisy}}\}, \hat{t}, \{\mathbf{f}^*\}, \{\mathbf{s}_i^{\text{inputs}}\}, \ens{Z})$
    \STATE $\delta_l \gets (\ens{X}_l - \enshat{X}_0) / \hat{t}$
    \STATE $dt \gets \beta_\tau - \hat{t}$
    \STATE $\ens{X}_l \gets \ens{X}_l^{\text{noisy}} + \kappa \cdot dt \cdot \delta_l$ \hfill $\triangleright$ Reverse diffusion step
  \ENDFOR
\ENDFOR
\STATE \textbf{return} $\ens{Z}$
\end{algorithmic}
\end{algorithm}
\begin{algorithm}[ht]
\caption{Inference-time Optimization with Boltzmann Reweighting}\label{alg:msa_opt_reward}
\begin{algorithmic}
\STATE \textbf{Input:} Same as Algorithm~\ref{alg:msa_opt}, plus: \textcolor{darkgreen}{energy function $E_\phi\colon \mathbb{R}^{m \times 3} \to \mathbb{R}$}; inverse temperature $\beta = 1.68 \text{ kcal}^{-1} \mathrm{mol}$
\STATE \textbf{Output:} Optimized batch trunk embeddings $\ens{Z}$
\STATE $\ens{Z} = [\mathbf{Z}_{\mathrm{init}}, \dots \mathbf{Z}_{\mathrm{init}}]$ \hfill $\triangleright$ Initialize from Pairformer
\FOR{$k = 1$ to $K$ \hfill $\triangleright$ Outer optimization loops}
  \STATE $\ens{X}_l \sim \sigma_T \cdot [\mathbf{N}^1, \dots, \mathbf{N}^{n}]^T$ \hfill $\mathbf{N}^{i} \sim \mathcal{N}(\mathbf{0}, \mathbf{I}), \ens{X}_l \in \mathbb{R}^{n \times m \times 3}$
  \FOR{$\sigma_\tau \in [\sigma_{T-1}, \dots, \sigma_0]$ \hfill $\triangleright$ Reverse diffusion schedule}
    \STATE $\ens{X}_l \gets \text{CentreRandomAugmentation}(\ens{X}_l)$
    \STATE $\gamma \gets \gamma_0$ if $\sigma_\tau > \gamma_{\text{min}}$ else $0$
    \STATE $\hat{t} \gets \dfrac{\sigma_{\tau-1}(\gamma + 1)}{\sqrt{t^2 - \sigma_\tau^2 / \sigma_{\tau-1}}}$
    \STATE $\ens{\xi}_l \gets \lambda \sqrt{\hat{t}^2 - \sigma_\tau^2} \cdot [\mathbf{N}^{1}, \dots, \mathbf{N}^{n}]^T$ \hfill $\mathbf{N}^{i} \sim \mathcal{N}(\mathbf{0}, \mathbf{I})$
    \STATE $\ens{X}_l^{\text{noisy}} \gets \ens{X}_l + \ens{\xi}_l$
    \FOR{$j = 1$ to $M$ \hfill $\triangleright$ Inner optimization loop with reweighting}
      \STATE $\enshat{X}_0 \gets \text{DiffusionModule}(\{\ens{X}_l^{\text{noisy}}\}, \hat{t}, \{\mathbf{f}^*\}, \{\mathbf{s}_i^{\text{inputs}}\}, \ens{Z})$
      \STATE \textcolor{darkgreen}{$w^i \gets \dfrac{\exp(-\beta E_\phi(\hat{\mathbf{X}}_{0}^{i}))}{\sum_{k=1}^{n} \exp(-\beta E_\phi(\hat{\mathbf{X}}_{0}^{k}))}$ for $i \in [n]$} \hfill $\triangleright$ \textcolor{darkgreen}{Boltzmann weights}
      \STATE $\ens{Z} \gets \ens{Z} + \nabla_{\ens{Z}}(\eta_{z} \log p(\mathbf{y}| \hat{\ens{X}}_0; \textcolor{darkgreen}{\mathbf{w}}) + \lambda_p \log p(\ens{Z}|\mathbf{a}))$\hfill $\triangleright$ Embedding update
    \ENDFOR
    \STATE $\enshat{X}_0 \gets \text{DiffusionModule}(\{\ens{X}_l^{\text{noisy}}\}, \hat{t}, \{\mathbf{f}^*\}, \{\mathbf{s}_i^{\text{inputs}}\}, \ens{Z})$
    \STATE $\delta_l \gets (\ens{X}_l - \enshat{X}_0) / \hat{t}$
    \STATE $dt \gets \beta_\tau - \hat{t}$
    \STATE $\ens{X}_l \gets \ens{X}_l^{\text{noisy}} + \kappa \cdot dt \cdot \delta_l$ \hfill $\triangleright$ Reverse diffusion step
  \ENDFOR
\ENDFOR
\STATE \textbf{return} $\ens{Z}$
\end{algorithmic}
\end{algorithm}
\clearpage

\section{Theoretical analysis of IT-Optimization}\label{app:theory}
We provide a formal convergence analysis of the inference-time optimization procedure described in Algorithm \ref{alg:msa_opt}, modeling the outer loop as Biased Stochastic Gradient Ascent (BSGA) \cite{asi2021stochastic}. Specifically, we study the convergence properties of inference-time updates to the diffusion model's input embeddings under a surrogate objective defined as the diffusion trajectory-averaged log-likelihood. Assuming local Lipschitz smoothness and bounded inner-loop updates -- conditions justified by a Gaussian prior and gradient clipping that restrict optimization to a compact region with finite curvature -- we formalize IT-Opt as biased stochastic gradient ascent consistent with our implementation. The dominant estimator bias, introduced by the stochastic denoiser, is bounded via the Cauchy-Schwarz inequality in terms of the Fisher information trace, while variance is controlled through inner-loop early stopping. Despite the presence of bias, we establish an $\mathcal{O}(1/K)$ convergence rate to a near-stationary region after $K$ outer iterations and derive a sufficient condition for expected monotonic ascent.

\subsection{Formalizing the Surrogate Objective and its True Gradient}
 
To strictly match the evaluation mechanism of our algorithmic proxy, we formally redefine our target as the trajectory-averaged surrogate objective $\tilde{\mathcal{J}}(\ens{Z})$:
\begin{equation}
  \tilde{\mathcal{J}}(\ens{Z})
  \;=\;
  \mathbb{E}_{t, \ens{X}_t \mid \ens{Z}}\!\Bigl[
    \log p \bigl(\mathbf{y} \mid D_\theta(\ens{X}_t; \ens{Z}, t)\bigr)
  \Bigr]
  \;+\;
  \lambda_p \log p(\ens{Z} \mid \mathbf{a}).
  \label{eq:surrogate_obj}
\end{equation}
By operating directly on the denoiser predictions over the sampled diffusion schedule $t$, this surrogate objective circumvents the intractable analytic bounds of the exact evidence lower bound (ELBO) when subjected to severe, non-linear experimental log-likelihood evaluations.

To perform optimization, we compute the gradient $\nabla_{\ens{Z}}\tilde{\mathcal{J}}(\ens{Z})$. Applying the log-derivative trick to the conditional expectation yields the \emph{exact} analytic gradient of the surrogate:
\begin{equation}
\begin{split}
  \nabla_{\ens{Z}} \tilde{\mathcal{J}}(\ens{Z})
  \;=&\;
  \mathbb{E}_{t, \ens{X}_t \mid \ens{Z}}\!\Big[
  \underbrace{
      \nabla_{\ens{Z}} \log p \bigl(\mathbf{y} \mid D_\theta(\ens{X}_t; \ens{Z}, t)\bigr)
      \;+\;
      \lambda_p \nabla_{\ens{Z}} \log p(\ens{Z} \mid \mathbf{a})
    }_{\text{Our Computed Proxy Gradient } (\hat{g}_t)}\Big] \\
  &+\;
  \mathbb{E}_{t, \ens{X}_t \mid \ens{Z}}\Big[
    \underbrace{
      \log p \bigl(\mathbf{y} \mid D_\theta(\ens{X}_t; \ens{Z}, t)\bigr)
      \nabla_{\ens{Z}} \log p_t(\ens{X}_t \mid \ens{Z})
    }_{\text{Dropped REINFORCE Score Term Bias}}
  \Big]
\end{split}
\label{eq:true_grad}
\end{equation}
Algorithm \ref{alg:msa_opt} purposefully utilizes pathwise (backpropagation) automatic differentiation strictly through the differentiable denoiser operator $D_\theta$. This computational choice generates the tractable proxy gradient defined in Equation~\eqref{eq:proxy_grad}, purposefully dropping the reinforcing score dynamics dependent on the sampling distribution. The nature of this resulting bias is algebraically characterized in Lemma~\ref{lem:bias}.

\subsection{Formalizing the Proxy Gradient and Sources of Bias}
 
In Algorithm \ref{alg:msa_opt}, the inner loop computes a proxy gradient using the pretrained denoiser network $D_\theta(\ens{X}_t;\ens{Z},t)$.
Let $\hat{g}_t(\ens{Z})$ denote the stochastic gradient estimator evaluated at a single diffusion timestep $t$:
\begin{equation}
  \hat{g}_t(\ens{Z})
  \;=\;
  \nabla_{\ens{Z}} \log p \bigl(\mathbf{y} \mid
    D_\theta(\ens{X}_t;\ens{Z},t)\bigr)
  \;+\;
  \lambda_p \nabla_{\ens{Z}} \log p(\ens{Z} \mid \mathbf{a}).
  \label{eq:proxy_grad}
\end{equation}
 
\begin{lemma}[Sources of Estimator Bias]
\label{lem:bias}
The stochastic estimator $\hat{g}_t(\ens{Z})$ defined in
\eqref{eq:proxy_grad} is a \emph{biased} estimator of the true gradient $\nabla_{\ens{Z}}\tilde{\mathcal{J}}(\ens{Z})$.
\end{lemma}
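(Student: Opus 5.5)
I will compute the expected estimator directly and compare it with the exact surrogate gradient in \eqref{eq:true_grad}. The bias is
\[
\mathrm{Bias}(\ens{Z}) \;=\; \mathbb{E}_{t,\ens{X}_t\mid\ens{Z}}\bigl[\hat{g}_t(\ens{Z})\bigr]-\nabla_{\ens{Z}}\tilde{\mathcal{J}}(\ens{Z}).
\]
By \eqref{eq:proxy_grad}, the first expectation is exactly the first line of \eqref{eq:true_grad}. Subtracting therefore leaves
\[
\mathrm{Bias}(\ens{Z}) = -\,\mathbb{E}_{t,\ens{X}_t\mid\ens{Z}}\Bigl[\log p\bigl(\mathbf{y}\mid D_\theta(\ens{X}_t;\ens{Z},t)\bigr)\,\nabla_{\ens{Z}}\log p_t(\ens{X}_t\mid\ens{Z})\Bigr],
\]
which is the dropped REINFORCE term. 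The derivation of \eqref{eq:true_grad} requires a short justification. I will write the expectation as $\int \ell(\ens{X}_t,\ens{Z})\,p_t(\ens{X}_t\mid\ens{Z})\,d\ens{X}_t$ and differentiate under the integral, using dominated convergence with local Lipschitz bounds. The product rule then produces the pathwise term plus $\ell\,\nabla p_t=\ell\,p_t\nabla\log p_t$. The prior term depends only on $\ens{Z}$ and passes through unchanged.

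Next I identify the sources of bias. Besides the score term, there is a second source: Algorithm \ref{alg:msa_opt} evaluates the denoiser on states produced by a trajectory in which $\ens{Z}$ itself changes along the way. The samples $\ens{X}_t$ are therefore not drawn from $p_t(\cdot\mid\ens{Z})$ at the current $\ens{Z}$. I will state this as a distribution-mismatch term, $\mathbb{E}_{q}[\hat g_t]-\mathbb{E}_{p_t(\cdot\mid\ens{Z})}[\hat g_t]$, where $q$ denotes the law of the iterate. This term vanishes only if the inner step size $\eta_z\to 0$. The main claim, however, rests on the score term.

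To show that the estimator is genuinely biased, I need the score term to be nonzero in general; this is the main obstacle. The centered version $\mathbb{E}[(\ell-c)\nabla\log p_t]$ has the same value for any constant $c$, because $\mathbb{E}[\nabla_{\ens{Z}}\log p_t]=0$. The term is therefore the covariance $\mathrm{Cov}\bigl(\ell,\nabla_{\ens{Z}}\log p_t\bigr)$, and it vanishes only when the likelihood is uncorrelated with the score. I will exhibit a simple case where the covariance is nonzero. Take a linear-Gaussian toy model with $\ens{X}_t\sim\mathcal{N}(A\ens{Z},s_t^2 I)$, a denoiser that does not depend on $\ens{Z}$ directly, and a quadratic log-likelihood. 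The pathwise term is then zero, while the true gradient is clearly nonzero, so the bias equals the full gradient.

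For a quantitative statement that feeds into the later convergence analysis, I will also apply Cauchy--Schwarz to the covariance form:
\[
\|\mathrm{Bias}\|\le \sqrt{\mathrm{Var}(\ell)}\;\sqrt{\mathrm{tr}\,\mathcal{I}_t(\ens{Z})},
\]
where $\mathcal{I}_t$ is the Fisher information of $p_t(\cdot\mid\ens{Z})$. This bound makes explicit that the bias is controlled by the spread of the likelihood and the sensitivity of the sampling distribution to $\ens{Z}$.
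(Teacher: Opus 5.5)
Your proof follows the paper's core argument: expand $\nabla_{\ens{Z}}\tilde{\mathcal{J}}$ by the log-derivative trick as in \eqref{eq:true_grad}, identify the bias with minus the dropped REINFORCE term, and treat the nested drift as a secondary source (the paper defers this to Lemma~\ref{lem:total_bias_drift}). You go further in a way that matters.

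\begin{itemize}
\item The paper concludes $\mathbb{E}[\hat g_t]\neq\nabla_{\ens{Z}}\tilde{\mathcal{J}}$ merely because a term was omitted, without checking that this term can be nonzero. Your covariance form $\mathrm{Cov}(\ell,\nabla_{\ens{Z}}\log p_t)$ plus an explicit example actually proves strict bias, so your version is the more complete proof.
\item Your centered bound with $\sqrt{\mathrm{Var}(\ell)}$ sharpens Lemma~\ref{lem:score_bias}. That lemma's $\mathbb{E}[(\log p)^2]$ factor changes when a constant is added to the log-likelihood, although the bias does not.
\end{itemize}

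Two refinements to your example. First, the bias equals minus the \emph{data-term} gradient, not the full gradient, because the prior gradient is captured exactly.

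Second, in the AF3 setting $p_t(\cdot\mid\ens{Z})$ is generated from $\ens{Z}$-independent noise by the same denoiser. A denoiser independent of $\ens{Z}$ at every $t$ would therefore make $p_t$ independent of $\ens{Z}$ and kill the score term. A version consistent with this structure uses:
\begin{itemize}
\item data $\mathcal{N}(A\ens{Z},\tau^2 I)$;
\item its exact denoiser $D=A\ens{Z}+c_t(\ens{X}_t-A\ens{Z})$, with $c_t=\tau^2/(\tau^2+\sigma_t^2)$;
\item the likelihood $\ell=-\|\mathbf{y}-D\|^2$.
\end{itemize}
The expected pathwise data gradient is then $2(1-c_t)A^\top(\mathbf{y}-A\ens{Z})$, while the true gradient is $2A^\top(\mathbf{y}-A\ens{Z})$. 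The bias is therefore $-2c_tA^\top(\mathbf{y}-A\ens{Z})\neq 0$. With $D(\ens{X}_t)=\ens{X}_t$, your example is the $\sigma_t\to 0$ limit of this model.

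Finally, the drift term vanishes \emph{as} $\eta_z\to 0$; the claim that it vanishes ``only if'' $\eta_z\to 0$ is unjustified.
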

 
\begin{proof}
Because the marginal distribution
$p_t(\ens{X}_t\mid\ens{Z})$ depends on the conditioning vector $\ens{Z}$, the linear gradient operator $\nabla_{\ens{Z}}$ does not commute with the sampling expectation $\mathbb{E}_{\ens{X}_t \mid \ens{Z}}$. Differentiating purely via pathwise backpropagation exactly omits how parameter shifts in $\ens{Z}$ influence the underlying base sampling densities. The explicit formulation of this omission yields the REINFORCE score-function product \cite{williams1992simple}: $\mathbb{E}_{\ens{X}_t \mid \ens{Z}}\!\bigl[
  \log p(\mathbf{y}\mid D_\theta(\ens{X}_t; \ens{Z}, t))\,
  \nabla_{\ens{Z}} \log p_t(\ens{X}_t \mid \ens{Z})
\bigr]$.

Our optimization procedure systematically drops this computationally unstable term, trading rigorous unbiasedness for scalable, lower-variance structural sampling trajectories targeting the proxy gradient. Consequently,
$\mathbb{E}_{t, \ens{X}_t \mid \ens{Z}}[\hat{g}_t(\ens{Z})]
 \neq
 \nabla_{\ens{Z}}\tilde{\mathcal{J}}(\ens{Z})$,
yielding the dropped score term as the explicit score-function source of bias before accounting for the additional nested-drift bias analyzed below.
\end{proof}

\noindent
Since $\hat{G}(\ens{Z}^k)$ aggregates these individually biased estimators $\hat{g}_t$ sequentially over the inner loop evaluations, the composite optimization oracle structurally inherits a boundedly finite systematic target bias, formalized by $\mathcal{B}(\ens{Z}) \neq 0$.

\begin{definition}[Oracle Displacement]
We formally define the net embedding displacement oracle $\hat{G}(\ens{Z}^k)$ produced by the inner loop at outer iteration $k$:
\begin{equation}
  \hat{G}(\ens{Z}^k)
  \;=\;
  \frac{\ens{Z}^k_{0} - \ens{Z}^k_{T}}{\eta_{\mathrm{inner}}} \;=\; \frac{1}{\eta_{\mathrm{inner}}} \sum_t \alpha_t \hat{g}_t(\ens{Z}_t^k, \ens{X}_t)
\end{equation}
where the weights $\alpha_t = \eta_z$ correspond to the inner-loop step size, making the effective inner-loop integration horizon $\eta_{\mathrm{inner}} = T \eta_z$.
\end{definition} Furthermore, because the outer loop updates by carrying forward the final optimized embedding (equivalent to the update $\ens{Z}^{k+1} = \ens{Z}^k + \eta_{\mathrm{inner}} \hat{G}$), the effective outer-loop learning rate governing the ascent is exactly $\eta = \eta_{\mathrm{inner}}$. Instead of empirically assuming this oracle has bounded variance and bias, we derive these critical properties directly from the problem geometry.

\subsection{Derivation of the Oracle Properties: Nested Optimization and Stochasticity}

The displacement $\hat{G}(\ens{Z}^k)$ is accumulated over multiple inner-loop steps evaluated at dynamically changing non-stationary coordinates $\ens{Z}_t$. To rigorously deploy the theory of Biased Stochastic Gradient Ascent for the outer loop, we must mathematically account for the \emph{tracking drift} originating from the nested updates, as well as the variance injected by the denoiser stochasticity $\ens{X}_t \sim p_t(\ens{X}_t \mid \ens{Z})$.

\begin{assumption}[Local Lipschitz Smoothness of the Proxy Gradient]
\label{ass:lipschitz_proxy}
We assume the proxy gradient $\hat{g}_t(\ens{Z}, \ens{X}_t)$ is locally $L_Z$-Lipschitz almost surely with respect to the embedding $\ens{Z}$ on the compact sublevel set $\{\ens{Z} : \|\ens{Z} - \ens{Z}_0\| \leq R\}$ visited by the initialized algorithm:
\begin{equation}
  \|\hat{g}_t(\ens{Z}_1, \ens{X}_t) - \hat{g}_t(\ens{Z}_2, \ens{X}_t)\| \;\leq\; L_Z \|\ens{Z}_1 - \ens{Z}_2\| \quad a.s.
\end{equation}
and locally $L_X$-Lipschitz with respect to the noisy coordinate $\ens{X}_t$:
\begin{equation}
  \|\hat{g}_t(\ens{Z}, \ens{X}_1) - \hat{g}_t(\ens{Z}, \ens{X}_2)\| \;\leq\; L_X \|\ens{X}_1 - \ens{X}_2\|.
\end{equation}
For any non-differentiable criteria like the $L_1$ norm in X-ray data, we assume standard continuous smoothing (e.g., pseudo-Huber penalty) such that its gradients are globally Lipschitz continuous. Furthermore, we assume standard gradient clipping inner-loop bounding limits the maximum per-step update divergence strictly by a constant $C_{\mathrm{clip}}$.
\end{assumption}

\begin{lemma}[Variance from Denoiser Stochasticity]
\label{lem:variance}
Under Assumption~\ref{ass:lipschitz_proxy}, the variance of the inner-loop estimator $\hat{G}(\ens{Z}^k)$ is bounded proportionally to the diffusion noise scale.
\end{lemma}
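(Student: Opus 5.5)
To bound the variance of $\hat{G}(\ens{Z}^k)$, I would decompose the oracle into a deterministic reference trajectory plus a stochastic fluctuation driven by the denoiser noise, and then control the fluctuation using the $L_X$-Lipschitz property in Assumption~\ref{ass:lipschitz_proxy}.

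\textbf{Reduction to a per-step quantity.} Write $\hat{G}(\ens{Z}^k)=\frac{1}{T}\sum_t \hat{g}_t(\ens{Z}^k_t,\ens{X}_t)$, which follows from $\alpha_t=\eta_z$ and $\eta_{\mathrm{inner}}=T\eta_z$. Let $\bar{\ens{X}}_t=\mathbb{E}[\ens{X}_t\mid\ens{Z}^k]$ denote the mean noisy state. For each $t$, decompose
\[
\hat{g}_t(\ens{Z}^k_t,\ens{X}_t)=\hat{g}_t(\ens{Z}^k_t,\bar{\ens{X}}_t)+\Delta_t ,
\]
and apply the $L_X$ bound to get $\|\Delta_t\|\le L_X\|\ens{X}_t-\bar{\ens{X}}_t\|$. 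Under the forward/reverse kernel, $\ens{X}_t$ is a denoised signal plus Gaussian noise of scale $\sigma_t$ (the $\sqrt{\hat t^2-\sigma_\tau^2}$ injection in Algorithm~\ref{alg:msa_opt}). This gives $\mathbb{E}\|\ens{X}_t-\bar{\ens{X}}_t\|^2\le d\,\sigma_t^2 + \mathrm{Var}(D_\theta)$, where $d=3mn$. Because $D_\theta$ is bounded on the compact set, the second term can be absorbed into a constant times $\sigma_t^2$, or bounded separately.

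\textbf{Handling the embedding drift.} The difficulty is that $\ens{Z}^k_t$ itself depends on earlier noise realizations, so the first term is also random. I would control this with the clipping constant $C_{\mathrm{clip}}$ and $L_Z$. Fix a reference deterministic path $\tilde{\ens{Z}}_t$ generated with $\bar{\ens{X}}$. The difference satisfies the recursion
\[
\|\ens{Z}_{t-1}-\tilde{\ens{Z}}_{t-1}\|\le(1+\eta_zL_Z)\|\ens{Z}_t-\tilde{\ens{Z}}_t\|+\eta_zL_X\|\ens{X}_t-\bar{\ens{X}}_t\| .
\]
A discrete Grönwall argument then bounds this drift by $e^{\eta_{\mathrm{inner}}L_Z}\eta_zL_X\sum_s\|\ens{X}_s-\bar{\ens{X}}_s\|$. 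Combining the two pieces via $\mathrm{Var}(\hat G)\le\mathbb{E}\|\hat G-\tilde G\|^2$ and Cauchy--Schwarz over the sum, I would aim for
\[
\mathrm{Var}\bigl(\hat{G}(\ens{Z}^k)\bigr)\le \sigma_G^2 := C\,L_X^2\,(1+\eta_{\mathrm{inner}}L_Z e^{\eta_{\mathrm{inner}}L_Z})^2\,\frac{d}{T}\sum_t\sigma_t^2 .
\]
This is proportional to the diffusion noise scale, as the statement requires.

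\textbf{Main obstacle.} The hard step is the drift coupling, because the noise realizations enter both arguments of $\hat g_t$ and the noise terms are correlated across $t$ through the trajectory. I would first try the crude Cauchy--Schwarz/Grönwall bound above. If the exponential factor is objectionable, I would use clipping instead: $\|\ens{Z}_t-\ens{Z}^k_T\|\le T\eta_z C_{\mathrm{clip}}$ keeps the iterates inside the radius-$R$ set, which makes the constants uniform. The same remark covers early stopping, which truncates the sum to the high-noise steps that are actually used.
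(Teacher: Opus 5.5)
Your route differs from the paper's, and it holds up at the paper's level of rigor once one step is repaired.

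\textbf{How the two arguments differ.} The paper argues per step. It applies the law of total variance conditioned on the clean structure $\ens{X}_0$ and bounds the Gaussian-kernel part by $L_X^2\sigma_t^2$. It simply posits a constant $V_{\mathrm{struct}}$ for the variance of the conditional mean. It then aggregates via $\operatorname{Var}(\sum_t w_t\hat g_t)\le\sup_t\operatorname{Var}(\hat g_t)$ (Cauchy--Schwarz on the covariances), giving $\sigma^2=L_X^2\sup_{t\le t_{\mathrm{stop}}}\sigma_t^2+V_{\mathrm{struct}}$. Because it bounds $\operatorname{Var}(\hat g_t(\ens{Z}))$ at a \emph{fixed} $\ens{Z}$, it never addresses the randomness of $\ens{Z}^k_t$ along the inner trajectory.

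Your coupling to the deterministic reference path $\tilde{\ens{Z}}_t$, followed by discrete Gr\"onwall, handles exactly that gap. With $e_s=\|\ens{X}_s-\bar{\ens{X}}_s\|$ you get $\|\hat G-\tilde G\|\le L_X\bigl(1+\eta_{\mathrm{inner}}L_Ze^{\eta_{\mathrm{inner}}L_Z}\bigr)\frac{1}{T}\sum_s e_s$. The step $\operatorname{Var}(\hat G)\le\mathbb{E}\|\hat G-\tilde G\|^2$ is valid because $\tilde G$ is deterministic given $\ens{Z}^k$, and clipping is nonexpansive, so it does not break your recursion. You pay the amplification factor. In exchange you get an average $\frac{1}{T}\sum_t\sigma_t^2$ instead of a supremum, and you make explicit the dimension factor $d$ that the paper's $L_X^2\sigma_t^2$ leaves implicit. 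The cross-step correlation you call the main obstacle is already absorbed by your Jensen step, just as it is by the paper's covariance bound.

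\textbf{The step that fails.} You absorb $V_t:=\mathrm{Var}(D_\theta)$ into a constant times $\sigma_t^2$, but boundedness of $D_\theta$ only gives a constant. The seed-to-seed spread of the signal part of $\ens{X}_t$ does not shrink as $\sigma_t\to 0$, since different noise draws end at different structures. So an upper bound purely proportional to $\frac{1}{T}\sum_t\sigma_t^2$ cannot hold in general.

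Keep the term additive instead. You then obtain $L_X^2\bigl(1+\eta_{\mathrm{inner}}L_Ze^{\eta_{\mathrm{inner}}L_Z}\bigr)^2\frac{1}{T}\sum_t(d\sigma_t^2+V_t)$. This has the same affine shape as the paper's $L_X^2\sup_t\sigma_t^2+V_{\mathrm{struct}}$; the paper's ``proportionally'' really means ``affinely''.

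Absorption is legitimate only if $\sigma_t\ge\sigma_{\min}>0$ on the optimized steps, with constant $d+\sup_tV_t/\sigma_{\min}^2$. Your early-stopping remark, which keeps only the high-noise steps, is what would supply $\sigma_{\min}$. (The paper invokes the same truncation in the opposite role, to cap $\sup_t\sigma_t^2$ from above.)

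Likewise, replacing Gr\"onwall by the clipping bound $\|\ens{Z}_t-\ens{Z}^k_T\|\le T\eta_zC_{\mathrm{clip}}$ would add a noise-independent term of order $L_Z\eta_{\mathrm{inner}}C_{\mathrm{clip}}$. That is the quantity the paper books as bias in Lemma~\ref{lem:total_bias_drift}, not as variance. Use clipping only to keep both paths inside the radius-$R$ set where $L_X$ and $L_Z$ apply.
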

\begin{proof}
By the Law of Total Variance conditioned on the clean structure $\ens{X}_0$, the variance of the estimator decomposes as $\operatorname{Var}(\hat{g}_t(\ens{Z})) = \mathbb{E}_{\ens{X}_0}[\operatorname{Var}_{\ens{X}_t \mid \ens{X}_0}(\hat{g}_t(\ens{Z}))] + \operatorname{Var}_{\ens{X}_0}(\mathbb{E}_{\ens{X}_t \mid \ens{X}_0}[\hat{g}_t(\ens{Z})])$. For the first term, by the $L_X$-Lipschitz condition (Assumption~\ref{ass:lipschitz_proxy}), the variance with respect to the Gaussian transition kernel $p(\ens{X}_t \mid \ens{X}_0)$ is bounded by $L_X^2 \sigma_t^2$, where $\sigma_t^2$ is the marginal noise variance at step $t$. For the second term, we assume a bounding constant $V_{\mathrm{struct}}$ to capture the intrinsic structural heterogeneity. This is well-posed because the unperturbed coordinates evaluated by the neural network natively reside on a compact manifold of valid protein structures, bounding the variance of the gradients over the clean domain:
\begin{equation}
  \operatorname{Var}(\hat{g}_t(\ens{Z})) \;\leq\; L_X^2 \sigma_t^2 \;+\; V_{\mathrm{struct}}.
\end{equation}
The outer-loop oracle $\hat{G}$ is a convex combination of the per-step estimators $\hat{G} = \sum_t w_t \hat{g}_t$ with weights $w_t = \alpha_t/\eta_{\mathrm{inner}}$ summing to 1. Defining the variance of a vector estimator as the expected squared deviation from its mean, $\operatorname{Var}(X) = \mathbb{E}\|X - \mathbb{E}X\|^2$, and the covariance as the expected inner product, the Cauchy-Schwarz inequality holds boundedly. The total variance of the convex combination is bounded by the supremum of the individual variances: $\operatorname{Var}(\sum_t w_t \hat{g}_t) = \sum_{t,s} w_t w_s \operatorname{Cov}(\hat{g}_t, \hat{g}_s) \leq \sum_{t,s} w_t w_s |\operatorname{Cov}(\hat{g}_t, \hat{g}_s)| \leq \sum_{t,s} w_t w_s \sqrt{\operatorname{Var}(\hat{g}_t)\operatorname{Var}(\hat{g}_s)} \leq \sup_t \operatorname{Var}(\hat{g}_t) \bigl(\sum_t w_t\bigr)^2 = \sup_t \operatorname{Var}(\hat{g}_t)$. Thus, the total variance satisfies:
\begin{equation}
  \mathbb{E}\!\left[ \bigl\|\hat{G}(\ens{Z}^k) - \mathbb{E}[\hat{G}(\ens{Z}^k)]\bigr\|^2 \right] \;\leq\; L_X^2 \sup_{t \leq t_{\mathrm{stop}}} \sigma_t^2 \;+\; V_{\mathrm{struct}} \;:=\; \sigma^2.
\end{equation}
While the SDE noise scale $\sigma_t^2$ grows large as $t \to T$, dynamically executing the inner-loop inference over a restricted integration horizon (stopping optimization early at $t_{\mathrm{stop}} = 160$) explicitly truncates the supremum strictly to $\sup_{t \leq 160} \sigma_t^2$, ensuring the overall estimator variance $\sigma^2$ remains properly bounded.
\end{proof}

\begin{lemma}[Cauchy-Schwarz Bound on the Dropped Score Bias]
\label{lem:score_bias}
Let the stationary expected proxy displacement evaluated at the initialization be defined as $\bar{g}(\ens{Z}^k) = \mathbb{E}_{t, \ens{X}_t}[\hat{g}_t(\ens{Z}^k, \ens{X}_t)]$. The error induced exclusively by omitting the score term from the exact trajectory surrogate gradient $\nabla\tilde{\mathcal{J}}(\ens{Z}^k)$ is uniformly bounded by the trace of the Fisher Information Matrix:
\begin{equation}
\|\bar{g}(\ens{Z}^k) - \nabla\tilde{\mathcal{J}}(\ens{Z}^k)\| \;\leq\; \sup_{t \leq t_{\mathrm{stop}}} \sqrt{ \mathbb{E}_{\ens{X}_t \mid \ens{Z}} \!\left[ (\log p(\mathbf{y} \mid D_\theta))^2 \right] } \cdot \sqrt{ \operatorname{Tr}\big(\mathcal{I}_t(\ens{Z})\big) } \;:=\; B_{\mathrm{score}}.
\end{equation}
\end{lemma}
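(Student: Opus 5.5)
From the decomposition in Equation~\eqref{eq:true_grad}, the difference $\bar{g}(\ens{Z}^k) - \nabla\tilde{\mathcal{J}}(\ens{Z}^k)$ is exactly minus the dropped REINFORCE term. The prior contribution $\lambda_p \nabla_{\ens{Z}}\log p(\ens{Z}\mid\mathbf{a})$ appears identically in both and cancels, and the pathwise parts coincide by definition of $\bar{g}$. The plan is therefore to bound the norm of
\[
\mathbb{E}_{t}\,\mathbb{E}_{\ens{X}_t\mid\ens{Z}}\bigl[\ell_t(\ens{X}_t)\,\nabla_{\ens{Z}}\log p_t(\ens{X}_t\mid\ens{Z})\bigr],\qquad \ell_t := \log p\bigl(\mathbf{y}\mid D_\theta(\ens{X}_t;\ens{Z},t)\bigr).
\]
The first step is to pull the norm inside the outer average over $t$ using Jensen/the triangle inequality. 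This gives $\|\mathbb{E}_t[v_t]\|\le \mathbb{E}_t\|v_t\| \le \sup_{t\le t_{\mathrm{stop}}}\|v_t\|$. The supremum is restricted to $t\le t_{\mathrm{stop}}$ because, as in Lemma~\ref{lem:variance}, the inner loop only evaluates gradients at those steps, so the $t$-distribution is supported there.

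For fixed $t$, I would bound $\|v_t\|$ via its dual characterization $\|v_t\|=\sup_{\|u\|=1}\langle u,v_t\rangle$. For a unit vector $u$, scalar Cauchy--Schwarz in $L^2(p_t(\cdot\mid\ens{Z}))$ gives
\[
\langle u, v_t\rangle = \mathbb{E}[\ell_t\,\langle u, s_t\rangle] \le \sqrt{\mathbb{E}[\ell_t^2]}\sqrt{\mathbb{E}[\langle u,s_t\rangle^2]},
\]
where $s_t=\nabla_{\ens{Z}}\log p_t(\ens{X}_t\mid\ens{Z})$. Next, $\mathbb{E}[\langle u,s_t\rangle^2]=u^\top\mathcal{I}_t(\ens{Z})u\le\lambda_{\max}(\mathcal{I}_t)\le\operatorname{Tr}(\mathcal{I}_t(\ens{Z}))$, since the Fisher matrix $\mathcal{I}_t=\mathbb{E}[s_ts_t^\top]$ is positive semidefinite. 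An equivalent route is to apply vector Cauchy--Schwarz directly, $\|\mathbb{E}[\ell_t s_t]\|\le\mathbb{E}[|\ell_t|\|s_t\|]\le\sqrt{\mathbb{E}\ell_t^2}\sqrt{\mathbb{E}\|s_t\|^2}$, and use $\mathbb{E}\|s_t\|^2=\operatorname{Tr}\mathcal{I}_t$. Taking the supremum over $t$ of the product then gives $B_{\mathrm{score}}$. Strictly, the supremum of the product is bounded by the product of suprema, but the stated form keeps them together, which is fine.

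The main obstacle is justifying that the quantities are finite and that the REINFORCE identity in Equation~\eqref{eq:true_grad} holds. This requires interchanging $\nabla_{\ens{Z}}$ with the integral, which needs a dominating function. It also requires square-integrability of $\ell_t$ and of the score. I would argue as follows.
\begin{itemize}
\item On the compact ball $\|\ens{Z}-\ens{Z}_0\|\le R$ from Assumption~\ref{ass:lipschitz_proxy}, the smoothed likelihoods (pseudo-Huber for X-ray, squared hinge for NOE) are continuous in bounded denoiser outputs, so $\mathbb{E}\ell_t^2<\infty$.
\item For $t\le t_{\mathrm{stop}}$, the marginal $p_t$ is a Gaussian convolution and hence smooth with finite Fisher information.
\end{itemize}
I would state these as regularity conditions rather than prove them in full. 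If any term is infinite, the bound is vacuous but remains true.
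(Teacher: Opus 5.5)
Your proposal is correct and follows essentially the same route as the paper. Both arguments move the norm inside the expectation, apply Cauchy--Schwarz to separate $|\ell_t|$ from $\|\nabla_{\ens{Z}}\log p_t(\ens{X}_t\mid\ens{Z})\|$, use $\mathbb{E}\|s_t\|^2=\operatorname{Tr}\mathcal{I}_t(\ens{Z})$, and take the supremum over $t\le t_{\mathrm{stop}}$, with the regularity conditions asserted rather than proved. You are slightly more careful on two points: you make the Jensen step over the $t$-average explicit, and your dual-norm variant shows the trace can be replaced by the sharper $\lambda_{\max}(\mathcal{I}_t(\ens{Z}))$.
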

\begin{proof}
Let the per-timestep residual bias vector be exactly the dropped REINFORCE score term: $b_t = \| \mathbb{E}_{\ens{X}_t \mid \ens{Z}} \!\left[ \log p(\mathbf{y} \mid D_\theta) \nabla_{\ens{Z}} \log p_t(\ens{X}_t \mid \ens{Z}) \right] \|_2$. Applying the generalized triangle inequality for expectations (or Jensen's inequality applied to the convex Euclidean norm) maps the evaluation inside the expectation:
\begin{equation}
  b_t \;\leq\; \mathbb{E}_{\ens{X}_t \mid \ens{Z}} \!\left[ \big| \log p(\mathbf{y} \mid D_\theta) \big| \cdot \big\| \nabla_{\ens{Z}} \log p_t(\ens{X}_t \mid \ens{Z}) \big\|_2 \right].
\end{equation}
Applying the classical Cauchy-Schwarz inequality against expectations completely isolates the log-likelihood proxy magnitude from the marginal sampling sensitivities:
\begin{equation}
  b_t \;\leq\; \sqrt{ \mathbb{E}_{\ens{X}_t \mid \ens{Z}} \!\left[ (\log p(\mathbf{y} \mid D_\theta))^2 \right] } \cdot \sqrt{ \mathbb{E}_{\ens{X}_t \mid \ens{Z}} \!\left[ \big\| \nabla_{\ens{Z}} \log p_t(\ens{X}_t \mid \ens{Z}) \big\|_2^2 \right] }.
\end{equation}
Leveraging the continuous analytic formulation of the diffusion conditionals, the expected squared magnitude of the sampling score gradient collapses precisely to the generalized multidimensional identity formulation for the Fisher Information evaluated at trace capacity:
\begin{equation}
  \mathbb{E}_{\ens{X}_t \mid \ens{Z}} \!\left[ \big\| \nabla_{\ens{Z}} \log p_t(\ens{X}_t \mid \ens{Z}) \big\|_2^2 \right] \;=\; \operatorname{Tr}\big(\mathcal{I}_t(\ens{Z})\big).\footnote{This standard Fisher information identity strictly holds under the assumption that the parametric family $p_t(\ens{X}_t \mid \ens{Z})$ satisfies standard regularity conditions (i.e., dominated convergence permitting the interchange of integration and differentiation w.r.t. $\ens{Z}$). Because the diffusion marginals are analytically intractable SDEs, our regularity appeal relies strictly on the fact that $\ens{Z}$ is embedded through a smooth neural network over a compact, gradient-clipped domain (Assumption~\ref{ass:lipschitz_proxy}), which structurally ensures dominated convergence.}
\end{equation}
Substituting this directly yields the resulting scalar bound over $b_t$. We take the supremum strictly over the truncated diffusion integration horizon $t \leq t_{\mathrm{stop}}$. Because the Gaussian prior penalty and gradient clipping jointly confine the optimization trajectory of $\ens{Z}$ to a compact ball, and the continuous diffusion base densities $p_t(\ens{X}_t \mid \ens{Z})$ inherit bounded parametrizations across this restricted domain, the Fisher information guarantees strict regularity ($\operatorname{Tr}(\mathcal{I}_t(\ens{Z})) < \infty$). Furthermore, on this confined domain, the denoiser outputs and smoothly evaluated log-likelihoods (including bounded ipTM neural confidence scores) remain strictly finite, ensuring the second moment of the log-likelihood is analytically bounded. Hence, the truncated supremum yields a strictly bounded, absolute stationary bias $B_{\mathrm{score}}$.
\end{proof}

\begin{lemma}[Total Bias from Nested Drift]
\label{lem:total_bias_drift}
Under Assumption~\ref{ass:lipschitz_proxy}, the total expected displacement bias tracked along the dynamic nested updates, defined as $\mathcal{B}(\ens{Z}^k) = \mathbb{E}[\hat{G}(\ens{Z}^k)] - \nabla\tilde{\mathcal{J}}(\ens{Z}^k)$, is strictly bounded.
\end{lemma}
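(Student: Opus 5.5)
We decompose the bias $\mathcal{B}(\ens{Z}^k)$ into a \emph{drift} term and a \emph{score} term by adding and subtracting the stationary expected proxy displacement $\bar{g}(\ens{Z}^k)$ of Lemma~\ref{lem:score_bias}:
\begin{equation}
  \mathcal{B}(\ens{Z}^k) \;=\; \bigl(\mathbb{E}[\hat{G}(\ens{Z}^k)] - \bar{g}(\ens{Z}^k)\bigr) \;+\; \bigl(\bar{g}(\ens{Z}^k) - \nabla\tilde{\mathcal{J}}(\ens{Z}^k)\bigr).
\end{equation}
By the triangle inequality it suffices to bound each bracket. The second bracket is bounded by $B_{\mathrm{score}}$ directly from Lemma~\ref{lem:score_bias}, so all the remaining work is in the first bracket.

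For the drift term, we write $\hat{G}(\ens{Z}^k) = \sum_t w_t \hat{g}_t(\ens{Z}^k_t, \ens{X}_t)$ with convex weights $w_t = \alpha_t/\eta_{\mathrm{inner}}$, exactly as in the proof of Lemma~\ref{lem:variance}. We then compare each summand with its counterpart evaluated at the frozen initialization $\ens{Z}^k = \ens{Z}^k_T$. The $L_Z$-Lipschitz part of Assumption~\ref{ass:lipschitz_proxy} gives
\begin{equation}
  \|\hat{g}_t(\ens{Z}^k_t, \ens{X}_t) - \hat{g}_t(\ens{Z}^k, \ens{X}_t)\| \le L_Z \|\ens{Z}^k_t - \ens{Z}^k\|.
\end{equation}
The accumulated inner displacement is a sum of at most $T - t$ clipped steps, each of norm at most $\eta_z C_{\mathrm{clip}}$. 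Hence $\|\ens{Z}^k_t - \ens{Z}^k\| \le (T - t)\,\eta_z C_{\mathrm{clip}} \le \eta_{\mathrm{inner}} C_{\mathrm{clip}}$. Taking expectations and using Jensen for the norm, the drift term is therefore bounded by
\begin{equation}
  \sum_t w_t L_Z (T - t)\,\eta_z C_{\mathrm{clip}} \;\le\; L_Z\,\eta_{\mathrm{inner}}\,C_{\mathrm{clip}}.
\end{equation}
Because the clipped trajectory stays within radius $\eta_{\mathrm{inner}} C_{\mathrm{clip}}$, taking $R \ge \eta_{\mathrm{inner}} C_{\mathrm{clip}}$ keeps all iterates in the ball where the Lipschitz assumption applies.

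The main obstacle is the distribution of $\ens{X}_t$. Inside $\hat{G}$, $\ens{X}_t$ is drawn along a trajectory driven by the drifting embeddings $\ens{Z}^k_t$, whereas $\bar{g}(\ens{Z}^k)$ takes the expectation under $p_t(\cdot \mid \ens{Z}^k)$. Freezing $\ens{Z}$ alone therefore does not yet give $\bar{g}$. We would handle this with a synchronous coupling: run the two reverse SDEs with the same noise realization, one driven by the drifting embeddings and one by the frozen embedding. Then $\ens{X}_t$ and the frozen-embedding sample $\ens{X}_t'$ differ only through the score's dependence on $\ens{Z}$. Assuming, in the spirit of Assumption~\ref{ass:lipschitz_proxy}, that the denoiser is Lipschitz in $\ens{Z}$ with some constant $L_D$, a discrete Gr\"onwall estimate over the finite horizon up to $t_{\mathrm{stop}}$ gives
\begin{equation}
  \mathbb{E}\|\ens{X}_t - \ens{X}_t'\| \le C_{\mathrm{traj}}\,\eta_{\mathrm{inner}} C_{\mathrm{clip}},
\end{equation}
where $C_{\mathrm{traj}}$ depends on $L_D$ and on the finite, truncated schedule. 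The $L_X$-Lipschitz condition then adds a term $L_X C_{\mathrm{traj}}\,\eta_{\mathrm{inner}} C_{\mathrm{clip}}$ to the drift bound.

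Combining the three estimates yields
\begin{equation}
  \|\mathcal{B}(\ens{Z}^k)\| \le (L_Z + L_X C_{\mathrm{traj}})\,\eta_{\mathrm{inner}}\,C_{\mathrm{clip}} + B_{\mathrm{score}}.
\end{equation}
This bound is uniform in $k$ and finite, which proves the lemma. It also shows that the drift part of the bias scales linearly in the effective step size $\eta_{\mathrm{inner}}$, which the subsequent biased-SGA convergence argument can exploit.
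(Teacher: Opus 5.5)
Your skeleton matches the paper's proof. The paper also writes $\mathbb{E}[\hat{G}(\ens{Z}^k)] = \bar{g}(\ens{Z}^k) + \mathcal{B}_{\mathrm{drift}}(\ens{Z}^k)$ and bounds the drift by $L_Z\eta_{\mathrm{inner}}C_{\mathrm{clip}}$ using the $L_Z$-Lipschitz condition and clipping. It takes $B_{\mathrm{score}}$ from Lemma~\ref{lem:score_bias} and concludes with the triangle inequality, $B = B_{\mathrm{score}} + L_Z\eta_{\mathrm{inner}}C_{\mathrm{clip}}$.

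The genuine difference is your coupling step. Inside this proof the paper defines $\bar{g}(\ens{Z}^k)$ as $\frac{1}{\eta_{\mathrm{inner}}}\sum_t \alpha_t \mathbb{E}[\hat{g}_t(\ens{Z}^k,\ens{X}_t)]$, using the on-trajectory samples $\ens{X}_t$ generated under the drifting embeddings. It then applies Lemma~\ref{lem:score_bias}. That lemma identifies $\bar{g}-\nabla\tilde{\mathcal{J}}$ with the dropped REINFORCE term, which is only valid when $\ens{X}_t \sim p_t(\cdot\mid\ens{Z}^k)$. So the paper silently treats the two laws as equal. You spot exactly this mismatch and repair it with a synchronous coupling and a discrete Gr\"onwall bound.

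What this buys is a logically complete argument. What it costs:
\begin{itemize}
\item It needs a hypothesis beyond Assumption~\ref{ass:lipschitz_proxy}, so strictly you prove the lemma under stronger assumptions than stated.
\item It gives the larger constant $(L_Z + L_X C_{\mathrm{traj}})\eta_{\mathrm{inner}}C_{\mathrm{clip}} + B_{\mathrm{score}}$, which then enters the $3B^2$ floor of Theorem~\ref{thm:convergence}.
\end{itemize}
The paper keeps the cleaner constant and the stated hypothesis only by skipping this step.

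When you state the extra hypothesis, note that Gr\"onwall needs the one-step sampler map, hence $D_\theta$, to be Lipschitz in $\ens{X}$ as well as in $\ens{Z}$. Lipschitzness in $\ens{Z}$ alone does not let you propagate $\|\ens{X}_t - \ens{X}_t'\|$ from step to step, so $C_{\mathrm{traj}}$ depends on that constant too.

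There is also a minor slip about the radius. The ball in Assumption~\ref{ass:lipschitz_proxy} is centred at the global initialization $\ens{Z}_0$, not at $\ens{Z}^k$. So $R \ge \eta_{\mathrm{inner}}C_{\mathrm{clip}}$ covers only one outer iteration. You need $R \ge K\eta_{\mathrm{inner}}C_{\mathrm{clip}}$, or simply the assumption's own stipulation that all visited iterates lie in the ball.
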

\begin{proof}
The algorithmic proxy update computes the expected displacement along the non-stationary trajectory $\{\ens{Z}_t\}$ indexed by the diffusion steps:
\begin{equation}
  \mathbb{E}[\hat{G}(\ens{Z}^k)] \;=\; \frac{1}{\eta_{\mathrm{inner}}} \sum_t \alpha_t \mathbb{E}[\hat{g}_t(\ens{Z}_t, \ens{X}_t)].
\end{equation}
We decompose this accumulated update into a stationary term around the root state vector $\ens{Z}^k$ and a residual accumulation drift term:
\begin{equation}
  \mathbb{E}[\hat{G}(\ens{Z}^k)] \;=\; \underbrace{\frac{1}{\eta_{\mathrm{inner}}} \sum_t \alpha_t \mathbb{E}[\hat{g}_t(\ens{Z}^k, \ens{X}_t)]}_{:= \bar{g}(\ens{Z}^k)} \;+\; \underbrace{\frac{1}{\eta_{\mathrm{inner}}} \sum_t \alpha_t \mathbb{E}[\hat{g}_t(\ens{Z}_t, \ens{X}_t) - \hat{g}_t(\ens{Z}^k, \ens{X}_t)]}_{:= \mathcal{B}_{\mathrm{drift}}(\ens{Z}^k)}.
\end{equation}
Lemma~\ref{lem:score_bias} establishes the score term bias as $\|\bar{g}(\ens{Z}^k) - \nabla\tilde{\mathcal{J}}(\ens{Z}^k)\| \leq B_{\mathrm{score}}$. For the drift error $\mathcal{B}_{\mathrm{drift}}$, since the optimization uses gradient clipping with constant $C_{\mathrm{clip}}$ which restricts the divergence per inner-step to be well-behaved, the Lipschitz condition (Assumption~\ref{ass:lipschitz_proxy}) yields an upper bound matching the maximal displacement magnitude: $\|\mathcal{B}_{\mathrm{drift}}(\ens{Z}^k)\| \;\leq\; L_Z \eta_{\mathrm{inner}} C_{\mathrm{clip}}$.

Finally, combining these terms using the triangle inequality bounds the total bias of the proxy oracle:
\begin{equation}
  \|\mathcal{B}(\ens{Z}^k)\| \;\leq\; B_{\mathrm{score}} \;+\; L_Z \eta_{\mathrm{inner}} C_{\mathrm{clip}} \;:=\; B.
\end{equation}
\end{proof}

\noindent
\textbf{Remark.} Lemmas~\ref{lem:variance}, \ref{lem:score_bias}, and~\ref{lem:total_bias_drift} map the empirical algorithmic choices of nested optimization, gradient clipping, and score dropping directly to theoretical bounds $B$ and $\sigma^2$. Consequently, we can establish non-asymptotic outer-loop convergence guarantees identically to generic BSGA over the chosen surrogate bounds.

\subsection{Convergence Theorem}
 
\begin{assumption}[Local Lipschitz Smoothness of Surrogate Objective]
\label{ass:smooth}
The global surrogate objective $\tilde{\mathcal{J}}(\ens{Z})$ is locally $L$-smooth in $\ens{Z}$ on the compact sublevel set $\mathcal{S} = \{\ens{Z} : \|\ens{Z} - \ens{Z}_0\| \leq R\}$ visited by the initialized algorithm:
there exists a finite $L>0$ such that for all $\ens{Z}_1,\ens{Z}_2 \in \mathcal{S}$,
\begin{equation}
  \tilde{\mathcal{J}}(\ens{Z}_2)
  \;\geq\;
  \tilde{\mathcal{J}}(\ens{Z}_1)
  \;+\;
  \bigl\langle
    \nabla\tilde{\mathcal{J}}(\ens{Z}_1),\,\ens{Z}_2-\ens{Z}_1
  \bigr\rangle
  \;-\;
  \frac{L}{2}\|\ens{Z}_2-\ens{Z}_1\|^2.
\end{equation}
\end{assumption}

Because the gradient estimator is biased, strict monotone ascent
cannot be guaranteed. Instead, we prove that the expected squared
gradient norm converges to a bounded neighbourhood of zero, which
constitutes a formal near-stationarity guarantee relative to the target surrogate landscape.
 
\begin{theorem}[Non-Asymptotic Convergence of Inference-Time Optimization]
\label{thm:convergence}
Under Assumption~\ref{ass:smooth} and Lemmas~\ref{lem:variance} and~\ref{lem:total_bias_drift}, if the
outer-loop learning rate satisfies $\eta \leq \frac{1}{4L}$, then the
sequence $\{\ens{Z}^k\}_{k=0}^{K-1}$ produced by Algorithm \ref{alg:msa_opt}
satisfies
\begin{equation}
  \min_{0 \leq k < K}
  \mathbb{E}\!\left[\|\nabla\tilde{\mathcal{J}}(\ens{Z}^k)\|^2\right]
  \;\leq\;
  \frac{4\!\left(\tilde{\mathcal{J}}^{*}-\tilde{\mathcal{J}}(\ens{Z}^0)\right)}{\eta K}
  \;+\;
  3B^2
  \;+\;
  2L\eta\sigma^2,
  \label{eq:convergence_bound}
\end{equation}
where $\tilde{\mathcal{J}}^{*}=\sup_{\ens{Z}}\tilde{\mathcal{J}}(\ens{Z})$ is the finite optimum of the surrogate expectation regularized by the Gaussian prior. Importantly, this near-stationarity guarantee mathematically matches exactly what the proxy inner loop structurally optimizes; while the computationally stable surrogate allows scalable exploration, strictly monotonic ascent targeting the true generalized posterior mode cannot be unconditionally guaranteed.
\end{theorem}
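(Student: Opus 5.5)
The argument is the standard biased-SGD descent argument (written for ascent), applied to the outer-loop recursion $\ens{Z}^{k+1} = \ens{Z}^k + \eta \hat{G}(\ens{Z}^k)$ with $\eta = \eta_{\mathrm{inner}}$. I write $g_k = \nabla\tilde{\mathcal{J}}(\ens{Z}^k)$. By Lemma~\ref{lem:total_bias_drift}, $\mathbb{E}_k[\hat{G}(\ens{Z}^k)] = g_k + \mathcal{B}_k$ with $\|\mathcal{B}_k\| \le B$. By Lemma~\ref{lem:variance}, the conditional variance of $\hat{G}$ is at most $\sigma^2$. Here $\mathbb{E}_k$ denotes expectation conditional on $\ens{Z}^k$, over the fresh noise drawn in outer iteration $k$.

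First I apply Assumption~\ref{ass:smooth} with $\ens{Z}_1=\ens{Z}^k$ and $\ens{Z}_2=\ens{Z}^{k+1}$, which gives
\[
\tilde{\mathcal{J}}(\ens{Z}^{k+1}) \ge \tilde{\mathcal{J}}(\ens{Z}^k) + \eta\langle g_k, \hat{G}\rangle - \tfrac{L\eta^2}{2}\|\hat{G}\|^2 .
\]
Taking $\mathbb{E}_k$ and using $\mathbb{E}_k\|\hat{G}\|^2 \le \|g_k+\mathcal{B}_k\|^2 + \sigma^2 \le 2\|g_k\|^2 + 2B^2 + \sigma^2$ handles the quadratic term.

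For the inner product, Young's inequality gives $\langle g_k, g_k+\mathcal{B}_k\rangle \ge \|g_k\|^2 - \tfrac12\|g_k\|^2 - \tfrac12 B^2$. Combining the two estimates,
\[
\mathbb{E}_k\tilde{\mathcal{J}}(\ens{Z}^{k+1}) \ge \tilde{\mathcal{J}}(\ens{Z}^k) + \eta\bigl(\tfrac12 - L\eta\bigr)\|g_k\|^2 - \eta\bigl(\tfrac12 + L\eta\bigr)B^2 - \tfrac{L\eta^2}{2}\sigma^2 .
\]
With $\eta \le 1/(4L)$ the coefficient on $\|g_k\|^2$ is at least $\eta/4$ and the coefficient on $B^2$ is at most $3\eta/4$. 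The result is the per-step inequality
\[
\tfrac{\eta}{4}\|g_k\|^2 \le \mathbb{E}_k\tilde{\mathcal{J}}(\ens{Z}^{k+1}) - \tilde{\mathcal{J}}(\ens{Z}^k) + \tfrac{3\eta}{4}B^2 + \tfrac{L\eta^2}{2}\sigma^2 .
\]

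Next I take total expectations, sum over $k=0,\dots,K-1$, and telescope. The telescoped sum is bounded using $\mathbb{E}\tilde{\mathcal{J}}(\ens{Z}^K) \le \tilde{\mathcal{J}}^*$, which is finite because the Gaussian prior term dominates. Dividing by $\eta K/4$ and bounding the minimum by the average gives
\[
\min_{0\le k<K}\mathbb{E}\|g_k\|^2 \le \frac{4(\tilde{\mathcal{J}}^*-\tilde{\mathcal{J}}(\ens{Z}^0))}{\eta K} + 3B^2 + 2L\eta\sigma^2 ,
\]
which is exactly \eqref{eq:convergence_bound}.

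The main obstacle is that the smoothness assumption and the bias and variance lemmas hold only on the ball $\mathcal{S}$. The argument therefore needs every iterate $\ens{Z}^k$, together with every segment between consecutive iterates, to stay in $\mathcal{S}$. I would handle this the way the paper does: gradient clipping bounds each step by $\eta C_{\mathrm{clip}}$, and the Gaussian prior confines the trajectory, so I assume $R$ is chosen large enough that the trajectory remains in $\mathcal{S}$. I state this explicitly as a standing condition rather than prove it.

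A second technical point is conditioning. The bias and variance bounds must hold conditionally on $\ens{Z}^k$. This holds because each outer iteration draws fresh diffusion noise, so the oracle $\hat{G}(\ens{Z}^k)$ depends on the past only through $\ens{Z}^k$.
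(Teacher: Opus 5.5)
Your proposal is correct and follows the paper's proof essentially step for step. Both arguments use the same chain: the smoothness-based one-step ascent inequality, Young's inequality on the bias cross-term, the bias--variance bound $2\|g_k\|^2+2B^2+\sigma^2$ on the second moment, the coefficient bounds from $\eta\le 1/(4L)$, and telescoping with $\mathbb{E}\tilde{\mathcal{J}}(\ens{Z}^K)\le\tilde{\mathcal{J}}^*$; your standing condition that the iterates stay in $\mathcal{S}$ and your remark on conditioning on $\ens{Z}^k$ just make explicit what the paper leaves implicit in Assumption~\ref{ass:smooth} and the accompanying remark.
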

 
\begin{proof}
\textbf{Step 1: One-step ascent inequality.}
By $L$-smoothness (Assumption~\ref{ass:smooth}), the update
$\ens{Z}^{k+1}=\ens{Z}^k+\eta\hat{G}(\ens{Z}^k)$ satisfies
\begin{equation}
  \tilde{\mathcal{J}}(\ens{Z}^{k+1})
  \;\geq\;
  \tilde{\mathcal{J}}(\ens{Z}^k)
  + \eta\bigl\langle\nabla\tilde{\mathcal{J}}(\ens{Z}^k),\hat{G}(\ens{Z}^k)\bigr\rangle
  - \frac{L\eta^2}{2}\|\hat{G}(\ens{Z}^k)\|^2.
  \label{eq:descent}
\end{equation}

\textbf{Step 2: Taking conditional expectations.}
Let $\mathbb{E}_k[\cdot]$ denote expectation conditioned on
$\ens{Z}^k$. Using
$\mathbb{E}_k[\hat{G}(\ens{Z}^k)]
 =\nabla\tilde{\mathcal{J}}(\ens{Z}^k)+\mathcal{B}(\ens{Z}^k)$,
we first expand the inner product:
\begin{align}
  \mathbb{E}_k\bigl[\bigl\langle\nabla\tilde{\mathcal{J}}(\ens{Z}^k),
    \hat{G}(\ens{Z}^k)\bigr\rangle\bigr]
  &= \bigl\langle\nabla\tilde{\mathcal{J}}(\ens{Z}^k),\,
       \nabla\tilde{\mathcal{J}}(\ens{Z}^k)+\mathcal{B}(\ens{Z}^k)
     \bigr\rangle \notag \\
  &= \|\nabla\tilde{\mathcal{J}}(\ens{Z}^k)\|^2
     + \bigl\langle\nabla\tilde{\mathcal{J}}(\ens{Z}^k),\,
         \mathcal{B}(\ens{Z}^k)\bigr\rangle.
\end{align}
Applying Young's inequality
$\langle x,y\rangle\geq-\tfrac{1}{2}\|x\|^2-\tfrac{1}{2}\|y\|^2$
to the cross-term yields:
\begin{equation}
  \mathbb{E}_k\bigl[\bigl\langle\nabla\tilde{\mathcal{J}}(\ens{Z}^k),
    \hat{G}(\ens{Z}^k)\bigr\rangle\bigr]
  \;\geq\;
  \tfrac{1}{2}\|\nabla\tilde{\mathcal{J}}(\ens{Z}^k)\|^2
  - \tfrac{1}{2}\|\mathcal{B}(\ens{Z}^k)\|^2.
  \label{eq:inner_prod_bound}
\end{equation}

\textbf{Step 3: Bounding the second moment.}
By the bias-variance decomposition and Lemma~\ref{lem:total_bias_drift},
followed by $\|a+b\|^2 \leq 2\|a\|^2+2\|b\|^2$:
\begin{align}
  \mathbb{E}_k\!\left[\|\hat{G}(\ens{Z}^k)\|^2\right]
  &= \|\nabla\tilde{\mathcal{J}}(\ens{Z}^k)+\mathcal{B}(\ens{Z}^k)\|^2
   + \operatorname{Var}(\hat{G}(\ens{Z}^k)) \notag \\
  &\leq
  2\|\nabla\tilde{\mathcal{J}}(\ens{Z}^k)\|^2
  +2\|\mathcal{B}(\ens{Z}^k)\|^2
  +\sigma^2.
  \label{eq:second_moment}
\end{align}

\textbf{Step 4: Substituting into the ascent inequality.}
Taking the conditional expectation of \eqref{eq:descent} and
substituting \eqref{eq:inner_prod_bound} and
\eqref{eq:second_moment}:
\begin{align}
  \mathbb{E}_k[\tilde{\mathcal{J}}(\ens{Z}^{k+1})]
  &\;\geq\;
  \tilde{\mathcal{J}}(\ens{Z}^k)
  + \eta\!\left(
      \tfrac{1}{2}\|\nabla\tilde{\mathcal{J}}(\ens{Z}^k)\|^2
      - \tfrac{1}{2}\|\mathcal{B}(\ens{Z}^k)\|^2
    \right) \notag \\
  &\quad
  - \frac{L\eta^2}{2}\!\left(
      2\|\nabla\tilde{\mathcal{J}}(\ens{Z}^k)\|^2
      +2\|\mathcal{B}(\ens{Z}^k)\|^2
      +\sigma^2
    \right).
\end{align}
Collecting the $\|\nabla\tilde{\mathcal{J}}(\ens{Z}^k)\|^2$ terms
(coefficient: $\tfrac{\eta}{2} - L\eta^2$) and the
$\|\mathcal{B}(\ens{Z}^k)\|^2$ terms
(coefficient: $-\tfrac{\eta}{2} - L\eta^2$):
\begin{equation}
  \mathbb{E}_k[\tilde{\mathcal{J}}(\ens{Z}^{k+1})]
  \;\geq\;
  \tilde{\mathcal{J}}(\ens{Z}^k)
  + \eta\!\left(\tfrac{1}{2}-L\eta\right)\|\nabla\tilde{\mathcal{J}}(\ens{Z}^k)\|^2
  - \eta\!\left(\tfrac{1}{2}+L\eta\right)\|\mathcal{B}(\ens{Z}^k)\|^2
  - \frac{L\eta^2}{2}\sigma^2.
  \label{eq:collected}
\end{equation}

\textbf{Step 5: Applying the learning-rate condition.}
For $\eta \leq \frac{1}{4L}$ we have $L\eta \leq \frac{1}{4}$.
Therefore $\tfrac{1}{2}-L\eta \geq \tfrac{1}{4}$ and
$\tfrac{1}{2}+L\eta \leq \tfrac{3}{4}$.
Applying $\|\mathcal{B}(\ens{Z}^k)\|\leq B$
(bounding from Lemmas~\ref{lem:score_bias} and~\ref{lem:total_bias_drift}):
\begin{equation}
  \mathbb{E}_k[\tilde{\mathcal{J}}(\ens{Z}^{k+1})]
  \;\geq\;
  \tilde{\mathcal{J}}(\ens{Z}^k)
  + \frac{\eta}{4}\|\nabla\tilde{\mathcal{J}}(\ens{Z}^k)\|^2
  - \frac{3\eta}{4}B^2
  - \frac{L\eta^2}{2}\sigma^2.
\end{equation}

\textbf{Step 6: Summing and telescoping.}
Rearranging and taking total expectation:
\begin{equation}
  \frac{\eta}{4}\mathbb{E}\!\left[\|\nabla\tilde{\mathcal{J}}(\ens{Z}^k)\|^2\right]
  \;\leq\;
  \mathbb{E}[\tilde{\mathcal{J}}(\ens{Z}^{k+1})]
  - \mathbb{E}[\tilde{\mathcal{J}}(\ens{Z}^k)]
  + \frac{3\eta}{4}B^2
  + \frac{L\eta^2}{2}\sigma^2.
\end{equation}
Summing over $k=0,\ldots,K-1$ and telescoping:
\begin{equation}
  \frac{\eta}{4}\sum_{k=0}^{K-1}
  \mathbb{E}\!\left[\|\nabla\tilde{\mathcal{J}}(\ens{Z}^k)\|^2\right]
  \;\leq\;
  \underbrace{\mathbb{E}[\tilde{\mathcal{J}}(\ens{Z}^K)]
             -\tilde{\mathcal{J}}(\ens{Z}^0)}_{\leq\,\tilde{\mathcal{J}}^*-\tilde{\mathcal{J}}(\ens{Z}^0)}
  + \frac{3\eta K}{4}B^2
  + \frac{L\eta^2 K}{2}\sigma^2.
\end{equation}
Dividing both sides by $\frac{\eta K}{4}$ and using
$\min_k\leq\frac{1}{K}\sum_k$ yields the claimed bound
\eqref{eq:convergence_bound}.
\end{proof}

\begin{corollary}[Sufficient Condition for Expected Monotonic Ascent]
\label{cor:monotonicity}
Under the conditions of Theorem~\ref{thm:convergence}, the expected surrogate objective is monotonically non-decreasing at outer iteration $k$, i.e., $\mathbb{E}_k[\tilde{\mathcal{J}}(\ens{Z}^{k+1})] \geq \tilde{\mathcal{J}}(\ens{Z}^k)$, whenever
\begin{equation}
  \|\nabla\tilde{\mathcal{J}}(\ens{Z}^k)\|^2
  \;>\;
  \frac{1 + 2L\eta}{1 - 2L\eta}\, B^2
  \;+\;
  \frac{L\eta}{1 - 2L\eta}\, \sigma^2.
  \label{eq:monotonicity}
\end{equation}
\end{corollary}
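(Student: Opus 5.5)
The plan is to reuse the per-step inequality \eqref{eq:collected} from Step~4 of the proof of Theorem~\ref{thm:convergence}. It was derived before any telescoping, so it holds at a single outer iteration $k$. It states
\begin{equation*}
  \mathbb{E}_k[\tilde{\mathcal{J}}(\ens{Z}^{k+1})] - \tilde{\mathcal{J}}(\ens{Z}^k)
  \;\geq\;
  \eta\!\left(\tfrac{1}{2}-L\eta\right)\|\nabla\tilde{\mathcal{J}}(\ens{Z}^k)\|^2
  - \eta\!\left(\tfrac{1}{2}+L\eta\right)\|\mathcal{B}(\ens{Z}^k)\|^2
  - \tfrac{L\eta^2}{2}\sigma^2 .
\end{equation*}
Expected monotonic ascent follows as soon as the right-hand side is nonnegative. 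No summation is needed, so the whole argument is a one-step sufficient-condition computation.

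The steps, in order, are as follows.
\begin{enumerate}
\item Replace $\|\mathcal{B}(\ens{Z}^k)\|^2$ by its upper bound $B^2$ from Lemma~\ref{lem:total_bias_drift}. This is legitimate because the bias term enters with a negative coefficient, so substituting the larger value $B^2$ only weakens the lower bound. After this step the right-hand side depends only on $\|\nabla\tilde{\mathcal{J}}(\ens{Z}^k)\|^2$, $B^2$ and $\sigma^2$.
\item Divide the resulting expression by $\eta/2>0$. This rewrites the sufficient condition as
\begin{equation*}
(1-2L\eta)\,\|\nabla\tilde{\mathcal{J}}(\ens{Z}^k)\|^2 \;\geq\; (1+2L\eta)\,B^2 + L\eta\,\sigma^2 .
\end{equation*}
\item Use the step-size hypothesis $\eta\le 1/(4L)$ inherited from Theorem~\ref{thm:convergence}. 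It gives $1-2L\eta\ge \tfrac12>0$, so dividing by $1-2L\eta$ preserves the inequality direction and yields exactly the threshold in \eqref{eq:monotonicity}.
\item Under the strict inequality \eqref{eq:monotonicity}, the right-hand side above is strictly positive. Hence $\mathbb{E}_k[\tilde{\mathcal{J}}(\ens{Z}^{k+1})]\ge\tilde{\mathcal{J}}(\ens{Z}^k)$, in fact strictly.
\end{enumerate}

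The main subtlety is that \eqref{eq:collected} must be usable at the fixed iterate $\ens{Z}^k$. This needs the conditional-expectation version of the bias and variance bounds, namely Lemma~\ref{lem:variance} and Lemma~\ref{lem:total_bias_drift} applied conditionally on $\ens{Z}^k$. It also needs both $\ens{Z}^k$ and $\ens{Z}^{k+1}$ to remain in the set $\mathcal{S}$ on which $L$-smoothness holds. The proof of Theorem~\ref{thm:convergence} already treats both points as given through the compactness and clipping assumptions, so I will simply invoke them in the same way.

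Beyond that, the only care needed is the sign bookkeeping when dividing by $1-2L\eta$, which the step-size condition secures.
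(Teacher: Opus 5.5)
Your proposal is correct and follows the paper's proof: both substitute $\|\mathcal{B}(\ens{Z}^k)\|\le B$ into the per-step inequality \eqref{eq:collected}, reduce to $(\tfrac12-L\eta)\|\nabla\tilde{\mathcal{J}}(\ens{Z}^k)\|^2 > (\tfrac12+L\eta)B^2+\tfrac{L\eta}{2}\sigma^2$, and divide by the positive factor $\tfrac12-L\eta$, which is positive because $\eta\le 1/(4L)$. The only differences are that you rescale by $\eta/2$ first, and that you state explicitly the need for the conditional bias and variance bounds and for $\ens{Z}^k,\ens{Z}^{k+1}\in\mathcal{S}$, which the paper leaves implicit.
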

\begin{proof}
Applying the bias bound $\|\mathcal{B}(\ens{Z}^k)\| \leq B$ to Equation~\eqref{eq:collected} and rearranging directly yields that $\mathbb{E}_k[\tilde{\mathcal{J}}(\ens{Z}^{k+1})] \geq \tilde{\mathcal{J}}(\ens{Z}^k)$ holds whenever $(\tfrac{1}{2} - L\eta)\|\nabla\tilde{\mathcal{J}}(\ens{Z}^k)\|^2 > (\tfrac{1}{2} + L\eta)B^2 + \tfrac{L\eta}{2}\sigma^2$. Dividing both sides by $(\tfrac{1}{2} - L\eta) > 0$ (guaranteed by $\eta \leq \frac{1}{4L}$) gives the stated condition.
\end{proof}

Corollary~\ref{cor:monotonicity} provides an explicit, falsifiable criterion: expected monotonic ascent is guaranteed in any region of embedding space where the surrogate gradient norm exceeds the bias-variance error floor on the right-hand side of~\eqref{eq:monotonicity}. Conversely, as the iterates approach a near-stationary point, the gradient norm shrinks below this threshold, and the optimization enters the residual neighbourhood characterized by Theorem~\ref{thm:convergence}.

\begin{remark}[On Local Smoothness and the Adam Optimizer \cite{adam2014method}]
We strictly employ local $L$-smoothness because asserting global $L$-smoothness is a standard but technically false assumption for neural-network-parameterized objectives. The surrogate objective $\tilde{\mathcal{J}}(\ens{Z})$ passes $\ens{Z}$ through a deep architecture (AlphaFold3) with non-smooth activations and evaluates non-differentiable likelihoods (e.g., $L_1$ norms, hinge losses). Consequently, the global Lipschitz constant of the gradient is trivially unbounded across all of $\mathbb{R}^d$. 

However, this is rigorously resolved by our localized operational regime. The prior regularizer $\lambda_p \log p(\ens{Z} \mid \mathbf{a}) \propto -\lambda_p \|\ens{Z} - \ens{Z}_0\|^2$, combined with mandatory gradient clipping, structurally confines the optimization trajectory $\{\ens{Z}^k\}_{k=1}^K$ to a compact ball of radius $R$ around the initialization $\ens{Z}_0$. Over any compact domain, a continuously differentiable function exhibits finite curvature, admitting the local smoothness constant $L < \infty$ for which the Descent Lemma holds. Crucially, this strongly concave quadratic prior actively ensures that $\tilde{\mathcal{J}}(\ens{Z}) \to -\infty$ as $\|\ens{Z}\| \to \infty$, dictating that the objective is bounded from above and achieves a strictly finite supremum $\tilde{\mathcal{J}}^*$ entirely within this compact envelope. Additional analytic smoothing of measure-zero non-differentiabilities (e.g., the $L_1$ loss) is naturally provided by the inherent continuous Gaussian noise of the diffusion SDE evaluations. The empirical smoothness and monotonicity of the convergence trajectories in Figure~8 strongly support this localized curvature assumption. 
\end{remark}

\section{Supplemental material}
\subsection{Baselines}
For the X-ray and NMR experiments, we benchmark ensembles generated using inference-time optimization against several baselines: the guided AlphaFold3 framework \cite{maddipatla2025experiment}, experimentally determined PDB structures \cite{burley2017protein}, and unguided sequence-to-structure generative models, including AlphaFold3 \cite{abramson2024accurate}, AlphaFlow \cite{jing2024alphafold}, ESMFlow \cite{jing2024alphafold}, BioEMU \cite{lewis2025bioemu}, and AFCluster \cite{wayment2024predicting}. For ipTM experiments, we use the corresponding PDB complex (when available) as a reference and compare our multimeric ensembles to AlphaFold3, as AlphaFlow does not support multimer modeling.
\subsection{Inference-time Optimization}\label{subsec:optimization_appendix}
\subsubsection{Runtime Analysis}\label{subsubsec:runtime}
In Table \ref{tab:runtime_analysis}, we show that IT-Opt involves a higher total computational budget due to its iterative structure. However, generating ensembles that strictly adhere to experimental data cannot be achieved through unguided generation alone. Since the underlying models were trained to predict static structures, IT-Opt's iterative approach is essential to achieve this fidelity. To address fairness, we performed a compute-normalized comparison between IT-Opt and standard coordinate-based guidance in Table \ref{tab:compute_normalized}. At an equivalent budget ($\approx 200$ denoiser calls), performance is comparable. Beyond this point, however, standard guidance exhibits diminishing returns because each seed restarts from noise in a memoryless process. In contrast, IT-Opt continues to improve by refining and reusing optimized latent embeddings across iterations (Figure \ref{fig:convergence}). This demonstrates that the gains are not merely due to increased sampling, but stem from a more effective optimization strategy that preserves and accumulates information across iterations -- aligning with standard iterative practices in structural biology.

\subsubsection{Model and hardware details}
For all experiments, we used Protenix (v0.2.0) \cite{bytedance2025protenix}, an open-source PyTorch reimplementation of AlphaFold3, except for the ipTM-based guidance experiments described in Section~\ref{subsec:iptm_epx}, where we relied on the official JAX implementation \cite{abramson2024accurate}. All computations were carried out on NVIDIA H100 and L40 GPUs running Debian GNU/Linux~12.

Multiple sequence alignments (MSAs) were obtained using a wrapper around the ColabFold MMseqs2 API \cite{mirdita2022colabfold}. This wrapper submits query sequences to a remote MMseqs2 server via HTTP POST requests, polls for job completion, and downloads and extracts the resulting alignments. The MSAs are provided in A3M format, which is directly compatible with AF3.
To increase diversity in the IT-optimization initialization, we subsample the MSAs using AF-Cluster \cite{wayment2024predicting}. When the number of clusters produced by AF-Cluster is smaller than the ensemble size (as in the NMR experiments), we duplicate the corresponding embeddings across the batch.
\subsubsection{Additional loss function}
During guidance and IT-optimization, we incorporate additional loss terms into the log-likelihood to ensure that the generated structures remain physically valid.

\textbf{Embedding Prior.}
As mentioned in Equations \ref{eq:optim_problem} and \ref{eq:emb_update}, to prevent the optimization from drifting toward degenerate embeddings that lie outside the manifold induced by the input sequence and its evolutionary context, we regularize the ensemble of embeddings $\ens{Z}= \{\mathbf{Z}^k\}_{k=1}^n$ to stay close to their initialization $\ens{Z}_0 = \{\mathbf{Z}_0^k\}_{k=1}^n$. Specifically, we introduce the following regularization term:
\begin{equation*}
    \log p(\ens{Z}|\mathbf{a}; \mathbf{w}) =  - \lambda_{\mathrm{p}} \sum_{k=1}^n w^{k}\lVert \mathbf{Z}^k-\mathbf{Z}_0^k\rVert_2^2
\end{equation*}
where $\mathbf{w} = \{w^{1}, \dots, w^{n}\}$ denotes a set of non-negative weights that can optionally be instantiated as Boltzmann weights.

\textbf{Validity Likelihood.}
To discourage ensembles with unrealistic geometries, such as elongated covalent bonds or steric clashes, we introduce a validity log-likelihood regularizer, analogous to the violation loss used in AF2 \cite{jumper2021highly}. Let \(\ens{X} = \{\mathbf{X}^k\}_{k=1}^n\) denote an ensemble of n protein conformations, where each structure $\mathbf{X}^{k} \in \mathbb{R}^{m \times 3}$ specifies the Cartesian coordinates of $m$ atoms. We define a binary bond matrix $\mathbf{B} \in \{0, 1\}^{m \times m}$ such that $B_{ij} = 1$ if atoms $i$ and $j$ are covalently bonded, and $0$ otherwise. The bond length loss for $\mathbf{X}^{k}$ over bonded atom pairs ($B_{ij} = 1$) is given as,
\begin{equation*}
    \mathcal{L}_{\mathrm{bond}} (\mathbf{X}^{k}; \mathbf{a}) = \sum_{i=1}^{m} \sum_{j=i+1}^{m} 
    B_{ij} \cdot 
    ( \max(0, | d^{k}_{ij} - d^{\mathrm{ideal}}_{ij}| - \delta_{\mathrm{bond}}))^2
\end{equation*}
where $d^{\mathrm{ideal}}_{ij}$ is the ideal bond length approximated as the sum of covalent radii of atoms $i$ and $j$, $ d^{k}_{ij} = \|\mathbf{x}^{k}_{i} - \mathbf{x}^{k}_{j}\|_{2}
$ is the distance between atoms $i$ and $j$ in conformation $k$, and $\delta_{\mathrm{bond}}=0.2\,\text{\AA}$ is a tolerance margin. 
Steric clashes between non-bonded atom pairs (both intra- and inter-residue) are penalized using a soft collision loss, defined as the maximum violation over neighboring atoms for each atom:
\begin{equation*}
    \mathcal{L}_{\mathrm{collision}}  (\mathbf{X}^{k}; \mathbf{a})
    = \sum_{i=1}^{m} \max_{j \neq i, B_{ij} = 0} 
    (\max(0, (d^{\mathrm{ideal}}_{ij} + p^{\mathrm{collision}}) - d^{k}_{ij}))
\end{equation*}
Here, $p^{\mathrm{collision}} = 0.4\,\text{\AA}$ is a padding distance to prevent over-penalization of near-contact atoms. We additionally penalize bond-angle violations for triplets of bonded atoms. The bond-angle loss is defined as
\begin{equation*}
    \mathcal{L}_{\mathrm{angle}} (\mathbf{X}^{k}; \mathbf{a}) = \sum_{j=1}^{m} \sum_{i\neq j} \sum_{k>i} B_{ij} B_{jk} \cdot (\max (0, |\theta_{ijk} - \theta_{ijk}^{\mathrm{ideal}}| - \delta_{\mathrm{angle}} ))
\end{equation*}
where $\theta_{ijk}$ (in degrees) is calculated using the dot product of the bond vectors from central atom $j$ to atoms $i$ and $k$, $\theta_{ijk}^{\mathrm{ideal}}$ (in degrees) is retrieved from the Valence Shell Electron Pair Repulsion (VSEPR) theory \cite{gillespie1992vsepr}, and $\delta_{\mathrm{angle}} = 12^{\circ}$ is a tolerance margin. 
The resulting validity log-likelihood of the ensemble is given by
\begin{equation}\label{eq:validity}
   \log p (\mathbf{B} \mid \ens{X}, \mathbf{a}) = -  \sum_{k=1}^{n} (\lambda_{\mathrm{bond}} \mathcal{L}_{\mathrm{bond}} (\mathbf{X}^{k}; \mathbf{a}) + \lambda_{\mathrm{collision}} \mathcal{L}_{\mathrm{collision}} (\mathbf{X}^{k}; \mathbf{a}) + \lambda_{\mathrm{angle}} \mathcal{L}_{\mathrm{angle}} (\mathbf{X}^{k}; \mathbf{a}))
\end{equation}
where $\lambda_{\mathrm{bond}}, \lambda_{\mathrm{collision}},$ and $\lambda_{\mathrm{angle}}$ are scaling factors that control the contribution of bond, collision, and bond angle terms.  For ensembles guided using crystallographic density maps and ipTM, we use $\lambda_{\mathrm{bond}}=\lambda_{\mathrm{collision}}= \lambda_{\mathrm{angle}} = 0.075$. For NOE-guided ensembles, we use $\lambda_{\mathrm{bond}}=\lambda_{\mathrm{collision}}= \lambda_{\mathrm{angle}} = 0.25$.

\textbf{Substructure Conditioner.} For case studies involving altlocs in crystallographic targets, we optimize a specified subset of the protein while stabilizing the remaining regions by anchoring them to a set of reference atomic coordinates during the diffusion process. This conditioning strategy is analogous to the \texttt{SubstructureConditioner} used in Chroma \cite{ingraham2023illuminating}. This anchor is \emph{not} applied for peptide systems.

Let $Y = \{\mathbf{y}_{i}: i \in A \}$ denote a set of reference atom locations for atom indices $A \subseteq \{1 \dots m\}$ where $m$ is the number of atoms in conformer $\mathbf{X}^{k}$. The log-likelihood is a quadratic penalty on the deviation from reference atom locations,
\begin{equation*}
    \log p(Y \mid \ens{X}, \mathbf{a}) = - \lambda_{\mathrm{sub}} \dfrac{1}{n} \sum_{k=1}^{n} \sum_{i \in A} \| \mathbf{x}_{i}^{k} - \mathbf{y}_{i}\|_{2}^{2}
\end{equation*}
Prior to evaluating this term, all ensemble members are rigidly aligned to the reference coordinates (restricted to the atom indices in A) using the Kabsch algorithm \cite{kabsch1976solution}. For crystallographic refinement, we set $\lambda_{\mathrm{sub}} = 0.1$, whereas this term is disabled for NOE-guided ensembles by setting $\lambda_{\mathrm{sub}} = 0.0$.

\subsubsection{Training Details \& Hyperparameters}
We optimize the conditioning variables $\ens{Z}$ using Adam \cite{adam2014method} with the following default settings (used for X-ray and NMR experiments). Hyperparameter choices are justified in Table \ref{tab:hyperparam_ablation}.

\begin{itemize}
\item Learning rate ($\eta_z$): $0.05$
\item Gradient clipping (max norm): $0.01$
\item Number of outer optimization loops $K$: $20$
\item Ensemble size $n$: $16$
\item Prior weight ($\lambda_{\mathrm{p}}$): $10^{-4}$
\end{itemize}

\paragraph{ipTM-specific settings.} For ipTM-based experiments, we use a smaller ensemble and shorter optimization schedule:

\begin{itemize}
\item Learning rate ($\eta_z$): $0.1$
\item Gradient clipping (max norm): $1.0$
\item Outer optimization iterations $K$: $10$
\item Ensemble size $n$: $5$
\item Prior weight ($\lambda_{\mathrm{p}})$: $10^{-4}$
\item Stopping criterion: maximum relative perturbation budget
$\|\ens{Z}_{\mathrm{opt}}-\ens{Z}\|/\|\ens{Z}\|$ on AF3 embeddings $\ens{Z}$.
\end{itemize}
For all experiments, we optimize $\ens{Z}$ up to inner diffusion iteration $t=160$ at each outer iteration.  We apply early stopping because, in the diffusion schedule of \citet{karras2022elucidating}, the final denoising steps become effectively deterministic refinement and no longer inject stochastic noise. Since our method relies on noise-driven sampling, further optimization provides limited benefit. After optimization, the learned embeddings are used as conditioning variables, and we run coordinate-space guidance as in \citet{maddipatla2025experiment} using similar numerical tricks and hyperparameters.

\subsection{NMR additional details}
\subsubsection{Extracting and Processing NOE Distance Restraints}\label{app:noe_restraints}
We obtain interatomic distance bounds from NMR-STAR \cite{ulrich2019nmr} formatted depositions by reading them with the \texttt{pynmrstar} package \cite{smelter2017fast}. Entries classified as NOE-type distance constraints are retained, while other restraint categories are discarded. When a cross-peak cannot be uniquely assigned to a single atom pair, it is represented as an OR-group comprising multiple candidate pairs, where satisfying any one pair fulfills the restraint. The distance bounds ($\underline{d}_{ij}$ and $\bar{d}_{ij}$) are taken from the deposited file in the PDB \cite{burley2017protein}. Occasionally, where a lower bound is absent, we substitute $0.0\,\text{\AA}$. The resulting restraint set feeds directly to the NOE log-likelihood in Section \ref{subsec:noe}.

\paragraph{Differentiable hydrogen placement.} AlphaFold3 operates exclusively on non-hydrogen atoms, yet NOE observables depend on the interatomic distance between hydrogen atoms. Rather than approximating each restraint at the level of the parent non-hydrogen atom, we reconstruct explicit hydrogen coordinates at every sampling step using a differentiable PyTorch re-implementation of the Hydride placement algorithm \cite{kunzmann2022adding}. For each hydrogen-bearing non-hydrogen atom, the local bonding topology is matched to a library of reference fragments; a rigid-body superposition of the fragment's heavy atoms onto the current structure then determines the corresponding proton positions. Because both fragment lookup and superposition are differentiable with respect to non-hydrogen atom coordinates, reconstructed hydrogen positions vary smoothly during diffusion, enabling gradient-based guidance on physically meaningful interatomic distances.

\subsubsection{Relaxation}
In order to make sure that the generated ensemble has no structural violations, we minimize its energy using an off-the-shelf harmonic force field. In this work, we use OpenMM's \cite{eastman2017openmm} implementation of the AMBER99SB \cite{hornak2006comparison}. The energy is minimized for a maximum of $2000$ iterations with an energy tolerance threshold of $2.39$ kcal$/$mol and stiffness of $100.0$ kcal / mol \AA$^{2}$.

\subsubsection{NMR Evaluation Metrics}\label{app:nmr_metrics}
\paragraph{Percentage of Violated NOE Constraints.} Consider a restraint list organized into $M$ OR-groups $\{G_1, \dots, G_M\}$, each containing one or more candidate atom pairs $\mathbf{r} = (i,\,j,\,\underline{d}_{ij},\,\bar{d}_{ij})$. For every pair of atoms $i, j$ in structure $k$ of $\ens{X}$, we first compute the weighted distance across the ensemble,
\begin{equation}
    d_{ij}(\ens{X};\, \mathbf{w}) = \sum_{k=1}^{|\ens{X}|} w_k \left\|\mathbf{x}_i^k - \mathbf{x}_j^k\right\|_2,
\label{eq:noe_ens_avg}
\end{equation}
where $\mathbf{w} = (w_1, \dots, w_{|\ens{X}|})$ are non-negative weights summing to one. Choosing $w_k = 1/|\ens{X}|$ yields a uniform ensemble average; alternatively, Boltzmann weights derived from an energy model (Equation \ref{eq:boltzmann}) emphasize low-energy conformers. We then quantify the extent to which this weighted distance violates the allowed bounds,
\begin{equation}
    v_{ij} = \max\left\{\underline{d}_{ij} - d_{ij}(\ens{X};\, \mathbf{w}),\; d_{ij}(\ens{X};\, \mathbf{w}) - \bar{d}_{ij},\; 0\right\}.
\label{eq:noe_viol}
\end{equation}

Because an OR-group is satisfied whenever at least one candidate lies within the allowed bounds, the group-level violation is defined as $v_{G_m} = \min_{\mathbf{r} \in G_m} v_{ij}$. The overall violation rate is then
\begin{equation}
    \mathrm{Viol.\,\%} = 100 \times \frac{\left|\left\{m : v_{G_m} > 0\right\}\right|}{M}.
\label{eq:noe_viol_pct}
\end{equation}
\paragraph{Median Violation Magnitude} To characterize the typical magnitude of violations, we additionally report
\begin{equation}
    \mathrm{Viol.\,\text{\AA}} = \operatorname{median}\left(\left\{v_{G_m} : v_{G_m} > 0\right\}\right)\; [\text{\AA}],
\label{eq:noe_median_viol}
\end{equation}
computed only over restraint groups that are violated. While the violation percentage quantifies how many restraints are unsatisfied, this metric captures the severity of those violations. Reporting both measures provides a more complete characterization: an ensemble may violate only a few restraints but by large margins, or violate many restraints by small amounts.

\subsubsection{Boltzmann Reweighting for Energy-Guided Ensembles}\label{app:energy_reweighting}
Ensembles obtained via experimental guidance are constrained to match the measured data, but their relative populations are not thermodynamically informed. We therefore re-weight ensemble members during the diffusion process using an energy model to recover experimentally faithful, thermodynamic ensembles.

\paragraph{Self-normalized importance sampling perspective.} We interpret energy reweighting through a self-normalized importance sampling (SNIS) lens. Let the proposal distribution be the AF3 ensemble prior $q(\ens{X}) = p(\ens{X} \mid \ens{Z}, \mathbf{a})$ and define the target distribution as
\begin{equation}
    \pi(\ens{X}) \;\propto\; p(\ens{X} \mid \ens{Z}, \mathbf{a}) 
    \exp \Bigl(-\beta \textstyle\sum_{i=1}^{n} E_\phi(\mathbf{X}^{i})\Bigr),
\end{equation}
which tilts the AF3 prior toward thermodynamically favorable conformations.
Following the importance sampling framework for Boltzmann densities~\cite{noe2019boltzmann}, we treat the unmodified prior $q(\ens{X}) = p(\ens{X} \mid \ens{Z}, \mathbf{a})$ as the proposal distribution. The importance ratio for any sample $\ens{X} \sim q$ is
\begin{equation}
    \frac{\pi(\ens{X})}{q(\ens{X})} \;\propto\; \exp \Bigl(-\beta \textstyle\sum_{i=1}^{n} E_\phi(\mathbf{X}^{i})\Bigr),
\label{eq:snis_ratio}
\end{equation}
where the proposal density cancels entirely. The self-normalized importance sampling (SNIS) estimator for an observable $f$ under $\pi$ is $\mathbb{E}_{\pi}[f] \approx \sum_{k=1}^{n} w_k \, f(\mathbf{X}^{k})$, with unnormalized weights $\tilde{w}_k \propto \exp(-\beta E_\phi(\mathbf{X}^k))$. After normalization, these recover the Boltzmann weights in Equation~\ref{eq:boltzmann}. 

Given a differentiable potential $E_\phi(\mathbf{X})$, the canonical probability of a conformation is
\begin{equation}
    \pi(\mathbf{X}) \propto \exp \bigl(-\beta\, E_\phi(\mathbf{X})\bigr),
\label{eq:app_boltzmann}
\end{equation}
where $\beta = (k_B T_{\mathrm{therm}})^{-1}$ is the inverse temperature and $k_B = 0.001987\;\mathrm{kcal\,mol^{-1}\,K^{-1}}$ is the Boltzmann constant. At physiological conditions ($T_{\mathrm{therm}} = 300\,\mathrm{K}$), $\beta \approx 1.68\;\mathrm{kcal^{-1}\,mol}$. Setting $\mathbf{w}$ in Equation~\ref{eq:noe_ens_avg} to the normalized Boltzmann factors yields the energy-reweighted distance
\begin{equation}
    d_{ij}^w(\ens{X}) = \sum_{k=1}^{n} w_k \left\|\mathbf{x}_i^k - \mathbf{x}_j^k\right\|_2,
\label{eq:weighted_dist}
\end{equation}
with
\begin{equation}
    w_k = \frac{\exp \bigl(-\beta\, E_\phi(\mathbf{X}^k)\bigr)}{\sum_{j=1}^{n} \exp \bigl(-\beta\, E_\phi(\mathbf{X}^j)\bigr)}.
\label{eq:app_boltzmann_weights}
\end{equation}
Substituting these weighted distances into the NOE log-likelihood (Equation~\ref{eq:noe_loglik}) gives
\begin{equation}
    \log p(D \mid \ens{X};\, \mathbf{a}) = -\sum_{(i,j) \in D} \left(\left[\underline{d}_{ij} - d_{ij}^w(\ens{X})\right]_+^2 + \left[d_{ij}^w(\ens{X}) - \bar{d}_{ij}\right]_+^2\right).
\label{eq:weighted_noe_ll}
\end{equation}
Low-energy conformers thereby exert a stronger pull on the ensemble distance, steering the generated structures toward regions of the energy surface that are both experimentally consistent and energetically favorable. In the limit $\beta \to 0$ (high temperature), all weights become uniform, and the unweighted formulation is recovered; as $\beta \to \infty$ (low temperature), the weights concentrate on the single lowest-energy member.
\paragraph{Annealing the Inverse Temperature.}\label{app:annealing} Applying a sharp energy bias from the beginning would restrict conformational exploration before the diffusion trajectory has resolved meaningful structural detail. We therefore anneal the inverse temperature $\beta$ across diffusion steps according to a three-phase schedule:
\begin{equation}
    \beta(t) = \begin{cases}
        \beta_{\mathrm{low}}, & t \leq t_1, \\
        \beta_{\mathrm{low}}(1 - g) + \beta_{\mathrm{high}} \cdot g, & t_1 \leq t < t_2, \\
        \beta_{\mathrm{high}}, & t \geq t_2,
    \end{cases}
\label{eq:beta_anneal}
\end{equation}
with the blending coefficient interpolated by a cubic Hermite spline,
\begin{equation}
    g = 3\alpha^2 - 2\alpha^3, \quad \alpha = \frac{t - t_1}{t_2 - t_1}.
\label{eq:hermite}
\end{equation}

We set $\beta_{\mathrm{high}} \approx 1.68\;\mathrm{kcal^{-1}\,mol}$ to target the physically relevant thermodynamic regime, and $\beta_{\mathrm{low}} \approx 0.029\;\mathrm{kcal^{-1}\,mol}$ (corresponding to a high effective temperature) to effectively flatten the energy surface during early sampling. We use $t_1 = 100$ and $t_2 = 180$. During the initial phase ($t \leq 100$), the small $\beta_{\mathrm{low}}$ allows the sampler to traverse a broad range of folds without energetic bias. Between steps $100$ and $180$, the schedule smoothly increases $\beta$ toward $\beta_{\mathrm{high}}$, progressively concentrating the Boltzmann weights on energetically plausible conformations at $300\,\mathrm{K}$. Beyond step $180$, the inverse temperature remains fixed for the remainder of the trajectory. For inference-time optimization, we apply the same annealing schedule over the outer optimization loops, where $t_1 = 2$, $t_2 = 10$, $\beta_{\mathrm{low}} = 0.029$, and $\beta_{\mathrm{high}} = 1.68$.

\paragraph{Exponential Moving Average (EMA) of energies.} Force-field energies evaluated on partially denoised coordinates can be noisy, particularly at intermediate diffusion steps where local geometry is not yet fully resolved. Starting at step $160$, we therefore replace the raw energy with an exponentially smoothed estimate:
\begin{equation}
    E(\mathbf{X}_t^k)^{\mathrm{EMA}} = \kappa \cdot E(\mathbf{X}_t^k) + (1 - \kappa) \cdot E(\mathbf{X}_{t-1}^k)^{\mathrm{EMA}},
\label{eq:ema}
\end{equation}
Here, $\kappa = 0.3$. This stabilizes the energy gradient signal.  A similar moving average mechanism is used for inner loop of inference-time optimization.

\paragraph{Energy Biasing.} To improve numerical stability and prevent the softmax from being dominated by a single low-energy outlier, we exploit the translation-invariance of the Boltzmann distribution and bias the distribution by subtracting the $10$th percentile from the energies. Specifically, given ensemble energies $\{E_\phi(\mathbf{X}^1), \dots, E_\phi(\mathbf{X}^n)\}$, we compute $E_{10} = \mathrm{Percentile}_{10}\!\bigl(\{E_\phi(\mathbf{X}^k)\}_{k=1}^{n}\bigr)$ and define
\begin{equation*}
    \tilde{E}_\phi(\mathbf{X}^k) = \max\bigl(E_\phi(\mathbf{X}^k) - E_{10},\; 0\bigr).
\end{equation*}
The Boltzmann weights are computed as $w_k \propto \exp(-\beta\,\tilde{E}_\phi(\mathbf{X}^k))$. Conformers whose energy falls at or below the 10th percentile receive a shifted energy of zero and thus equal maximum weight; conformers above the threshold are penalized proportionally to their excess energy. This clamping prevents any single lowest-energy structure from dominating the reweighting, while the choice of the 10th percentile, rather than the minimum, provides additional robustness against occasional energy outliers.

\paragraph{Choice of Energy Model.} The potential $E(\mathbf{X})$ is predicted by ProteinEBM \cite{roney2025protein}, a differentiable energy-based model over protein conformations. Boltzmann reweighting requires committing to an explicit energy predictor that defines the thermodynamic prior. Any energy model is accurate only within limits imposed by its functional form, training data, and level of coarse-graining. We adopt ProteinEBM because its energy is a conservative potential obtained by parameterizing the diffusion score as $s_\theta = -\nabla_{\mathbf{X}} E_\theta$, which theoretically converges to $-\log p_{\mathrm{data}}(\mathbf{X} \mid \mathbf{a})$ \cite{sohl2015deep}. Moreover, ProteinEBM is finetuned on molecular dynamics trajectories at $300\,\mathrm{K}$, aligning with the target thermodynamic regime for Boltzmann reweighting. See Table \ref{tab:energy_ablation} for ablation over different settings.
\subsection{X-ray additional details \label{subsec:xray_appendix}}
\subsubsection{Forward Model}
The theoretical real-space electron density map $F_{\mathrm{c}}: \mathbb{R}^{3} \rightarrow \mathbb{R}$ corresponding to a protein structure with atomic coordinates $\mathbf{X} =\{\mathbf{x}_{1}, \mathbf{x}_2, \dots \mathbf{x}_{n}\}$ is defined at a Cartesian location $\boldsymbol{\xi} \in \mathbb{R}^{3}$ as
\begin{equation*}
    F_{\mathrm{c}} (\boldsymbol{\xi}; \mathbf{X}) = \sum_{q=1}^{N_s} \sum_{i=1}^{m} \sum_{j=1}^{6} a_{i, j} \left(\dfrac{4\pi}{b_{i, j} + B^{k}_{i}}\right)^{1.5} \cdot \exp\left(-\frac{4\pi^2}{b_{i,j} + B^{k}_{i}} \|(\mathbf{R}_{q} \mathbf{x}_{i}^{k} + \mathbf{t}_q) - \boldsymbol{\xi}\|_2^{2}\right)
\end{equation*}
where $N_s$ is the number of crystallographic symmetry operations \cite{hahn1983international}, $\mathbf{R}_{q} \in SO(3)$ and $\mathbf{t}_{q} \in \mathbb{R}^{3}$ are the rotation and translation associated with the symmetry operation $q$, respectively. The coefficients $a_{i, j}$ and $b_{i, j}$ are tabulated atomic form-factor parameters for each element \cite{hahn1983international}, and $B_{k}^{i}$ denotes the isotropic atomic displacement parameter (B-factor) for atom $i$.

\subsubsection{Relaxation}
In order to make sure that the generated ensemble has no structural violations, we minimize its energy using OpenMM's \cite{eastman2017openmm} implementation of the AMBER99SB force field \cite{hornak2006comparison}. The energy is minimized for a maximum of $2000$ iterations with an energy tolerance threshold of $2.39$ kcal$/$mol and stiffness of $10.0$ kcal / mol \AA$^{2}$.
\subsubsection{Selection Algorithm}
After relaxation, we aim to report a non-redundant subset of samples $\ens{X}_{\mathcal{I}} = \{ \mathbf{X}^{k}: k \in \mathcal{I}\}$ that best explains experimental observation $\mathbf{y}$. 
To avoid overfitting to noise $\mathbf{y}$ and reduce redundancy, we adopt a matching pursuit strategy as done in \citet{maddipatla2025experiment}.
\subsubsection{X-ray metrics}
\paragraph{Cosine similarity.} We report a single local score over the optimized residue range using cosine similarity between the observed ($F_{\mathrm{o}}$) and calculated ($F_{\mathrm{c}}$) electron densities. The metric is computed over voxels $\boldsymbol{\xi}\in\mathbb{R}^3$ within 2.5 Å of atoms in the selected residue range:
\begin{equation*}
    \mathrm{Cosine\ Similarity} = \dfrac{\sum_{\boldsymbol{\xi}} F_{\mathrm{o}}(\boldsymbol{\xi}) F_{\mathrm{c}}(\boldsymbol{\xi})}{\sqrt{ \sum_{\boldsymbol{\xi}} 
    F_{\mathrm{o}}^2(\boldsymbol{\xi}) } \cdot  \sqrt{ \sum_{\boldsymbol{\xi}} F_{\mathrm{c}}^2(\boldsymbol{\xi}) }}
\end{equation*}
Values approaching $1$ indicate strong global agreement between calculated and observed densities, reflecting a good overall fit to the experimental map.
\paragraph{R-factors.} The crystallographic R-factor is a global score that quantifies the agreement between observed and calculated X-ray diffraction data, typically stored in MTZ files, by comparing observed and calculated structure factor amplitudes $|F_{\mathrm{obs}}|$ and $|F_{\mathrm{calc}}|$:
\begin{equation*}
    R_{\mathrm{work}} = \dfrac{\sum_{\mathbf{h}} \big||F_{\mathrm{obs}} (\mathbf{h})| - |F_{\mathrm{calc}} (\mathbf{h})|\big|}{\sum_{\mathbf{h}} |F_{\mathrm{obs}} (\mathbf{h})|}
\end{equation*}
where $\mathbf{h}=(h,k,l)$ indexes reflections in reciprocal space. To mitigate overfitting, $R_{\mathrm{free}}$ is computed analogously over a held-out subset $T$ of reflections:
\begin{equation*}
    R_{\mathrm{free}} = \dfrac{\sum_{\mathbf{h} \in T} \big| |F_{\mathrm{obs}} (\mathbf{h})| - |F_{\mathrm{calc}} (\mathbf{h})| \big|}{\sum_{\mathbf{h} \in T} |F_{\mathrm{obs}} (\mathbf{h})|}
\end{equation*}
Note that $|F_{\mathrm{obs}}|$ and $|F_{\mathrm{calc}}|$ are reciprocal-space amplitudes, whereas $F_\mathrm{o}$ and $F_\mathrm{c}$ denote real-space 3D electron density grids obtained via inverse Fourier transformation. We report both $R_{\mathrm{work}}$ and $R_{\mathrm{free}}$ values as computed by \texttt{REFMAC5} from the CCP4 suite \cite{murshudov2011refmac5,agirre2023ccp4} after refinement.
\subsubsection{Dataset \& input preparation}
In the X-ray crystallography workflow, the inputs are a PDB ID, chain identifier, and an amino acid subsequence; modeling is restricted to single protein chains. The corresponding PDB structure and MTZ file are retrieved from PDB-Redo \cite{joosten2014pdb_redo}, and the target chain is extracted using Gemmi (v0.6.5) \cite{wojdyr2022gemmi}. Only standard amino acid residues explicitly modeled in the PDB are retained, excluding waters, hydrogens, and non-standard residues. Selenium methionine (MSE) and S-hydroxycysteine (CSO) are converted to Methionine (MET) and Cystine (CYS), respectively. Alternate conformations \cite{rosenberg2024seeing}, if present, are split into separate PDBs and renumbered to one-based indexing for AF3 compatibility. Missing atoms are modeled using PDBFixer (v1.9.0) with OpenMM residue templates \cite{eastman2017openmm}, followed by AMBER99SB relaxation \cite{hornak2006comparison} to resolve steric clashes; imputed atoms are assigned an isotropic B-factor of $100.00$. An atom mask for the substructure conditioner is constructed from the provided amino acid subsequence to distinguish residues optimized by the density-based loss from those guided by the substructure conditioner.

\subsection{ipTM additional details}
\subsubsection{Inter-chain Hydrogen Bond Analysis}

Inter-chain hydrogen bonds were identified using geometric criteria adapted from
established definitions~\cite{baker1984hydrogen,mcdonald1994satisfying}. A hydrogen bond between a donor atom $D$ and an acceptor atom $A$ was considered only if the donor and acceptor belonged to different protein chains and satisfied the following conditions.

First, a distance criterion was applied, requiring the Euclidean distance between
the donor and acceptor non-hydrogen atoms to satisfy $d(D, A) < d_{\text{max}} = 3.5~\text{\AA}$,
where $d(D, A)$ denotes the donor--acceptor distance.

When hydrogen atom coordinates were available, an additional angular constraint
was enforced, requiring the donor--hydrogen--acceptor angle to satisfy
$\theta_{D\text{-}H\cdots A} > \theta_{\text{min}} = 120^\circ$. The angle
$\theta_{D\text{-}H\cdots A}$ was computed at the hydrogen atom $H$ as
$\arccos\!\left(
\dfrac{(\mathbf{r}_D - \mathbf{r}_H)\cdot(\mathbf{r}_A - \mathbf{r}_H)}
{\lVert \mathbf{r}_D - \mathbf{r}_H \rVert\,\lVert \mathbf{r}_A - \mathbf{r}_H \rVert}
\right)$,
where $\mathbf{r}_D$, $\mathbf{r}_H$, and $\mathbf{r}_A$ denote the Cartesian
coordinates of the donor, hydrogen, and acceptor atoms, respectively.

Hydrogen bond recovery was quantified as the fraction of the reference inter-chain
hydrogen bonds that were correctly predicted, computed as
$\lvert H_{\text{ref}} \cap H_{\text{pred}} \rvert / \lvert H_{\text{ref}} \rvert$,
where $H_{\text{ref}}$ and $H_{\text{pred}}$ denote the sets of inter-chain
hydrogen bonds identified in the reference and predicted structures,
respectively.

\subsubsection{ipTM Optimization Details}\label{app:iptm_optimization}

For ipTM-based inference-time optimization, we perform optimization of the AF3 embeddings to maximize the predicted interfacial confidence using the combined objective $0.8 \times \mathrm{ipTM} + 0.2 \times \mathrm{pTM}$. Because AlphaFold3's ipTM predictor depends on the Pairformer embeddings and the predicted complex structure, it is sensitive to perturbations in the embedding space. Optimization is carried out by iteratively perturbing the AF3 embeddings and resampling structures using the AlphaFold3 diffusion model.
\paragraph{Evaluation complexes.} The evaluated complexes comprise three distinct classes:
\begin{itemize}
    \item \textbf{NMR-determined complexes:} \texttt{2K8F}, \texttt{2L14}, \texttt{2LY4}, and the crystallographic complex \texttt{1YCS}. These systems exhibit low baseline confidence (both pTM and ipTM) under unguided AlphaFold3 predictions and represent challenging test cases for confidence-driven optimization.
    \item \textbf{Natural heterodimer with domain swapping:} \texttt{8Q70}. This complex provides a structurally asymmetric interface and tests the sensitivity of optimization to non-symmetric binding geometries.
    \item \textbf{BindCraft-designed \cite{pacesa2024bindcraft} complexes:} \texttt{9HAD}, \texttt{9HAE}, and \texttt{9HAF}. These de novo designed protein--protein interfaces typically exhibit higher baseline confidence and provide a complementary regime for assessing ipTM optimization sensitivity.
\end{itemize}

\end{document}